\DeclareMathAlphabet{\mathscrbf}{OMS}{mdugm}{b}{n}
\definecolor{nicerblue}{hsb}{0.6,1,0.75}
\definecolor{nicerred}{HTML}{cc0000}
\definecolor{nicerbrown}{HTML}{a23000}
\newcommand{\mystyle}{\sffamily}
\renewenvironment{abstract}{%
	\small%
	\begin{center}%
		{\bfseries \mystyle \abstractname\vspace{-.5em}}%
	\end{center}%
	\quotation
}
\patchcmd{\@maketitle}{\LARGE}{\huge\bfseries\textsf}{\typeout{OK 1}}{\typeout{Failed 1}}
\patchcmd{\@maketitle}{\large \lineskip}{\Large \lineskip}{\typeout{OK 2}}{\typeout{Failed 2}}
\newcommand{\orcidid}[1]{\href{https://orcid.org/#1}{\includegraphics[scale=0.09]{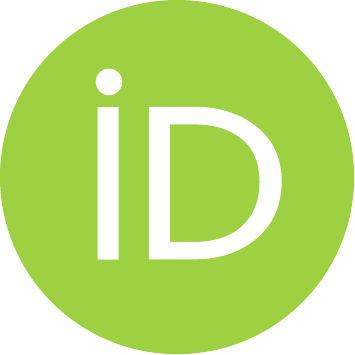}}}
\def\keywordname{\textcolor{darkgray}{{\bfseries \emph{Keywords}}}}%
\def\keywords#1{\medskip\par\addvspace\bigskipamount{\rightskip=0pt plus1cm
		\def\and{\ifhmode\unskip\nobreak\fi\ $\cdot$
		}\noindent\keywordname\enspace\ignorespaces#1\par}\medskip}
\def\prevversionname{\textcolor{darkgray}{{\bfseries \emph{Previous Versions}}}}%
\def\prevversion#1{\medskip\par\addvspace\bigskipamount{\rightskip=0pt plus1cm
		\noindent\prevversionname\enspace\ignorespaces#1\par}\medskip}
\newcommand{\headeright}{Jan-Hendrik Lorenz \& Florian Wörz}
\newcommand{\shorttitle}{Towards an Understanding of Long-Tailed Runtimes}
\newtheorem{theorem}{Theorem}
\newtheorem*{theorem*}{Theorem}
\newtheorem{lemma}[theorem]{Lemma}
\newtheorem{corollary}[theorem]{Corollary}
\newtheorem{proposition}[theorem]{Proposition}
\newtheorem{conjecture}[theorem]{Conjecture}
\theoremstyle{definition}
\newtheorem{definition}[theorem]{Definition}
\newtheorem{example}[theorem]{Example}
\pgfplotsset{compat=newest}
\newcommand{\languageformat}[1]{\mathrm{#1}}
\newcommand{\ZVcardhash}[1]{\ZV{\#} #1}
\newcommand{\UNSATunder}[2][]{\ensuremath{\languageformat{UNSAT}_{#1}(#2)}}
\newcommand{\ZVUNSATunder}[2][]{\ensuremath{\ZV{\languageformat{UNSAT}}_{#1}(#2)}}
\newcommand{\UNSATin}[3][]{\ensuremath{\languageformat{UNSAT}_{#1}(#2,\in \! #3)}}
\newcommand{\ZVUNSATin}[3][]{\ensuremath{\ZV{\languageformat{UNSAT}}_{#1}(#2,\in \! #3)}}
\newcommand{\UNSATnotin}[3][]{\ensuremath{\languageformat{UNSAT}_{#1}(#2,\notin \! #3)}}
\newcommand{\ZVUNSATnotin}[3][]{\ensuremath{\ZV{\languageformat{UNSAT}}_{#1}(#2,\notin \! #3)}}
\DeclareMathOperator{\Varsop}{Vars}
\newcommand{\Vars}[1]{\Varsop \! \big( #1 \big)}
\newcommand{\twopartdef}[4]
{
	\left\{
	\begin{array}{ll}
		#1 & \mbox{if } #2, \\
		#3 & #4
	\end{array}
	\right.
}
\newcommand{\ListCommas}[1]{
	\foreach \x [count=\ni] in {#1}{
		\ifnum\ni=1%
		\x%
		\else%
		,\,\x%
		\fi%
	}
}
\newcommand{\ZV}[1]{\ensuremath{\boldsymbol{#1}}}
\DeclareMathOperator{\Probop}{\mathbb{P}}%
\DeclareMathOperator{\ProbopZV}{\fakebold{\mathbb{P}}\!}%
\DeclareMathOperator{\Expop}{\mathbb{E}}
\newcommand{\prob}[2][]{\Probop_{#1} [ #2 ]}
\newcommand{\Prob}[2][]{\Probop_{#1} \bigl[ #2 \bigr]}
\newcommand{\PROB}[2][]{\Probop_{#1} \left[ #2 \right]}
\newcommand{\probcondZV}[3][]{\ProbopZV{(\ListCommas{#2} \!\mid\!  \ListCommas{#3})}}
\newcommand{\probcond}[3][]{\prob[#1]{\ListCommas{#2} \!\mid\!  \ListCommas{#3}}}
\newcommand{\Probcond}[3][]{\Prob[#1]{\ListCommas{#2} \stretchleftright{\mid}{\vphantom{\ListCommas{#2}\ListCommas{#3}}}{.}\,  \ListCommas{#3}}}
\newcommand{\probcondResize}[3][]{\PROB[#1]{\ListCommas{#2} \stretchleftright{\mid}{\vphantom{\ListCommas{#2}\ListCommas{#3}}}{.}\,  \ListCommas{#3}}}
\newcommand{\expectation}[2][]{\Expop_{#1} [ #2 ]}
\newcommand{\EXPECTATION}[2][]{\Expop_{#1} \left[ #2 \right]}
\newcommand{\expectationzv}[2][]{\fakebold{\mathbb{E}}_{#1} ( #2 )}
\newcommand{\expectationZV}[2][]{\fakebold{\mathbb{E}}_{#1} \bigl( #2 \bigr)}
\newcommand{\expectationcond}[3][]{\expectation[#1]{#2 \!\mid\! #3}}
\newcommand{\expectationcondzv}[3][]{\expectationzv[#1]{#2 \!\mid\! #3}}
\newcommand{\expectationcondZV}[3][]{\expectationZV[#1]{#2 \!\mid\! #3}}
\newcommand{\samplespace}{\Omega}
\newcommand{\sigmaalgebra}{\mathcal{F}}
\newcommand{\probspace}{(\samplespace, \sigmaalgebra, \Probop)}
\newcommand{\singleevent}[1]{\set{#1}}
\newcommand{\distr}{\sim}
\newcommand{\indistribution}[1]{\underset{#1}{\overset{d}{\longrightarrow}}}
\newcommand{\inprobability}[1]{\underset{#1}{\overset{p}{\longrightarrow}}}
\newcommand{\approxd}{\stackrel{d}{\approx}}
\newcommand{\SBop}{\operatorname{SB}}
\newcommand{\LogNop}{\operatorname{LogN}}
\newcommand{\Bern}[1]{{\operatorname{Bern} ( #1 )}}
\newcommand{\Bin}[2]{{\operatorname{Bin} \! \big( #1, \, #2 \big)}}
\newcommand{\Normal}[2]{{\operatorname{N} \left( #1, #2 \right)}}
\newcommand{\LogN}[2]{{\LogNop \left( #1, #2 \right)}}
\newcommand{\SB}[4]{{\SBop \left( #1, #2, #3, #4 \right)}}
\newcommand{\SBwithequations}[4]{{\SBop \left( #1, \, #2, \, #3, \, #4 \right)}}
\newcommand{\bigohsymbol}{\mathrm{O}}
\newcommand{\bigoh}[1]{\bigohsymbol ( #1 )}
\newcommand{\bigohp}[1]{\bigohsymbol_p ( #1 )}
\newcommand{\clstd}{\ensuremath{K}}
\newcommand{\litstd}{\ensuremath{u}}
\newcommand{\applyassi}[2]{#1#2}
\newcommand{\falsassi}[2]{#1: \, \applyassi{#2}{#1} = 0}
\DeclareMathOperator{\E}{ER}
\DeclareMathOperator{\W}{\textnormal{\textbf{\#}}\textnormal{\textbf{Flips}}} %
\DeclareMathOperator{\F}{Flip}
\newcommand{\assi}{\{0,1\}^n}
\newcommand{\ERL}{\ZV{\E_{\Clauses}}}
\newcommand{\econd}[1]{\ensuremath{\ZV{\E_{\Clauses}}{( \ListCommas{#1} )}}}
\newcommand{\ECond}[1]{\ensuremath{\ZV{\E_{\Clauses}}{\big( \ListCommas{#1} \big)}}}
\newcommand{\EACond}[1]{\ECond{#1, \InitAssi{\alpha}}}
\newcommand{\Pbig}[1]{\ensuremath{\ZV{\ProbopZV_{\Clauses}} \big( \ListCommas{#1} \big)}}
\newcommand{\PCond}[2]{\ensuremath{\ZV{\ProbopZV_{\Clauses}} \big( \ListCommas{#1} \mid \ListCommas{#2} \big)}}
\newcommand{\PCondNoL}[2]{\ensuremath{\ZV{\ProbopZV} \left( \ListCommas{#1} \!\mid\! \ListCommas{#2} \right)}}
\newcommand{\PCondResize}[2]{\ensuremath{\ZV{\ProbopZV_{\Clauses}} \left( \ListCommas{#1} \;\stretchleftright{\mid}{\vphantom{\ListCommas{#1}\ListCommas{#2}}}{.}\; \ListCommas{#2} \right)}}
\newcommand{\PACond}[2]{
	\def\tmp{#2}%
	\ifx\tmp\empty%
	\PCond{#1}{\InitAssi{\alpha}}%
	\else%
	\PCond{#1}{#2, \InitAssi{\alpha}}%
	\fi%
}
\newcommand{\PGammaCond}[2]{
	\def\tmp{#2}%
	\ifx\tmp\empty%
	\PCond{#1}{\InitAssi{\alpha}}%
	\else%
	\PCond{#1}{#2, \InitAssi{\alpha}}%
	\fi%
}
\newcommand{\PACondResize}[2]{
	\def\tmp{#2}%
	\ifx\tmp\empty%
	\PCondResize{#1}{\InitAssi{\alpha}}%
	\else%
	\PCondResize{#1}{#2, \InitAssi{\alpha}}%
	\fi%
}
\newcommand{\PGammaCondResize}[2]{
	\def\tmp{#2}%
	\ifx\tmp\empty%
	\PCondResize{#1}{\InitAssi{\gamma}}%
	\else%
	\PCondResize{#1}{#2, \InitAssi{\gamma}}%
	\fi%
}
\newcommand{\Wa}[3][\alpha]{\ensuremath{\mathrm{From}_{#1}\mathrm{To}_{#3}\mathrm{In}_{#2}}}
\newcommand{\Flip}[3]{\Flipa{#1}{#2}{#3}}
\newcommand{\Flipa}[3]{\ensuremath{\F_{\alpha}\!\left(#1, #2, #3\right)}}
\newcommand{\NoSel}[1]{\ensuremath{\operatorname{FirstSel}(#1+1)}}
\newcommand{\NeverSel}{\ensuremath{\mathrm{NeverSel}}}%
\newcommand{\InitAssi}[1]{\ensuremath{A(#1)}}
\newcommand{\FlipVarInAssi}[2]{#2[#1]}
\newcommand{\Sel}[1]{\ensuremath{\mathrm{Sel}(#1)}}
\newcommand{\InfCase}{\mathcal{I}}
\newcommand{\FiniteCase}{\mathcal{F}}
\newcommand{\Expression}{\mathcal{E}}
\newcommand{\DExprOne}{\mathcal{D}_1}
\newcommand{\DExprTwo}{\mathcal{D}_2}
\newcommand{\abstractlink}[3][blue]{\hyperlink{#2}{\color{#1}{#3}}}%
\title{Towards an Understanding of \\ Long-Tailed Runtimes of SLS Algorithms}
\author{Jan-Hendrik Lorenz \orcidid{0000-0002-9554-4347}}
\author{Florian Wörz \orcidid{0000-0003-2463-8167}}
\affil{Universit\"at Ulm, Germany\\\texttt{jan-hendrik.lorenz@alumni.uni-ulm.de}, \texttt{florian.woerz@uni-ulm.de}}
\date{\today}%
\begin{document}

	\newlength\bshft
	\bshft=.18pt\relax
	\def\fakebold#1{\ThisStyle{\ooalign{$\SavedStyle#1$\cr%
				\kern-\bshft$\SavedStyle#1$\cr%
				\kern\bshft$\SavedStyle#1$}}}

	\newcommand{\bftab}{\fontseries{b}\selectfont}

	\newcommand{\ie}{i.\,e.,\ }
	\newcommand{\eg}{e.\,g.,\ }
	\newcommand{\Eg}{E.\,g.,\ }
	\newcommand{\egcite}{e.\,g.,}
	\newcommand{\wrt}{with respect to }%
	\newcommand{\etal}{et al.\ }
	
	\newcommand{\Ie}{I.\,e.,\ }
	\newcommand{\stabbrev}{s.\,t.\ }
	\newcommand{\ia}{i.\,a.\ }

	\newcommand{\pa}{par2}
	\newcommand{\score}{score}
	\newcommand{\bwuni}{bwUniCluster}
	
	\newcommand{\iid}{i.\,i.\,d.\ }

	\renewcommand{\N}{\mathbb{N}}
	\newcommand{\R}{\mathbb{R}}
	\newcommand{\Rpos}{\R^{+}}
	\newcommand{\Rplus}{\Rpos}
	
	\newcommand{\e}{\mathrm{e}}
	\newcommand{\dd}{\mathrm{d}}
	\newcommand{\Indikator}[1]{\mathbbm{1}_{#1}}
	
	\newcommand{\D}{\textnormal{d}}
	
	\newcommand{\diff}[1]{\frac{\D}{\D #1}\,}
	\newcommand{\nat}[1]{[#1]}

	\newcommand{\introduceterm}[1]{{\emph{#1}}}

	\newcommand{\refChap}[1]{Section~\ref{#1}}%
	
	\newcommand{\refsec}[1]{Section~\ref{#1}}
	\newcommand{\refSec}[1]{Section~\ref{#1}}
	\newcommand{\Refsec}[1]{Section~\ref{#1}}
	\newcommand{\refsecP}[1]{Section~\vref{#1}}
	\newcommand{\reftwosecs}[2]{Sections~\ref{#1} and~\ref{#2}}
	\newcommand{\Reftwosecs}[2]{Sections~\ref{#1} and~\ref{#2}}
	
	\newcommand{\refsubsec}[1]{Subsection~\ref{#1}}
	
	\newcommand{\figuretext}{Figure}
	\newcommand{\reffig}[1]{\figuretext~\ref{#1}} %
	\newcommand{\Reffig}[1]{\figuretext~\ref{#1}} %
	\newcommand{\refFig}[1]{\figuretext~\ref{#1}}
	\newcommand{\reffigP}[1]{\figuretext~\vref{#1}}
	
	\newcommand{\reftab}[1]{Table~\ref{#1}}
	\newcommand{\refTab}[1]{Table~\ref{#1}}
	\newcommand{\Reftab}[1]{Table~\ref{#1}}
	
	\newcommand{\refDef}[1]{Definition~\ref{#1}}
	\newcommand{\refLem}[1]{Lemma~\ref{#1}}
	\newcommand{\refCor}[1]{Corollary~\ref{#1}}
	\newcommand{\refTheo}[1]{Theorem~\ref{#1}}
	\newcommand{\refCon}[1]{Conjecture~\ref{#1}}
	\newcommand{\refRecap}[1]{Proposition~\ref{#1}}
	\newcommand{\refRecapThm}[1]{Theorem~\ref{#1}}
	
	\newcommand{\refalg}[1]{Algorithm~\ref{#1}}
	\newcommand{\refAlg}[1]{Algorithm~\ref{#1}}
	
	\newcommand{\refEq}[1]{Equation~\eqref{#1}}
	\newcommand{\refIneq}[1]{Inequality~\eqref{#1}}
	
	\newcommand{\refApp}[1]{Appendix~\ref{#1}}

	\newcommand{\algoformat}[1]{\texttt{#1}}
	
	\newcommand{\OurSolver}{\algoformat{GapSAT}} %
	\newcommand{\sr}{\algoformat{Sparrow2Riss}}
	\newcommand{\glu}{\algoformat{Glucose}}
	
	\newcommand{\dimetheus}{\algoformat{dimetheus}}

	\newcommand{\setsofvarsorlit}[2]%
	{\mathrm{#1}({#2})}
	
	\newcommand{\vars}[1]{\setsofvarsorlit{Vars}{#1}}
	
	\newcommand{\dom}[1]{\setsofvarsorlit{Dom}{#1}}

	\newcommand{\tvastd}{{\ensuremath{\alpha}}}
	
	\newcommand{\restrict}[2]{#1\!\!\upharpoonright_{#2}}

	\newcommand{\set}[1]{\{ #1 \}}
	\newcommand{\Set}[1]{\big\{ #1 \big\}}
	\newcommand{\SET}[1]{\left\{ #1 \right\}}
	
	\newcommand{\setdescr}[3][\mid]{\set{ #2 #1 #3 }}
	\newcommand{\Setdescr}[3][\bigm\vert]{\Set{ #2 #1 #3 }}
	
	\newcommand{\twincommand}[6]%
	{#1#2#3\vphantom{#2#5}\mspace{-2.05mu}#4.#5#6}
	
	\newcommand{\SETDESCR}[3][|]%
	{\twincommand{\left\{}{#2\,}{\left#1}{\right}{\,#3}{\right\}}}

	\newcommand{\setsize}[1]{\lvert#1\rvert}
	\newcommand{\Setsize}[1]{\bigl\lvert#1\bigr\rvert}
	\newcommand{\SETSIZE}[1]{\left\lvert#1\right\rvert}
	
	\newcommand{\cardhash}[1]{\# #1}

	\newcommand{\problemlanguageformat}[1]{\textrm{#1}\xspace}
	\newcommand{\SAT}{\ensuremath{\problemlanguageformat{SAT}}}
	
	\newcommand{\complexityclassformat}[1]{\textrm{\upshape{\textsf{#1}}}\xspace}
	\newcommand{\NP}{\ensuremath{\complexityclassformat{NP}}}

	\newcommand{\integral}[4][x]{\int\limits_{#2}^{#3}#4 \, \D #1}
	\newcommand{\integralLow}[4][x]{\int_{#2}^{#3}#4 \, \D #1}

	\newcommand{\Clauses}{\ensuremath{L}}
	\newcommand{\Alfa}{\algoformat{Alfa}}
	\newcommand{\alfa}{\Alfa}
	\newcommand{\probSAT}{\algoformat{probSAT}}
	\newcommand{\SRWA}{\algoformat{SRWA}}
	\newcommand{\YalSAT}{\algoformat{YalSAT}}
	\newcommand{\YAL}{\YalSAT}
	\newcommand{\GapSAT}{\algoformat{GapSAT}}
	
	\newcommand{\Lim}[1]{\lim_{#1 \rightarrow \infty}}
	\newcommand{\LIM}[1]{\lim\limits_{#1 \rightarrow \infty}}
	
	\newcommand{\quantorsep}{\,\,}

	\newcommand{%
		\tikzsetnextfilename{}%
		\input{.tex}%
	}[1]{%
		\tikzsetnextfilename{#1}%
		\input{#1.tex}%
	}
	
	\newcommand{\todoin}[1]{\tikzexternaldisable\todo[inline]{#1}\tikzexternalenable}
	\newcommand{\talk}[1]{\tikzexternaldisable\todo[inline, color=green!40]{#1}\tikzexternalenable}

	\newcommand{\new}[1]{#1}%
	\newcommand{\jan}[1]{#1}%
	
	\newcommand{\showjan}[1]{\textcolor{cyan}{#1}}

	\definecolor{glossarygray}{HTML}{4f4f54}
	\newcommand{\gloss}[1]{\textcolor{glossarygray}{#1}}

	{
		\theoremstyle{theorem}

		\newtheorem{recapitulation}[theorem]{Proposition}
		\newtheorem{recapitulationthm}[theorem]{Theorem}
		
		\newtheorem{definitionandtheorem}[theorem]{Definition and Theorem}
		
		\newtheorem*{maintheorem}{Main Theorem}
		\newtheorem*{mainresult}{Main Result}
		
		\newtheorem{implication}[theorem]{Implication}
		\newtheorem{assumption}[theorem]{Assumption}
		\newtheorem*{openquestion}{Open Question}
		
		\theoremstyle{definition}
		\newtheorem{notation}[theorem]{Notation}
		\newtheorem{convention}[theorem]{Convention}
		\newtheorem{remark}[theorem]{Remark}
		
	}

	\maketitle
	
	\begin{abstract}
		The satisfiability problem (SAT) is one of the most famous problems in computer science. Traditionally, its NP-completeness has been used to argue that SAT is intractable. However, there have been tremendous practical advances in recent years that allow modern SAT solvers to solve instances with millions of variables and clauses. A particularly successful paradigm in this context is stochastic local search (SLS).

		In most cases, there are different ways of formulating the underlying SAT problem. While it is known that the precise formulation of the problem has a significant impact on the runtime of solvers, finding a helpful formulation is generally non-trivial. 
		The recently introduced \GapSAT{} solver~[\abstractlink[nicerblue]{cite.LW20OnTheEffectOfLearnedClauses}{Lorenz~and~Wörz~2020}] demonstrated a successful way to improve the performance of an SLS solver on average by learning additional information which logically entails from the original problem. Still, there were also cases in which the performance slightly deteriorated. This justifies in-depth investigations into how learning logical implications affects runtimes for SLS algorithms.

		In this work, we propose a method for generating logically equivalent problem formulations, generalizing the ideas of \GapSAT{}. This method allows a rigorous mathematical study of the effect on the runtime of SLS SAT solvers.
		Initially, we conduct empirical investigations.
		If the modification process is treated as random, Johnson SB distributions provide a perfect characterization of the hardness. 
		Since 
		the observed
		Johnson SB distributions approach lognormal distributions, our analysis also suggests that the hardness is long-tailed.

		As a second contribution, we theoretically prove that restarts are useful for long-tailed distributions. This implies that incorporating additional restarts can further refine \emph{all}~algorithms employing above mentioned modification technique.

		Since the empirical studies compellingly suggest that the runtime distributions follow Johnson~SB distributions, we also investigate this property on a theoretical basis. We succeed in proving that the runtimes for the special case of Schöning's random walk algorithm~[\abstractlink[nicerblue]{cite.Schoening02AProbabilisticAlgorithm}{Schöning~2002}] are approximately Johnson~SB distributed.
	\end{abstract}
	
	\keywords{Stochastic Local Search \and Runtime Distribution \and Statistical Analysis \and Johnson~SB Distribution \and Lognormal Distribution \and Long-Tailed Distribution \and Restarts \and SAT Solving \and Learned Clauses \and Logical Entailment}
	
	\prevversion{\textbf{This is the full-length version of the paper in the ACM Journal of Experimental Algorithmics (JEA).} See the last section for a discussion.}

	\newpage
	
	{\small
		\tableofcontents
	}

	\section{Introduction}
	\label{sec:intro}
	
	The \introduceterm{satisfiability problem} (\introduceterm{SAT}) asks to determine if a given propositional formula~$F$ has a satisfying assignment or not.
	Since Cook's $\mathsf{NP}$-completeness proof of the problem~\cite{Cook71ComplexityTheoremProving}, SAT is believed to be computationally intractable in the worst case.
	However, in the field of applied SAT~solving, there have been enormous improvements in the performance of SAT~solvers in the last 20 years. 
	Motivated by these significant improvements, SAT solvers have been applied to an increasing number of areas, including bounded model checking~\cite{DBLP:journals/ac/BiereCCSZ03,DBLP:journals/fmsd/ClarkeBRZ01}, cryptology~\cite{DBLP:conf/sat/EibachPV08}, and even bioinformatics~\cite{DBLP:conf/sat/LynceM06}, to name just a few.

	\introduceterm{Stochastic local search} (\introduceterm{SLS}) is an especially successful algorithmic paradigm that many SAT solvers employ~\cite[Chapter~6]{BHMW09HandbookOfSAT}: There are solvers solely based on the SLS paradigm, \eg the solvers \algoformat{probSAT}~\cite{BS12ChoosingProbabilityDistributionsSLS}, \dimetheus{}~\cite{dimetheus}, and \YalSAT{}~\cite{YalSAT}; SLS has been used in parallel solvers, \eg \algoformat{Plingeling}~\cite{biere2017splatz}; and is nowadays even a standard component of sequential conflict-driven clause learning (CDCL) solvers, for example of \algoformat{ReasonLS}~\cite{ReasonLS}, \algoformat{CaDiCaL}~\cite{BiereFazekasFleuryHeisinger-SAT-Competition-2020-solvers}, the \algoformat{Relaxed$^\ast$} family of solvers \cite{FourRelaxed, DeepCoop}, \algoformat{Kissat}~\cite{BiereFleuryHeisinger-SAT-Competition-2021-solvers}, newer versions of \algoformat{CryptoMiniSat}~\cite{SoosNC09}, and \algoformat{MergeSat}~\cite{MergeSat}.
	In~\cite{cai2022better}, Cai et al.\ tightly integrated SLS with three CDCL solvers, which significantly increased performance. The SLS paradigm is furthermore frequently employed in solving MaxSAT (see \egcite~\cite{maxsat}).

	Broadly speaking, SLS solvers operate on complete assignments for a formula~$F$. These solvers are started with a randomly generated complete initial assignment~$\alpha$. If $\alpha$~satisfies~$F$, a solution is found. Otherwise, the SLS solver explores the neighborhood of the current assignment by repeatedly flipping the value of some variable in the assignment when this variable is chosen according to some underlying heuristic (\eg aiming to minimize the number of unsatisfied clauses by the assignment). That is, these solvers perform a random walk over the set of complete assignments for the underlying formula.\footnote{In contrast to CDCL solvers and resolution, which are complete algorithms that can prove the unsatisfiability of a formula in a finite amount of steps, SLS solvers are incomplete, \ie in general, they cannot output the solution in a finite number of steps.}

	The success of SLS solvers is demonstrated by \probSAT{}~\cite{BS12ChoosingProbabilityDistributionsSLS}, \dimetheus{}~\cite{dimetheus}, and \YalSAT{}~\cite{biere2017splatz}, winning several gold medals in the random track of previous SAT competitions. SLS algorithms are also of interest from a theoretical perspective. For example, Schöning~\cite{Schoening02AProbabilisticAlgorithm} describes an algorithm (called \SRWA{} in the following) with an appealing worst-case guarantee. Furthermore, we firmly believe that a better understanding of SLS will help in the design of future CDCL--SLS hybrids.

	\subsection{Studying Runtime Distributions}
	
	Although SLS algorithms are highly successful in solving SAT instances, as witnessed by their comparatively low mean runtime, they often show a high variation in the runtime required to solve a fixed instance over repeated runs.
	However, measures like the mean or the variance cannot capture the long-tailed behavior of difficult instances.
	Some authors (\eg~\cite{FRV97SummarizingCSPHardness,GS97AlgorithmPortfolioDesign,RF97StatisticalAnalysis}) thus shifted their focus to studying the runtime distributions of search algorithms, which helps to understand these methods better and draw meaningful conclusions for the design of new algorithms.

	A relatively new algorithmic technique is considering modified versions of the input problem. For example, in the mixed integer programming community, it is known that the performance is sensitive to the used modification~\cite{lalla2016improving}.
	A similar approach is also employed in some backtracking SAT solvers (known as CDCL solvers~\cite{MS96Grasp,MMZZM01Engineering}) that learn additional information during their run. However, all \emph{successful} SLS SAT solvers of the last decades work on the original, unmodified instance.

	In~\cite{LW20OnTheEffectOfLearnedClauses}, the authors investigated the effect of modifying the input instance for SLS SAT solvers. More specifically, they changed the input instance by adding new, logically equivalent clauses to the problem. For this, a new solver, called~\GapSAT{}, was introduced. 
	This new solver is based on \probSAT{} and uses the addition of new clauses as a preprocessing step, thus, yielding a terraformed landscape. 
	A comprehensive experimental evaluation found statistical evidence that the performance of \probSAT{} substantially increased with this modification technique.
	However, the authors pointed to the fact that for some instances, the performance slightly deteriorated when \probSAT{} had access to these additional clauses, albeit all of them contained useful information.

	These experiments motivate to study the technique of adding new clauses in more detail. In particular, it seems worthwhile to obtain a better understanding of the phenomenon that adding new clauses improves the mean runtime, but there exist instances where adding clauses can harm the performance of SLS.

	Motivated by that, this work centers around studying the behavior of SLS solvers when these solvers work on formulas that were extended by logical consequences of the initial formulation.

	\subsection{Our Results}
	
	\subsubsection{Hardness Distribution}

	We study the runtime (or, more precisely, hardness) distribution of several SLS algorithms when logical implications are added to an original formula.
	Central to all our investigations is the basic elementary algorithm \Alfa{}, that we introduce in this work. This algorithm is specifically constructed in such a way that it is convenient to construct mathematical arguments after an initial empirical analysis.

	Our empirical evaluations suggest that the hardness distribution is long-tailed (called the \textbf{Weak Conjecture}). In fact, a stronger statement can be deduced: The data indicate that the distribution follows a Johnson SB distribution (called the \textbf{Strong Conjecture}). We also empirically show for our setting that this distribution converges to a lognormal distribution. Since lognormal distributions are long-tailed, it is thus already established that if the Strong Conjecture is true, the algorithm can be improved by restarts~\cite{Lorenz18RuntimeDistributions}. We extend this result to the case in which the Weak Conjecture is true:
	That is, we theoretically prove that restarts are useful for the larger class of algorithms that exhibit a long-tailed distribution.

	\subsubsection{Theoretical Arguments for the Hardness Distribution}
	
	It should be highlighted how good the Johnson SB fit is for the observed data. The distribution describes both typical and exceedingly low or high values exceptionally accurately. Only a marginal absolute and relative error between the fits and the observations can be observed. Moreover, this is true for all considered problem domains. 
	
	It is extraordinary that a simple parameterized distribution accurately describes the runtime behavior of an entire group of algorithms (SLS solvers) on various domains. Since such behavior is unlikely due to chance, we are pursuing theoretical explanations for this phenomenon.
	We succeed in showing that the hardness distribution for the special case of Schöning's random walk algorithm \SRWA{} is indeed approximately Johnson SB distributed, confirming the Strong Conjecture in practice.
	To the best of our knowledge, there are no comparable works deriving the runtime distribution for the full support.

	\subsection{Previous Work on Runtime Distributions}

	Before continuing, we proceed to report on related work regarding the analysis of runtime distributions. We include here related work showing why knowledge of runtime distributions, as we obtain it in this work, is immensely valuable.

	The study~\cite{FRV97SummarizingCSPHardness} presented empirical evidence for the fact that the distribution of the effort (more precisely, the number of consistency checks) required for backtracking algorithms to solve constraint satisfaction problems randomly generated at the 50\,\% satisfiable point can be approximated by the Weibull distribution (in the satisfiable case) and the lognormal distribution (in the unsatisfiable case).
	These results were later extended to a wider region around the 50\,\% satisfiable point~\cite{RF97StatisticalAnalysis}.
	It should be emphasized that this study created all instances using the same generation model. This resulted in the creation of similar yet logically \emph{non}-equivalent formulas.
	We, however, firstly use different models to rule out any influence of the generation model and secondly generate logically equivalent modifications~of~a base instance (see \refAlg{algo:main}).
	This approach lends itself to the analysis of existing~SLS solvers, like \OurSolver{}.
	The significant advantage is that the conducted work is not lost in the case~of a restart: only the logically equivalent instance could be changed while keeping the current assignment.

	The runtime distributions of CDCL solvers was studies in~\cite{ArbeitmitTom}.
	The authors empirically demonstrated that Weibull mixture distributions can accurately describe the multimodal distributions found.
	They concluded that adding new clauses to a base instance has an inherent effect of making runtimes long-tailed.

	In~\cite{GSCK00HeavyTailedPhenomena}, the cost profiles of combinatorial search procedures were studied. 
	It was shown that they are often characterized by Pareto-Lévy distributions and empirically demonstrated how rapid randomized restarts can eliminate tail behavior.
	We, however, theoretically prove the effectiveness of restarts for the larger class of long-tailed distributions.

	The paper~\cite{ATCTC13UsingSequentialRuntimeDistributions} studied the solvers \algoformat{Sparrow} and \algoformat{CCASAT} and found that the lognormal distribution is a good fit for the runtime distributions of randomly generated instances.
	For this, the Kolmogorov--Smirnov statistic $\sup_{t \in \R} | \hat{F}_n(t) - F(t) |$ was used.
	Although the KS-test is very versatile, this comes with the disadvantage that its statistical power is relatively low.
	The KS statistic is also nearly useless in the tails of a distribution: A high relative deviation of the empirical from the theoretical cumulative distribution function in either tail results in a very small absolute deviation.
	It should also be remarked that the paper studies only few formulas in just two domains, ten randomly generated and nine crafted.
	Our work addressed both shortcomings of this paper: The $\chi^2$-test gives equal importance to the goodness-of-fit over the full support, and various instance domain models (both theoretical and applied) are considered in this paper.

	We want to stress the fact that studies on the runtime distribution of algorithms are quite sparse, even though knowledge of the runtime distribution of an algorithm is extremely valuable:
	\begin{itemize}
		\item 
		Intuitively speaking, if the distribution is long-tailed, one knows there is a risk of ending in the tail and experiencing very long runs; simultaneously, the knowledge that the time the algorithm used thus far is in the tail of the distribution can be exploited to restart the procedure (and create a new logically equivalent instance~$F^{(2)}$). We rigorously prove this statement for \emph{all} long-tailed algorithms.
		\item
		Given the distribution of an algorithm's sequential runtime, it was shown in~\cite{ATCTC13UsingSequentialRuntimeDistributions} how to predict and quantify the algorithm's expected speedup due to parallelization. 
		\item
		If the hardness distribution is known, experiments with a small number of instances can lead to parameter estimations of the underlying distribution~\cite{FRV97SummarizingCSPHardness}.	
		\item 
		Knowledge of the distribution can help to compare competing algorithms: \eg one can test if the difference in the means of two algorithm runtimes is significant if the distributions are known~\cite{FRV97SummarizingCSPHardness}.
	\end{itemize}
	
	\subsection{Outline of This Paper}
	
	The rest of this paper is organized as follows.
	We start by presenting the necessary notations and the resolution proof system in \refSec{sec:Preliminaries}. This section also includes a short probability primer and an overview of probability distributions that will be appealed to in this paper.
	In \refSec{sec:evidence}, we then begin our empirical analysis to provide evidence for long-tails in SLS algorithms. We show that the Johnson SB distribution (which converges to the lognormal distribution) provides an exceptional fit to the hardness distribution. We also obtain strong evidence that the distribution is long-tailed. To conclude the section, we prove that restarts are useful for long-tailed distributions.
	\refSec{sec:theory} contains theoretical justifications for the Johnson SB distribution in the \SRWA{} case.
	Finally, in \refSec{sec:Conclusion}, we make some concluding remarks.
	All data and code produced for this paper is made publicly available. Therefore, all experiments are completely reproducible. For this, we refer to the end of the article.

	\section{Preliminaries}
	\label{sec:Preliminaries}

	\subsection{Basic Notation}
	\label{sec:NotationsBooleanLogic}
	
	A \introduceterm{literal} over a Boolean variable~$x$ is either $x$ itself or its negation~$\overline{x}$.
	A \introduceterm{clause} $C = a_1 \lor \dots \lor a_\ell$ is a (possibly empty) disjunction of literals $a_i$ over pairwise disjoint variables.
	A \introduceterm{CNF formula} $F = C_1 \land \dots \land C_m$ is a conjunction of clauses.
	\new{We will sometimes interpret clauses as sets of literals and CNF formulas as sets of clauses.}
	The set of variables of a clause~$C$ is denoted by $\vars{C}$. This notion is extended to formulas by taking unions.
	\new{The \introduceterm{width} of a clause~$C$ is given by~$|\vars{C}|$.}
	A CNF formula is a \introduceterm{$k$-CNF} if all clauses in it have at~most~$k$~variables.
	An \introduceterm{assignment}~$\tvastd$ for a CNF formula~$F$ is a function that maps some subset $\dom{\tvastd} \subseteq \vars{F}$ to $\{0,1\}$. \new{The assignment is called \introduceterm{complete} if $\dom{\tvastd} = \vars{F}$, otherwise it is called \introduceterm{partial}.}
	The application of an assignment~$\alpha$ to a clause~$C$ or a formula $F$ will be denoted with $\applyassi{C}{\alpha}$ or $\applyassi{F}{\alpha}$, respectively.
	An assignment~$\tvastd$ \introduceterm{satisfies} a CNF formula~$F$ if at least one literal in every clause of~$F$ is set to~$1$ by~$\tvastd$.
	A formula~$F$ \introduceterm{logically implies} a clause~$C$ if every complete truth assignment which satisfies~$F$ also satisfies~$C$, for which we write $F \vDash C$. If $L$ is a set of clauses, we write $F \vDash L$ if $F \vDash C$ for all~$C \in L$. If $L$ is such that $F \vDash L$, then we call $F$ and $F \cup L$ \introduceterm{logically equivalent} formulas.
	The act of changing the truth value of precisely one variable of a complete assignment~$\tvastd$ is called a \introduceterm{flip}.
	When changing an assignment $\tvastd$ by flipping a variable~$x$ of this assignment, the new assignment will be denoted with~$\tvastd[x]$, \ie $\tvastd[x](x) \coloneqq 1 - \tvastd(x)$, while $\tvastd[x](y) \coloneqq \tvastd(y)$ for $y \neq x$.
	If $\tvastd(x) = 1$, we also write $\FlipVarInAssi{\overline{x} = 1}{\tvastd}$ for $\tvastd[x]$; otherwise we write $\FlipVarInAssi{x = 1}{\tvastd}$ for $\tvastd[x]$.
	
	\subsection{The Resolution Proof System}
	
	\introduceterm{Resolution} is the proof system with the single derivation rule 
	\[
	\frac{B \lor x \:\:\: C \lor \overline{x}}{B \lor C},
	\]
	where~$B \lor x$ and~$C \lor \overline{x}$ are clauses. 
	Clearly,
	\[
	(B \lor x) \land (C \lor \overline{x}) \vDash (B \lor C).
	\]
	In the paper, we will also use \emph{width-$w$ restricted resolution}, introduced in the following.

	\begin{definition}
		\label{def:width-w-res}
		Let $F$ be a clause set, and $w \in \N$ be a positive integer. We define the operator
		\[
		\operatorname{Res}_w(F) \coloneqq F \cup \Setdescr{R}{R \text{ is a resolvent of two clauses in } F \text{ and } |R| \leq w}.
		\]
		Moreover, we inductively define $\operatorname{Res}_w^{0}(F) \coloneqq F$ and
		\[
		\operatorname{Res}_w^{n+1}(F) \coloneqq \operatorname{Res}_w \! \big( \operatorname{Res}_w^{n}(F) \big), \text{ for } n \geq 0.
		\]
		Finally, we set	
		\[
		\operatorname{Res}_w^{\ast}(F) \coloneqq \bigcup_{n \geq 0} \operatorname{Res}_w^{n}(F).
		\]
	\end{definition}

	\subsection{A Short Probability Primer}

	We assume knowledge of conditional probabilities.
	In \refSec{sec:theory}, however, these elementary notions do not suffice. Thus, in this section, we introduce the necessary concepts involving random variables in the expectations.

	All random variables in this paper will be denoted with bold lettering.
	We wish to especially highlight that some random variables describe probabilities for which we will use the notation~$\ProbopZV\,(\cdot)$. On the other hand, some probabilities are constants not depending on a random set, say $\ZV{\Clauses}$, and are thus denoted by~$\prob{\cdot}$. 
	We will adhere to this convention already in the preliminaries to help the readers familiarize themselves with this notation. While this differentiation might initially feel strange and unnecessary, we believe it immensely helps to parse the equations in \refSec{sec:theory} and find the random variables that ``hide under the cloak'' of appearing as a standard probability at first sight.

	\begin{definition}
		\label{def:KomplizierteWahrscheinlichkeit}
		Let $\ZV{X}$ be a discrete random variable and $A$ an event.
		The \introduceterm{conditional probability of $A$ given $\ZV{X}$} is defined as the random variable, written ${\PCondNoL{A}{\ZV{X}}}(\omega)$, that takes on the value $\probcond{A}{\ZV{X}=x}$ whenever $\ZV{X}=x$. More formally,
		\[{\PCondNoL{A}{\ZV{X}}}(\omega) \coloneqq \probcond{A}{\ZV{X}=\ZV{X}(\omega)}.\]
	\end{definition}

	Thus, the conditional probability $\PCondNoL{A}{\ZV{X}}$ is a function of~$\ZV{X}$ and, therefore, itself a random variable (thus denoted with the $\ProbopZV\,$-symbol and round brackets for better discriminability). In particular, it is not a real value in the interval~$[0,1]$.

	A similar concept like in \refDef{def:KomplizierteWahrscheinlichkeit} can be defined for expectations.

	\begin{definition}[\cite{mitzenmacher2017probability}]
		\label{def:KomplizierterErwartungswert}
		Let $\ZV{X}$ %
		be a random variable on a sample space~$\samplespace$. %
		Further, let $\ZV{Y}$ 
		be a discrete random variable defined on the same sample space. 
		Then, the \introduceterm{conditional expectation} of $\ZV{X}$ with respect to $\ZV{Y}$ is the random variable $\expectationcondZV{\ZV{X}}{\ZV{Y}}(\cdot) \colon \samplespace \to \R$
		defined by
		\[\expectationcondZV{\ZV{X}}{\ZV{Y}} (\omega) \coloneqq \expectationcond{\ZV{X}}{\ZV{Y} = \ZV{Y}(\omega)}.\]
	\end{definition}

	Notice that $\expectationcondZV{\ZV{X}}{\ZV{Y}}$ itself is again a random variable -- it is not a real value. Its value depends on the random variable~$\ZV{Y}$. We make this clear with two examples. The second example will take a central stage in Section~\ref{sec:theory}.

	\begin{example}[\cite{mitzenmacher2017probability}]
		Suppose that two standard dice are rolled independently. Let $\ZV{X_1}$ be the result of the first dice, $\ZV{X_2}$ the result of the second dice, and $\ZV{S}$ the sum of both results. 
		For all $x \in \set{1, \dots, 6}$ it holds 
		\[
		\expectationcond{\ZV{S}}{\ZV{X_1}=x} = \frac{1}{6} \sum_{s = x+1}^{x+6} s = x +\frac{7}{2}.
		\]
		Hence, $\expectationcondzv{\ZV{S}}{\ZV{X_1}} = \ZV{X_1} + \frac{7}{2}$ is a random variable whose value depends on $\ZV{X_1}$. If event~$\omega$ occurs, then $\ZV{X_1}$ has value~$\ZV{X_1}(\omega)$, and therefore $\expectationcondzv{\ZV{S}}{\ZV{X_1}}$ takes on the value 
		\[
		\expectationcondzv{\ZV{S}}{\ZV{X_1}}(\omega) = \expectationcond{\ZV{S}}{\ZV{X_1} = \ZV{X_1}(\omega)} = \ZV{X_1}(\omega) + \frac{7}{2} \in \R.
		\] 
	\end{example}

	\begin{example}
		\label{ex:EFlipsFundLGegebenL}
		If we let $\W(G)$ denote the random variable specifying the number of flips a fixed SLS algorithm takes to find a satisfying assignment for instance~$G$, then 
		\[
		\ERL \coloneqq \expectationcond{\W(F\cup \ZV{\Clauses})}{\ZV{\Clauses}}
		\]
		denotes the expected runtime of this SLS algorithm on the extended instance~$F \cup \ZV{\Clauses}$, which is dependent on the concrete realization~$L$ of the random extension set~$\ZV{\Clauses}$. In particular, 
		\[
		\expectationcondZV{\W(F\cup \ZV{\Clauses})}{\ZV{\Clauses}}(L) = \expectationcond{\W(F\cup \ZV{\Clauses})}{\ZV{\Clauses} = L}.
		\]
	\end{example}

	\begin{theorem}[Law of total probability, LTP]
		Let $(B_i)_i$ be a finite or countable partition of the sample space $\samplespace$ such that $\prob{B_i} > 0$ for all $i$ and $A \subseteq \samplespace$. Then
		\[
		\prob{A} = \sum_{i} \probcond{A}{B_i} \cdot \prob{B_i}.
		\]
		Similarly, if\/ $\probcond{A}{C}$ is defined for $C \subseteq \samplespace$ and the partition is such that $\prob{C \cap B_i} > 0$ for each~$i$, then
		\[
		\probcond{A}{C} =
		\sum_i \probcond{A}{C \cap B_i} \cdot
		\probcond{B_i}{C}.
		\]
	\end{theorem}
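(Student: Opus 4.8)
The plan is to derive both identities from countable additivity of $\Probop$ together with the definition of conditional probability; I would prove the unconditional statement first and then obtain the conditional one by the same argument applied after restricting attention to~$C$.

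For the first identity, I would start from the observation that, because $(B_i)_i$ partitions~$\samplespace$, the events $A \cap B_i$ are pairwise disjoint and satisfy $\bigcup_i (A \cap B_i) = A \cap \samplespace = A$. Countable additivity then gives $\prob{A} = \sum_i \prob{A \cap B_i}$. Since $\prob{B_i} > 0$ for every~$i$, the definition of conditional probability lets me rewrite each summand as $\prob{A \cap B_i} = \probcond{A}{B_i} \cdot \prob{B_i}$, and substituting this termwise yields the claimed formula.

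For the conditional identity, I would expand $\probcond{A}{C} = \prob{A \cap C} / \prob{C}$ and decompose $A \cap C = \bigcup_i (A \cap C \cap B_i)$, once more a disjoint union inherited from the partition. Countable additivity gives $\prob{A \cap C} = \sum_i \prob{A \cap C \cap B_i}$, and the hypothesis $\prob{C \cap B_i} > 0$ is exactly what permits the two factorizations $\prob{A \cap C \cap B_i} = \probcond{A}{C \cap B_i} \cdot \prob{C \cap B_i}$ and $\prob{C \cap B_i} = \probcond{B_i}{C} \cdot \prob{C}$. Chaining them and dividing by $\prob{C}$ produces the second formula. A cleaner way to package the same computation, which I would likely present, is to note that $A' \mapsto \probcond{A'}{C}$ is itself a probability measure for which $(B_i)_i$ is still a partition of positive mass; the second identity is then literally the first identity applied to this measure.

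There is no real obstacle in this proof; the only thing needing care is the well-definedness of the conditional probabilities, which the positivity hypotheses $\prob{B_i} > 0$ and $\prob{C \cap B_i} > 0$ supply precisely, together with verifying that the disjointness required for countable additivity is inherited from the partition.
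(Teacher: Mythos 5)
Your proof is correct and complete: the decomposition $A = \bigcup_i (A\cap B_i)$ into disjoint pieces, countable additivity, and the definition of conditional probability give the first identity, and your observation that $A' \mapsto \probcond{A'}{C}$ is itself a probability measure cleanly reduces the second identity to the first. The paper states this theorem in its preliminaries without proof, treating it as a standard fact, so there is nothing to compare against; your argument is the textbook one and is sound.
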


	To simplify arguments using the LTP, it is common practice to omit all terms for which $\prob{B_i} = 0$, because $\probcond{A}{B_i}$ is finite (if $\prob{B_i}=0$, then according to the simple definition above, $\probcond{A}{B}$ is undefined; however, it is possible to define a conditional probability with respect to a $\sigma$-algebra of such events). The same holds for the LTE below.

	\begin{theorem}[Law of total expectation, LTE]%
		Let $\ZV{X}$ be a discrete random variable on a probability space $\probspace$ such that $\expectation{\ZV{X}}$ is defined. Further, let $(A_i)_i$ be a finite or countable partition of $\samplespace$ such that $\prob{A_i} > 0$ for each $i$.
		Then
		\[
		\expectation{\ZV{X}} = \sum_i \expectationcond{\ZV{X}}{A_i} \cdot \prob{A_i}.
		\]
		Similarly, if\/ $\expectationcond{\ZV{X}}{B}$ is defined and the partition is such that $\prob{B \cap A_i} > 0$ for each $i$, then
		\[
		\expectationcond{\ZV{X}}{B} =
		\sum_i \expectationcond{X}{B \cap A_i} \cdot
		\probcond{A_i}{B}.
		\]
	\end{theorem}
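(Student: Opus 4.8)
The plan is to reduce the Law of Total Expectation to the Law of Total Probability (LTP) stated just above, applied value by value to the discrete random variable~$\ZV{X}$. First I would write $\Im{\ZV{X}} = \set{x_j}_j$ for the (countable) image of~$\ZV{X}$ and start from the definition of expectation, $\expectation{\ZV{X}} = \sum_j x_j \cdot \prob{\ZV{X} = x_j}$. Since $(A_i)_i$ is a partition of~$\samplespace$ with $\prob{A_i} > 0$, each event $\set{\ZV{X} = x_j}$ is amenable to the LTP, giving $\prob{\ZV{X} = x_j} = \sum_i \probcond{\ZV{X} = x_j}{A_i} \cdot \prob{A_i}$ for every~$j$.

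Substituting this into the expectation yields the double sum $\sum_j x_j \sum_i \probcond{\ZV{X} = x_j}{A_i}\cdot\prob{A_i}$. The key step is then to interchange the two summations, so that after summing over~$j$ first the inner sum $\sum_j x_j \cdot \probcond{\ZV{X} = x_j}{A_i}$ is recognized exactly as $\expectationcond{\ZV{X}}{A_i}$, namely the expectation of~$\ZV{X}$ computed under the conditional law $\probcond{\cdot}{A_i}$. This leaves $\expectation{\ZV{X}} = \sum_i \expectationcond{\ZV{X}}{A_i}\cdot\prob{A_i}$, which is the first claim.

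The hard part will be justifying this interchange when both index sets are countably infinite, and this is the only place where the hypothesis that $\expectation{\ZV{X}}$ is defined is actually used. I would first treat the case $\ZV{X} \geq 0$, where all terms are nonnegative and a Tonelli argument for double series permits the swap unconditionally. For general~$\ZV{X}$ I would split into positive and negative parts $\ZV{X} = \ZV{X}^{+} - \ZV{X}^{-}$; the assumption that $\expectation{\ZV{X}}$ is defined secures $\expectation{|\ZV{X}|} < \infty$, so the double series is absolutely convergent and a discrete Fubini argument licenses the reordering, after which the two parts recombine. (If one interprets ``$\expectation{\ZV{X}}$ is defined'' only in the extended sense that one of $\expectation{\ZV{X}^{+}}, \expectation{\ZV{X}^{-}}$ is finite, the nonnegative case already handles each part separately and the difference is still well defined.)

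Finally, the conditional statement follows by repeating the identical argument inside the conditional probability space: I would replace every occurrence of $\prob{\cdot}$ by $\probcond{\cdot}{B}$, observe that $\probcond{\cdot}{B}$ is again a probability measure on~$\samplespace$ and that $(A_i)_i$ remains a partition with $\probcond{A_i}{B} = \prob{B \cap A_i}/\prob{B} > 0$, and invoke the conditional form of the LTP already recorded above. The inner sum is then recognized as $\expectationcond{\ZV{X}}{B \cap A_i}$, yielding the second identity.
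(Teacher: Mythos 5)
Your proof is correct. Note that the paper offers no proof of this statement at all: the LTE is recorded in the preliminaries as a standard background fact (alongside the LTP and the chain rule), so there is no in-paper argument to compare against. Your route --- expanding $\expectation{\ZV{X}} = \sum_j x_j \cdot \prob{\ZV{X} = x_j}$, applying the LTP to each event $\set{\ZV{X} = x_j}$, interchanging the two sums, and recognizing the inner sum as $\expectationcond{\ZV{X}}{A_i}$ --- is the standard textbook derivation, and you correctly identify the only delicate point, namely that the interchange needs Tonelli in the nonnegative case and discrete Fubini (secured by $\expectation{\setsize{\ZV{X}}} < \infty$, or by treating $\ZV{X}^{+}$ and $\ZV{X}^{-}$ separately in the extended sense) in general. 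The observation that the conditional form follows by rerunning the same argument under the probability measure $\probcond{\cdot}{B}$ is likewise exactly right and matches how the paper phrases the second identity.
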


	\begin{theorem}[Chain rule for probabilities]
		\label{thm:ChainRule}
		If $A_1, \dots, A_n$ are random events, then
		\[
		\PROB{ \bigcap_{k=1}^n A_k }
		=
		\prod_{k=1}^n \probcondResize{A_k}{ \bigcap_{j=1}^{k-1} A_j }.
		\]
	\end{theorem}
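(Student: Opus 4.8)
The plan is to prove the identity by induction on the number~$n$ of events, reducing the $n$-event case to the $(n-1)$-event case by peeling off the last factor via the defining relation of conditional probability. Throughout, I adopt the convention that the empty intersection $\bigcap_{j=1}^{0} A_j$ equals the whole sample space~$\samplespace$, so that the $k=1$ factor on the right-hand side reads $\probcond{A_1}{\samplespace} = \prob{A_1}$. I also work under the standing assumption $\prob{\bigcap_{k=1}^{n-1} A_k} > 0$, which is precisely the condition ensuring that every conditional probability appearing in the product is well-defined.

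First I would record the monotonicity observation that makes the statement meaningful. Since $\bigcap_{j=1}^{k-1} A_j \supseteq \bigcap_{j=1}^{n-1} A_j$ for every $k \leq n$, the assumption $\prob{\bigcap_{k=1}^{n-1} A_k} > 0$ forces $\prob{\bigcap_{j=1}^{k-1} A_j} > 0$ for all~$k$, so that each factor $\probcond{A_k}{\bigcap_{j=1}^{k-1} A_j}$ is a genuine real number rather than an undefined $0/0$ expression. This is the point where the positivity hypothesis is actually consumed.

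The base case $n=1$ is immediate from the convention above, as both sides equal $\prob{A_1}$. For the inductive step, I would assume the formula holds for $n-1$ events and set $B \coloneqq \bigcap_{k=1}^{n-1} A_k$. The defining relation $\probcond{A_n}{B} = \prob{A_n \cap B}/\prob{B}$, valid because $\prob{B} > 0$, rearranges to $\prob{A_n \cap B} = \probcond{A_n}{B}\cdot\prob{B}$. Observing that $A_n \cap B = \bigcap_{k=1}^{n} A_k$ and $B = \bigcap_{k=1}^{n-1} A_k$, this reads
\[
\Prob{\bigcap_{k=1}^{n} A_k} = \probcond{A_n}{\bigcap_{j=1}^{n-1} A_j} \cdot \Prob{\bigcap_{k=1}^{n-1} A_k}.
\]
Applying the induction hypothesis to the final factor and absorbing the leading factor into the product then yields the claimed identity.

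The argument is otherwise routine, so I do not anticipate a substantive obstacle; the only point requiring genuine care is the bookkeeping around conditioning events of probability zero. I expect this to be the most delicate step: one must verify, through the monotonicity remark above, that the positivity hypothesis propagates to all partial intersections \emph{before} any conditional probability may be manipulated, and one should explicitly flag that without such a hypothesis the right-hand side is not even defined. This mirrors the convention already noted in the excerpt for the laws of total probability and expectation.
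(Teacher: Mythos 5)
Your proof is correct, and it is the canonical argument: induction on $n$, peeling off the last factor via the definition of conditional probability, with the empty-intersection convention handling the base case. The paper itself states this chain rule as a known preliminary fact and offers no proof, so there is nothing to compare against; your added care about the positivity hypothesis $\Probop\bigl[\bigcap_{k=1}^{n-1} A_k\bigr] > 0$ (and its propagation to all partial intersections by monotonicity) is a point the paper only addresses informally in its surrounding remark about omitting terms with vanishing probability, and is a welcome precision.
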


	\subsection{Probability Distributions}
	
	Throughout the paper, we will make use of various probability distributions. We need the concepts of cumulative distribution functions and probability density functions to introduce these distributions.

	\begin{definition}[\cite{norman1994continuous}]
		\label{def:cdf_quantilemain}
		Let $\ZV{X}$ be a real-valued random variable.
		\begin{enumerate}[(i)]
			\item
			Its \introduceterm{cumulative distribution function} (cdf) is the function $F \colon \R \to [0,1]$ with \[F(t) \coloneqq \prob{\ZV{X} \leq t}.\]
			\item
			If $\ZV{X}$ is continuous, its \introduceterm{quantile function} $Q \colon (0,1) \to \R$ is given by \[Q(p) \coloneqq \inf \setdescr{t \in \R}{F(t) \geq p}.\]
			\item
			Its \introduceterm{survival function}~$S$ is given by \[S(x) \coloneqq 1-F(x).\] 
			\item
			If a non-negative, integrable function~$f$ with the property 
			\[
			F(t) = \integralLow[u]{-\infty}{t}{f(u)} \text{ for all } t \in \R
			\]
			exists, it is called \introduceterm{probability density function} (pdf) of~$\ZV{X}$.
		\end{enumerate}
	\end{definition}

	If the underlying cdf of a sample is unknown, we use the empirical distribution function.
	
	\begin{definition}
		\label{def:ecdf}
		Let $\ZV{X}_1, \dots, \ZV{X}_n$ be independent, identically distributed, real-valued random variables
		with realizations $x_i$ of $\ZV{X}_i$.
		Then, the \introduceterm{empirical cumulative distribution function} (\introduceterm{ecdf}) of the sample $(x_1, \dots, x_n)$ is defined as
		\[
		\hat{F}_n(t) \coloneqq \frac{1}{n} \sum_{i=1}^n \Indikator{\set{x_i \leq t}}, \quad t \in \R,
		\]	
		where $\Indikator{A}$ is the indicator of event $A$.
	\end{definition}

	\subsubsection{The Johnson SB Distribution}
	
	Central to all distributions considered in this paper and necessary to introduce the Johnson SB distribution is the concept of the well-known Gaussian normal distribution.
	
	\begin{definition}
		An absolutely continuous random variable $\ZV{X}$ is \introduceterm{normally distributed} with expectation~$\mu \in \R$ and variance~$\sigma^2 > 0$, denoted by $\ZV{X} \distr \Normal{\mu}{\sigma^2}$, if the pdf of $\ZV{X}$ is given by	\[f_{\,\text{N}}(x \mid \mu, \sigma) = \frac{1}{\sigma \sqrt{2 \pi}} \exp \left( - \frac{(x-\mu)^2}{2 \sigma^2} \right).\]
	\end{definition}
	
	Using normal distributions, we introduce the Johnson SB distribution, which takes the central stage in this work.
	
	\begin{definition}[\cite{Johnson49System,Johnson49Bivariate,norman1994continuous}]
		An absolutely continuous random variable $\ZV{X}$ is \introduceterm{Johnson SB distributed} with parameters $\gamma \in \R$, $\delta > 0$, $\lambda > 0$, and $\xi \in \R$, denoted as $\ZV{X} \distr \SB{\gamma}{\delta}{\lambda}{\xi}$, if $\xi < \ZV{X} < \xi + \lambda$ and
		\begin{align*}
			\ZV{Z} 
			\coloneqq 
			\gamma + \delta \cdot \log \left( \frac{\ZV{X} - \xi}{\xi + \lambda - \ZV{X}} \right) \distr \Normal{0}{1}.
		\end{align*}
	\end{definition}

	The Johnson SB distribution is highly flexible and can model distributions with finite support. \refFig{fig:some_sb_pdfs} illustrates several Johnson SB distributions, including the effect of the parameters on the form of the pdf.

	\begin{figure}[ht]
		\centering
		\begin{subfigure}[t]{.465\linewidth}
			\centering
			\includegraphics{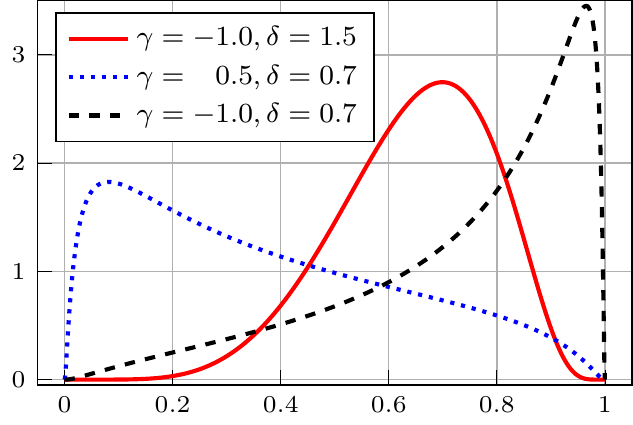}
			\subcaption{$\xi=0, \lambda=1$}
		\end{subfigure}%
		\hfill
		\begin{subfigure}[t]{.465\linewidth}
			\centering
			\includegraphics{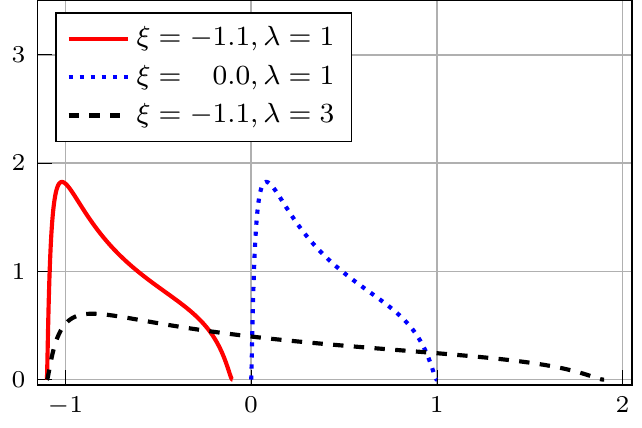}
			\subcaption{$\gamma=0.5, \delta=0.7$}
		\end{subfigure}%
		\caption{
			This figure shows the effect of the parameters on the pdf of the Johnson SB distribution. All Johnson SB distributions in the left plot use $\xi=0, \lambda=1$ as parameters. In the right plot, all distributions use $\gamma=0.5, \delta=0.7$ as parameters. In both cases, there are two varying parameters that are given in the respective legend.
		}
		\label{fig:some_sb_pdfs}
	\end{figure}

	\begin{remark}[\cite{cheng2017non}]
		\label{rem:SBParam}
		A Johnson SB distributed random variable has positive density support on $(\xi, \xi + \lambda)$. The parameters $\gamma$ and $\delta$ are \introduceterm{shape} parameters (governing the \introduceterm{asymmetry} and \introduceterm{kurtosis}, respectively), $\lambda$ is the \introduceterm{scale} parameter, and $\xi$ is the \introduceterm{location} parameter.
		Letting $a \coloneqq \xi$ and $b \coloneqq \xi + \lambda$,
		the pdf~$f_{\,\text{SB}}$ is given by
		\begin{align*}
			f_{\,\text{SB}}(x \mid a, b, \gamma, \delta) = \frac{1}{\sqrt{2 \pi}} \frac{(b-a)\delta}{(x-a)(b-x)}
			\exp{\biggl\{ -\frac{1}{2} \biggl[ \gamma + \delta \ln{\left( \frac{x-a}{b-x} \right)} \biggr]^2 \biggr\}},\quad x \in (a,b).
		\end{align*}
	\end{remark}

		Furthermore, the Johnson SB distribution has the following scaling property.
	
	\begin{lemma}
		\label{lem:MultiplicationOfSB}
		Let $\ZV{X}$ be a Johnson SB distributed random variable having parameters $a, b, \gamma, \delta \in \R$ with $a<b$ and $\delta > 0$. 
		Then, the random variable $\ZV{Y} = g \cdot \ZV{X}$, where $g \in \R_+$ is an arbitrary, positive real number, is Johnson SB distributed with parameters $ga, gb, \gamma, \delta \in \R$.
	\end{lemma}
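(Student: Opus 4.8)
The plan is to work directly from the defining Gaussian transformation of the Johnson SB distribution rather than from densities, since a multiplicative rescaling interacts very cleanly with the log-odds expression in that definition. I would first rewrite everything in the $(a,b)$-form of Remark~\ref{rem:SBParam}, so that $a = \xi$ and $b = \xi + \lambda$. Under this convention the hypothesis $\ZV{X} \distr \SBop$ with parameters $a,b,\gamma,\delta$ means precisely that $a < \ZV{X} < b$ and that the transformed variable
\[
\ZV{Z} = \gamma + \delta \log\!\left( \frac{\ZV{X} - a}{b - \ZV{X}} \right)
\]
is standard normal.

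First I would settle the support. Because $g > 0$, multiplying the chain of strict inequalities $a < \ZV{X} < b$ by $g$ preserves their direction, giving $ga < \ZV{Y} < gb$. Hence $\ZV{Y} = g\ZV{X}$ takes values exactly in the candidate support $(ga, gb)$, which is the interval prescribed by the claimed parameters $ga, gb, \gamma, \delta$ (equivalently scale $g\lambda$ and location $g\xi$).

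The heart of the argument is then a one-line cancellation: I substitute $\ZV{Y} = g\ZV{X}$ into the log-odds expression formed from the candidate endpoints $ga$ and $gb$, and observe that the common factor $g$ cancels inside the logarithm,
\[
\gamma + \delta \log\!\left( \frac{\ZV{Y} - ga}{gb - \ZV{Y}} \right)
= \gamma + \delta \log\!\left( \frac{g(\ZV{X} - a)}{g(b - \ZV{X})} \right)
= \gamma + \delta \log\!\left( \frac{\ZV{X} - a}{b - \ZV{X}} \right)
= \ZV{Z}.
\]
Since the right-hand side is standard normal by hypothesis, the transformed variable associated with $\ZV{Y}$ and the endpoints $ga, gb$ is itself standard normal. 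Together with the support computation, this is exactly the defining condition, so $\ZV{Y}$ is Johnson SB distributed with parameters $ga, gb, \gamma, \delta$, as claimed.

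There is essentially no analytic obstacle here; the only point that genuinely requires care is the strict positivity of $g$, which is used twice — once to keep the support inequalities correctly oriented, and once to guarantee that $g$ cancels from numerator and denominator without introducing a sign change inside the logarithm (a negative $g$ would both reverse the inequalities and spoil the cancellation). As an alternative one could instead push the density through the change of variables $f_{\ZV{Y}}(y) = \tfrac{1}{g}\, f_{\,\text{SB}}(y/g \mid a,b,\gamma,\delta)$ and verify it coincides with $f_{\,\text{SB}}(y \mid ga, gb, \gamma, \delta)$ using the explicit formula in Remark~\ref{rem:SBParam}; this works as well but amounts to more tedious bookkeeping, so I would favour the transformation route above.
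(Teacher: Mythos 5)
Your proof is correct, and it takes a genuinely different (and arguably cleaner) route than the paper's. The paper proves the lemma by a change of variables on the density: it computes $\prob{\ZV{Y} \leq x} = F_{\ZV{X}}(x/g)$, deduces $f_{\ZV{Y}}(x) = \frac{1}{g} f_{\ZV{X}}(x/g)$, and then substitutes into the explicit pdf formula of Remark~\ref{rem:SBParam} to verify that the result is literally the Johnson SB density with parameters $ga, gb, \gamma, \delta$ --- exactly the ``more tedious bookkeeping'' alternative you mention at the end. You instead work from the defining Gaussian transformation and observe that the factor $g$ cancels inside the log-odds, so the very same standard normal variable $\ZV{Z}$ witnesses the SB property for $\ZV{Y}$ with endpoints $ga, gb$. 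Your route is shorter, makes the invariance of the shape parameters $\gamma, \delta$ transparent rather than an artifact of algebra, and avoids any implicit appeal to the pdf characterizing the distribution; the paper's route has the minor advantage of producing the explicit density of $\ZV{Y}$ along the way, which is occasionally convenient downstream. Your remarks on where strict positivity of $g$ is used (orientation of the support inequalities and sign-safe cancellation in the logarithm) are exactly the right points of care. No gaps.
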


	\begin{proof}
		Let $f_{\ZV{X}}$ and $F_{\ZV{X}}$ denote the pdf and the cdf of $\ZV{X}$. Likewise, $f_{\ZV{Y}}$ and $F_{\ZV{Y}}$ are the pdf and the cdf of $\ZV{Y}$.
		Since
		\begin{align*}
			\prob{\ZV{Y} \leq x} 
			= \prob{g \cdot \ZV{X} \leq x} 
			= \PROB{\ZV{X} \leq \frac{x}{g}}
			= F_{\ZV{X}} \left( \frac{x}{g} \right),		
		\end{align*}
		it follows $f_{\ZV{Y}}(x) = \frac{1}{g} f_{\ZV{X}} \left( \frac{x}{g} \right)$.
		Thus,
		\begin{align*}
			f_{\ZV{Y}}(x)&=
			\frac{1}{g}
			\frac{1}{\sqrt{2 \pi}} \frac{(b-a)\delta}{(\frac{x}{g}-a)(b-\frac{x}{g})}
			\exp{\left\{ -\frac{1}{2} \left[ \gamma + \delta \ln{\left( \frac{\frac{x}{g}-a}{b-\frac{x}{g}} \right)} \right]^2 \right\}}\\
			&=
			\frac{1}{\sqrt{2 \pi}} \frac{(gb-ga)\delta}{(x-ga)(gb-x)}
			\exp{\biggl\{ -\frac{1}{2} \biggl[ \gamma + \delta \ln{\left( \frac{x-ga}{gb-x} \right)} \biggr]^2 \biggr\}}.
		\end{align*}
		This is the pdf of a Johnson SB distribution with parameters $ga, gb, \gamma, \delta$.
	\end{proof}

	\subsubsection{Lognormal Distributions as Embedded Model of Johnson SB Distributions}
	Experimentally, we show that the fits of the SLS hardness distributions exhibit parameter combinations that suggest the involvement of an embedding process: Informally speaking, the Johnson SB distribution can be thought of as converging to a lognormal distribution (see \refFig{fig:sb_converge} for an illustration). Hence, the Johnson SB distribution is sometimes referred to as four-parameter lognormal distribution~\cite{AitchisonBrown}.
	The lognormal distribution is given as follows.

	\begin{definition}[\cite{Wicksell17OnLogarithmicCorrelation}]
		\label{def:threeparameterLogN}
		An absolutely continuous, positive random variable~$\ZV{X}$ is \introduceterm{(three-parameter) lognormally distributed} with parameters $\sigma^2 > 0$, $\xi > 0$, and $\mu \in \R$, if $\log(\ZV{X} - \xi)$, where $\ZV{X} > \xi$, is normally distributed with mean $\mu$ and variance $\sigma^2$. In the following, we refer to $\sigma$ as the \introduceterm{shape}, $\mu$ as the \introduceterm{scale}, and $\xi$ as the \introduceterm{location parameter}.
		
		If the location parameter $\xi$ is zero, we call $\ZV{X}$ two-parameter lognormal distributed and commonly omit $\xi$.
	\end{definition}

	\begin{remark}[\cite{statisticlognormal}]
		The pdf $f_{\,3\text{LogN}}$ of the three-parameter lognormal distribution is given by
		\[
		f_{\,3\text{LogN}}(x \mid \mu, \sigma, \xi) = \frac{1}{(x-\xi) \sigma \sqrt{2 \pi}} \exp \left\{ - \frac{\big[ \ln(x-\xi) - \mu \big]^2}{2 \sigma^2} \right\} 
		, \quad x > \xi.
		\]
	\end{remark}

	The next definitions make the embedding of the lognormal distribution in the Johnson SB distribution more precise.

	\begin{definition}[\cite{cheng2017non}]
		Consider a function $f$ having parameters $\Theta$. Furthermore, let $\Theta = (\Theta^L, \Theta^R)$ be a partition of the parameters with $\Theta^L = (\Theta_1, \dots, \Theta_\ell)$ and $\Theta^R = (\Theta_{\ell + 1}, \dots, \Theta_{\ell +r})$. Lastly, assume $\widehat{\Theta} = g(\Theta)$ for some function $g$. If
		\begin{align*}
			\lim_{\Theta^L \rightarrow 0} f(x \mid \Theta) = f_0(x \mid \widehat{\Theta}),
		\end{align*}
		for some well-defined function $f_0$, then $f_0$ is called an \introduceterm{embedded model} of $f$.
	\end{definition}
	
	\begin{lemma}[\cite{cheng2017non,norman1994continuous}]
		\label{lem:LogNEmbeddedModelOfSB}
		The lognormal distribution is an embedded model of the SB distribution.
	\end{lemma}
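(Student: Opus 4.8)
The plan is to verify the embedding directly at the level of densities, by taking a pointwise limit of the Johnson SB pdf $f_{\,\text{SB}}(x \mid a, b, \gamma, \delta)$ from Remark~\ref{rem:SBParam} and showing that it converges to the three-parameter lognormal pdf $f_{\,3\text{LogN}}(x \mid \mu, \sigma, \xi)$. Since the lognormal has support $(\xi, \infty)$ while the SB has the bounded support $(a,b)$, the necessary degeneration is $b \to \infty$; the single ``lost'' parameter (the upper endpoint receding to infinity) plays the role of the vanishing block $\Theta^L$ in the embedded-model definition, so here $\ell = 1$ and $r = 3$.

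First I would fix a target triple $(\mu, \sigma, \xi)$ and reparametrize the SB family so that the degeneration becomes a genuine single-parameter limit $\Theta^L \to 0$. Concretely, set $a \coloneqq \xi$ and $\delta \coloneqq 1/\sigma$, and for a parameter $\tau > 0$ put $\gamma \coloneqq -\delta \ln \tau$ (so $\gamma \to \infty$ as $\tau \to 0^{+}$) and $b \coloneqq \xi + \e^{\mu}/\tau$, whence $\lambda = b - a = \e^{\mu}/\tau \to \infty$. With this coupling one checks that $\ln \lambda - \gamma/\delta = (\mu - \ln\tau) - (-\ln\tau) = \mu$ is held fixed. Thus, in the notation of the embedded-model definition, $(a,b,\gamma,\delta)$ is a smooth reparametrization of $(\tau, \mu, \sigma, \xi)$, and one takes $\Theta^L = (\tau)$, $\Theta^R = (\mu, \sigma, \xi)$, and $\widehat{\Theta} = (\mu,\sigma,\xi) = g(\Theta)$ with $g$ the projection onto $\Theta^R$.

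Next I would compute the two factors of $f_{\,\text{SB}}$ separately as $\tau \to 0^{+}$, for an arbitrary but fixed $x > \xi$ (for $\tau$ small enough, $x \in (a,b)$, so the density is well-defined and positive). For the prefactor, writing $\frac{(b-a)\delta}{(x-a)(b-x)} = \frac{\delta}{x-a} \cdot \frac{b-a}{b-x}$ and using $b - x = \lambda \bigl( 1 + (\xi - x)/\lambda \bigr)$ gives the limit $\frac{\delta}{x-\xi} = \frac{1}{\sigma(x-\xi)}$, which matches the lognormal prefactor. For the exponent, decompose $\gamma + \delta \ln\!\frac{x-a}{b-x} = \gamma + \delta\ln(x-\xi) - \delta\ln(b-x)$ and use $\ln(b-x) = \ln\lambda + \ln\bigl(1 + (\xi-x)/\lambda\bigr) \to \ln\lambda$; the bracket then tends to $\delta\ln(x-\xi) - \delta(\ln\lambda - \gamma/\delta) = \delta[\ln(x-\xi)-\mu] = [\ln(x-\xi)-\mu]/\sigma$. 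Squaring and inserting the factor $-\tfrac12$ yields exactly $-[\ln(x-\xi)-\mu]^2/(2\sigma^2)$, so the product of the two limiting factors is $f_{\,3\text{LogN}}(x \mid \mu, \sigma, \xi)$, as required.

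The limits themselves are routine; the part that needs care is the framing rather than the algebra. The main thing to get right is the coupled reparametrization: letting $b \to \infty$ alone does not produce a proper limiting density, and one must send $\gamma \to \infty$ at precisely the rate $\gamma \sim \delta \ln \lambda$ so that the location $\mu = \ln\lambda - \gamma/\delta$ stays finite and the mass neither collapses to $\xi$ nor escapes to infinity. I would also stress that the embedded-model definition asks only for \emph{pointwise} convergence of the densities, so no interchange of limit and integral is involved; one may optionally remark that, since the limiting $f_{\,3\text{LogN}}$ is itself a density, Scheffé's lemma upgrades this to convergence in distribution for free.
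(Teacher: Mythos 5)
Your proof is correct and follows essentially the same route as the paper's: you fix the identical coupling $\gamma = \delta\bigl(\ln(b-a)-\mu\bigr)$ (equivalently, $\ln\lambda - \gamma/\delta = \mu$ held constant) and take a pointwise limit of the density as $b \to \infty$. The only difference is cosmetic -- the paper passes to the log-likelihood and phrases the limit via a Taylor expansion in $\alpha = 1/b$, whereas you compute the limits of the prefactor and the exponent of $f_{\,\text{SB}}$ directly, which is arguably the cleaner execution of the same idea.
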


	\begin{proof}
		We have provided a proof in Appendix~\ref{sec:MoreOnEmbedding}.
	\end{proof}
	
	In particular, $f_{\,\text{SB}} \to f_{\,3\text{LogN}}$ for the reparametrization $\gamma = \delta \big( \ln(b-a) - \mu \big)$ in the Johnson SB distribution and $b \to \infty$.
	For all intents and purposes, it suffices for the right endpoint~$b$ of the density support $(a,b)$ to increase (or $\lambda$, respectively), while $\gamma$ can grow logarithmically slow~\cite{cheng2017non}. We refer to \refFig{fig:sb_converge} for an illustration.

	\begin{figure}[ht]
		\centering
		\begin{subfigure}[t]{.3\linewidth}
			\centering
			\includegraphics[height=7.84em]{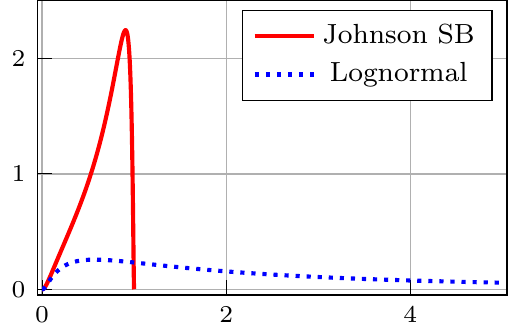}
			\subcaption{$\lambda=1.0, \gamma=-0.8$}
		\end{subfigure}%
		\hfill
		\begin{subfigure}[t]{.3\linewidth}
			\centering
			\includegraphics[height=8.25em]{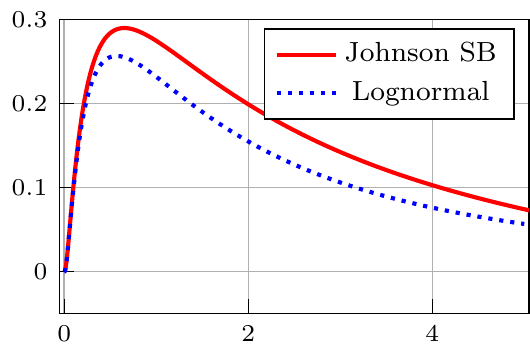}
			\subcaption{$\lambda=10.0, \gamma\approx1.042$}
		\end{subfigure}%
		\hfill	
		\begin{subfigure}[t]{.3\linewidth}
			\centering
			\includegraphics[height=8.25em]{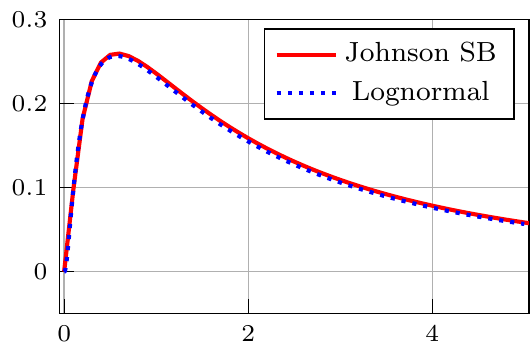}
			\subcaption{$\lambda=100.0, \gamma\approx2.884$}
		\end{subfigure}%
		\caption{
			This figure shows that the Johnson SB distribution converges towards a fixed lognormal distribution if $\lambda$ and $\gamma$ approach infinity. Each subfigure shows a Johnson SB distribution together with its limiting lognormal distribution. The Johnson SB distributions all have $\xi=0.0, \delta=0.8$ as parameters. The remaining two parameters $\lambda$ and $\gamma$ are given in the corresponding subcaptions. 
			One should note that $\gamma$ can take relatively small values for the convergence to still take place. More precisely, it suffices for $\gamma$ to grow logarithmically slow. This is theoretically explained in~\cite{cheng2017non}.
			The limiting lognormal distribution has $\sigma=1.25, \mu=1.0$ as parameters.
		}
		\label{fig:sb_converge}
	\end{figure}

	\section{Evidence for Long-Tails in SLS Algorithms}
	\label{sec:evidence}

	The authors of~\cite{LW20OnTheEffectOfLearnedClauses} introduced the hybrid solver \algoformat{GapSAT} by augmenting an SLS solver with a clause-learning feature. After receiving a set of additional clauses~$\Clauses$ (implied by the original formula~$F$), the solver can be understood as solving a modified instance~$F^\prime$. This paper showed that adding new clauses is beneficial to the mean runtime (in flips) of the SLS solver \probSAT{} underlying the hybrid model. However, it was also demonstrated that although adding new clauses can improve the mean runtime, there exist instances where adding clauses can harm the performance of SLS.
	As announced, this behavior is worth studying further to help eliminate the risk of increasing the runtime of such procedures.

	For this reason, in this section, we study the runtime (or, more precisely, \emph{hardness}) distribution of the procedure~\Alfa{} that we introduce below. This procedure models the addition of a random set of logically equivalent clauses~$\Clauses$ to a formula~$F$ and the subsequent solving of this amended formula~$F^{(1)} \coloneqq F \cup \Clauses$ by an SLS solver.
	Our empirical evaluations show that this distribution is long-tailed.
	This fact enables us to prove that restarts are useful for~\Alfa{}.

	\subsection{Design of the Adjusted Logical Formula Algorithm Alfa}
	\label{sec:DesignOfAlfa}

	Our SLS solver \Alfa{} (Adjusted logical formula algorithm) receives a satisfiable formula~$F$ as input.
	The algorithm then proceeds by adding to~$F$ a random set~$\Clauses$ of logically generated clauses.
	It finally calls an SLS solver to solve the clause set~$F \cup \ZV{\Clauses}$.
	The pseudocode for \Alfa{} is given in \refAlg{algo:main}.

	\begin{algorithm}[htb]
		\textbf{Input:} Boolean formula $F$, \textbf{Promise:} $F \in \SAT$\\
		\BlankLine
		Generate \textbf{randomly} a set $\Clauses$ of clauses such that $F \vDash \Clauses$ (\eg with \refAlg{algo:res})\\
		Call \algoformat{SLS}$(F \cup \Clauses)$ for some SLS solver~\algoformat{SLS}
		\caption{\Alfa{} acts as a base algorithm that can use different SLS algorithms.}
		\label{algo:main}
	\end{algorithm}

	We use width-$w$ restricted resolution (recall \refDef{def:width-w-res}) in \refAlg{algo:res} as a natural way to sample a set~$\Clauses$ of logically equivalent clauses with respect to a base instance~$F$.
	This allows us the formulation of \refAlg{algo:res} that is used to generate random sets $L$ with resolution.

	\begin{algorithm}[htb]
		
		\SetKw{KwWithProb}{with probability}
		\SetKw{KwDo}{do}	
		
		\textbf{Input:} Boolean formula $F$, integer $w$, probability $p \in (0,1]$\\
		\BlankLine
		
		$L \coloneqq \emptyset$\\
		\ForEach{$R \in \operatorname{Res}_w^{\ast}(F) \setminus F$}{
			\KwWithProb $p$ %
			\KwDo
			$L \coloneqq L \cup \set{R}$
		}
		
		\textbf{return} $L$
		
		\caption{Generation of the random set $\Clauses$ with width-$w$ restricted resolution.}
		\label{algo:res}
	\end{algorithm}

	\subsection{Empirical Evaluation of the Hardness Distribution}

	\subsubsection{Experimental Setup, Instance Types, and Solvers Used to Obtain Hardness Distribution Data}
	\label{sec:ExpSetupInstTypesSolversUsed}

	Hoos and Stützle~\cite{HS98EvaluatingLasVegas} introduced the concept of \emph{runtime distribition} to characterize the cdf of Las Vegas algorithms, where the runtime can vary from one execution to another, even with the same input.
	To obtain enough data for a fitting of such a distribution, for each base instance~$F$, we created 5000 modified instances $F^{(1)}, \dots, F^{(5000)}$ by generating resolvent sets $\Clauses^{(1)}, \dots, \Clauses^{(5000)}$ using \refAlg{algo:res} with $w=4$ and a value of $p$ such that the expected number of resolvents being added was~$\frac{1}{10} |F|$.
	We also conducted experiments to rule out the influence of~$p$ on our results.
	Each of the modified instances was solved 100 times, each time using a different seed.
	For $i = 1, \dots, 5000$ and $j = 1, \dots, 100$, we obtained the values $\mathsf{flips}_{S}(F^{(i)}, s_j)$ indicating how many flips solver~$S$ used to solve the modified instance~$F^{(i)}$ when using seed~$s_j$.
	Next, we calculated 
	\[
	\mathsf{mean}_S(F^{(i)}) \coloneqq \frac{1}{100} \sum_{j=1}^{100} \mathsf{flips}_{S}(F^{(i)}, s_j),
	\]
	the mean number of flips
	required to solve $F^{(i)}$ with solver~$S$ whose hardness distribution we are going to analyze.

	All experiments were performed on bwUniCluster 2.0 and three local servers, using
	Sputnik~\cite{VLSKK15Sputnik} to distribute the computation and parallelize the trials.
	Due to the heterogeneity of the computer setup, measured runtimes are not directly comparable to each other.
	Consequently, we instead measured the number of variable flips performed by the SLS solver.
	This is a hardware-independent performance measure with the benefit that it can also be analyzed theoretically.

	Next, we describe the generation and satisfiability sampling of the instances.
	For the experiments, the following instance types were used:
	
	\begin{enumerate}[(1)]
		\item \textit{Hidden Solution:}		
		We used our implementation~\cite{LW21SourceCodeOfConcealSATgen} of the CDC algorithm~\cite{BC18UsingAlgorithmConfigurationTools, BHLRTWZ02Hiding} to generate instances with a hidden solution. 
		SLS solvers typically struggle to solve such instances~\cite{LW20OnTheEffectOfLearnedClauses}. Thus, experiments like these might be beneficial in finding theoretical reasons for this behavior.	
		
		\item \textit{Hidden Solution With Different Chances:}		
		We also created formulas with different chance values, \ie the probability of adding a clause in \refAlg{algo:res}. The purpose is to rule out the influence of the chance value.

		\item \textit{Uniform Random:}	
		Using Gableske's \algoformat{kcnfgen}~\cite{kcnfgen}, we generated formulas with $n \in \set{50,60,70,80,90}$ variables and	a clause-to-variable ratio~$r$ close to the \emph{satisfiability threshold}~\cite{MMZ06ThresholdValues} of $r \approx 4.267$.
		Then, we checked each instance with \algoformat{Glucose3}~\cite{AS09Glucose,ES03MiniSat}
		for satisfiability until we had five formulas of each size.

		\item \textit{Factoring:}
		We encoded the factoring problem in the interval $\set{128, \dots, 256}$ with~\cite{Diemer21GenFactorSat}.

		\item \textit{Coloring:}
		These formulas assert that a graph is colorable with three colors.
		We generated these formulas, using~\cite{LENV17CNFgen}, over random graphs with $n$ vertices and $m = 2.254n$ edges in expectation,	which is slightly below the \introduceterm{non-colorability threshold}~\cite{KaporisKS00}.
		We obtained 32~satisfiable instances in 150 variables.
	\end{enumerate}

	Our experiments investigated leading SLS solvers whose dominating component is based on the random walk procedure proposed in~\cite{Schoening02AProbabilisticAlgorithm}.
	In this paper, Schöning's Random Walk Algorithm \SRWA{} (see \refAlg{algo:schoening} on page~\pageref{algo:schoening}) was introduced.
	The \probSAT{} solver family~\cite{BalintImplementationOfProbSAT}, including \YalSAT{}~\cite{YalSAT}, is based on this approach.
	The excellent performances and similarities were reasons for choosing \SRWA{}, \probSAT{}, and \YalSAT{} as main solvers
	(\probSAT{} won the random track of the SAT competition 2013, and \YalSAT{} won in 2017).
	Only recently, in 2018, other types of solvers significantly exceeded \probSAT{}-based algorithms. This lasting performance is why this solver family is chosen in this study.

	For \SRWA{}, we conducted most of our experiments: All instance types were tested, including different chance values in \refAlg{algo:res}.
	For \probSAT{}, 55 hidden solution instances with $n \in \set{50,100,150,200,300,800}$ were used.
	Since \YAL{} can be regarded as a \probSAT{} derivate, we tested \YAL{} with ten hidden solution instances with 300 variables each.

	\subsubsection{Experimental Results and Statistical Evaluation of the Hardness Distribution}
	
	This section aims to explore how an instance's hardness changes
	when logically equivalent clauses are added in the manner described above. To characterize this effect as accurately as possible, studying the ecdf is the most suitable method (recall Definition~\ref{def:ecdf}).

	In the following, we demonstrate that the Johnson SB distribution, in particular, provides an exceptionally accurate description of the runtime behavior, and this is true for all considered problem domains and solvers. The results are so compelling that we ultimately conjecture that the runtimes of \Alfa{}-type algorithms all follow a Johnson SB distribution, regardless of the problem domain.

	To illustrate the accurate description of the runtime behavior mentioned above, we first demonstrate our approach using two base instances. The first one is a factorization instance that SRWA solved. The second instance has a hidden solution and was solved by \probSAT{}. 
	We refer to the first instance as~$A$ and to the second instance as~$B$.
	We estimate the Johnson SB distribution's four parameters using the 5000 data points obtained by applying the maximum likelihood method (see~\cite{AllData}). After that, one can visually evaluate the suitability of the fitted Johnson SB distribution for describing the data by plotting the ecdf and the fitted cdf on the same graph.

	Such a comparison is illustrated in \refFig{fig:cdf-ecdf} for the two instances~$A$ and~$B$. In both cases, no difference between the empirical data of the ecdf and the fitted distribution can be detected visually (the absolute error between the predicted probabilities from the fitted cdf versus the empirical probabilities from the ecdf is minuscule). These two example instances are representative of the behavior of the investigated algorithms. Hardly any deviation could be observed in this plot type for all instances and all algorithms (all data is published under~\cite{AllData}).
	
	\begin{figure}[htb]
		\centering
		\begin{subfigure}[b]{.465\linewidth}
			\centering
			\includegraphics{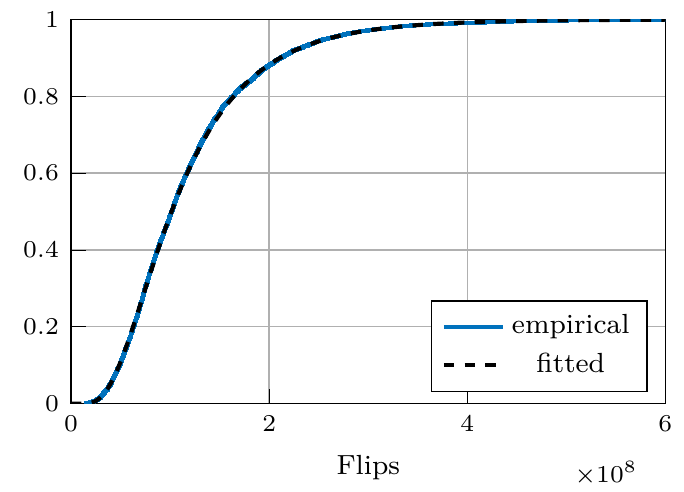}
		\end{subfigure}%
		\hfill%
		\begin{subfigure}[b]{.465\linewidth}
			\centering
			\includegraphics{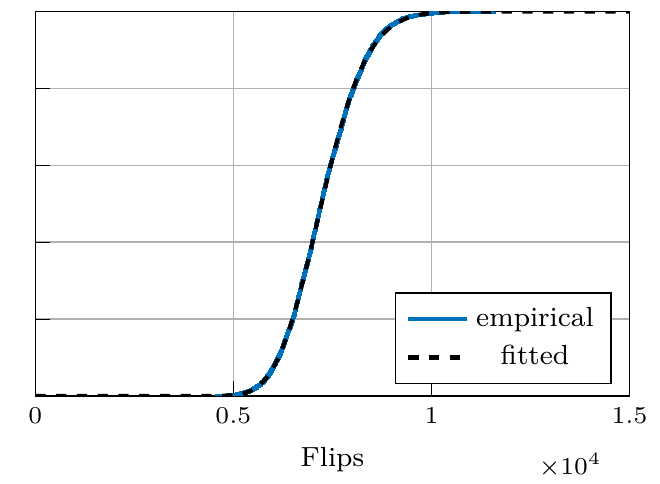}
		\end{subfigure}%
		\caption{%
			The ecdf and fitted cdf of the hardness distribution of instance~$A$~(left) and~$B$~(right).
		}
		\label{fig:cdf-ecdf}
	\end{figure}

	\paragraph{Study of the Left Tail} For the analysis, however, one should not confine oneself to this plot type. Although absolute errors can be observed easily, relative errors are more difficult to detect. Such a relative error may have a significant impact when used for decisions such as restarts. To illustrate this point, suppose that the actual probability of a run of length $\ell$ is $0.0001$. In contrast, the probability estimated based on a fit is $0.001$. As can be seen, the absolute error of $0.0009$ is small, whereas the relative error of $10$ is large. If one were to perform restarts after $\ell$ steps, the actual expected runtime would be ten times greater than the estimated expected runtime. Thus, the erroneous estimate of that probability would have translated into an unfavorable runtime. This example should illustrate the importance of checking the tails of a distribution for errors as well.

	The left tail, \ie the probabilities for very small values, can be checked visually by plotting 
	both axes logarithmically scaled.
	Thereby, probabilities for extreme events (in this case, especially easy instances) can be measured accurately. 
	The two instances~$A$ and~$B$ are being examined in this manner in \refFig{fig:log_cdf-log_ecdf}. As can be observed, the Johnson SB fit accurately predicts the probabilities associated with very short runs. For the other instances, Johnson SB distributions were mostly also able to accurately describe the probabilities for short runs. However, the behavior of the ecdf and the fitted Johnson SB distribution differed very slightly in a few instances.

	\begin{figure}[htb]
		\centering
		\begin{subfigure}[t]{.465\linewidth}
			\centering
			\includegraphics{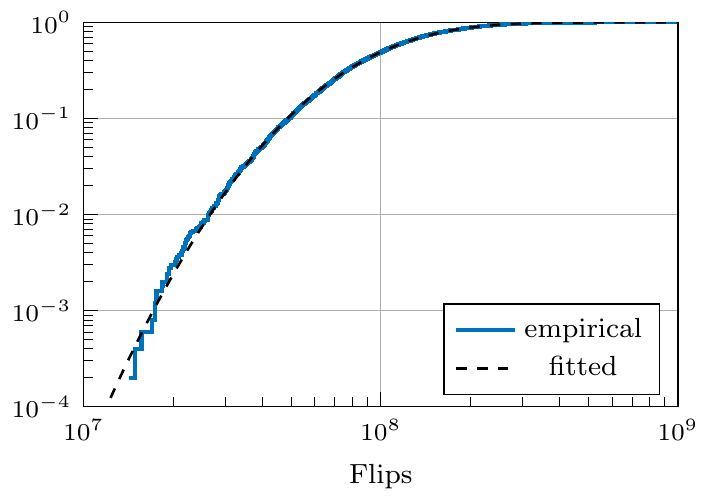}
		\end{subfigure}%
		\hfill%
		\begin{subfigure}[t]{.465\linewidth}
			\centering
			\includegraphics{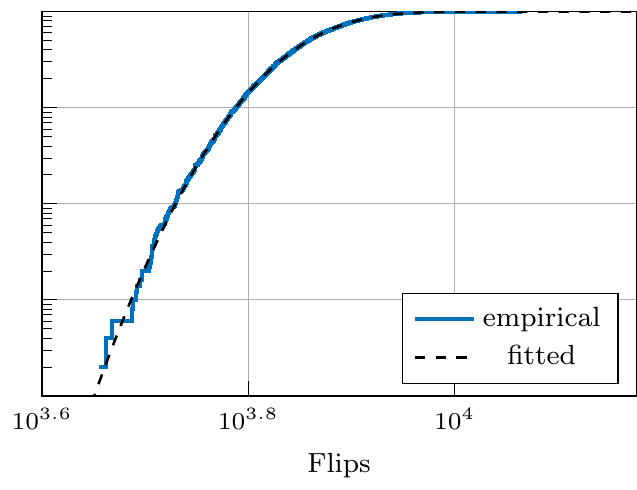}
		\end{subfigure}%
		\caption{%
			Logarithmically scaled ecdf and fitted cdf of instances $A$ (left) and $B$ (right).
		}
		\label{fig:log_cdf-log_ecdf}
	\end{figure}
	
	\paragraph{Study of the Right Tail} The probabilities for particularly hard instances should also be checked. 
	We can easily detect errors in the right tail if we plot the empirical survival function, \ie $\hat{S}_n(x) \coloneqq 1-\hat{F}_n(x)$, and the fitted survival function together on a graph with logarithmically scaled axes.
	\refFig{fig:sf-esf} illustrates this type of plot for the instances~$A$ and~$B$. Here, there is a discernible deviation between $A$ and~$B$. While for~$A$, the Johnson SB fit provides an accurate description of the probabilities for long runs, in the case of~$B$, the empirical survival function seems to approach~$0$ somewhat slower than the Johnson SB estimate. In the vast majority of cases, these extreme value probabilities are accurately reflected by the Johnson SB fit. In most other cases, the empirical survival function approaches~$0$ more slowly than the Johnson SB fit. Thus the likelihood of encountering an exceptionally hard instance is underestimated in these cases.

	\begin{figure}[htb]
		\centering
		\begin{subfigure}[t]{.465\linewidth}
			\centering
			\includegraphics{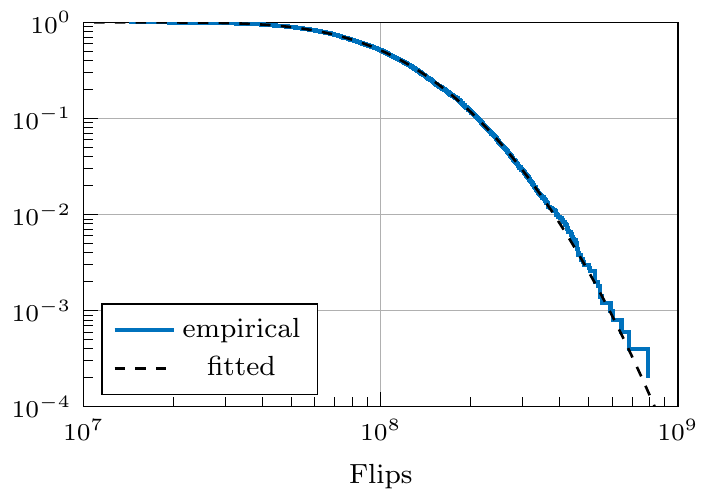}
		\end{subfigure}%
		\hfill%
		\begin{subfigure}[t]{.465\linewidth}
			\centering
			\includegraphics{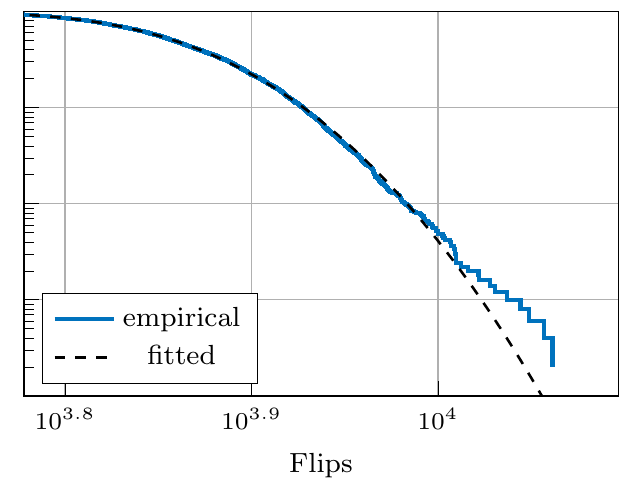}
		\end{subfigure}%
		\caption{%
			Logarithmically scaled empirical survival function and fitted survival function of instances~$A$~(left) and $B$ (right).
		}
		\label{fig:sf-esf}
	\end{figure}

	\paragraph{Goodness-Of-Fit Tests}
	So far, we have discussed the behavior of Johnson SB fits based on this visual inspection. We concluded that Johnson SB distributions seem well suited for describing the data. Next, we concretize this through the $\chi^2$-goodness-of-test that is executed for each instance.
	Subsequently, the probability~$p$ that such a value of the test statistic occurs under the assumption that the data follow a Johnson SB distribution (null hypothesis) is determined. 
	If the fit is poor, then a small $p$-value will occur. 
	We use a sufficiently high $p$-value as a heuristic whether the distribution assumption is reasonable.

	However, there is an obstacle that complicates statistical analysis by this method. As described, each of the $5000$~data points is obtained by first sampling $100$~runtimes of the corresponding instance and then calculating the mean. 
	This means that we do not work with the actual expected values, but only estimates, meaning our data is noisy. 
	If one were to apply the $\chi^2$-test to this noisy data, some cases would be incorrectly rejected, especially if the variance is large. To overcome this limitation, we use a bootstrap test, which is based on a test described by Cheng~\cite{cheng2017non}. This test is presented in \refAlg{algo:bootstrap}. We reject the distribution hypothesis for an instance if it fails the bootstrap test (\mbox{$p<0.05$}).

	\begin{algorithm}[htb]
		\textbf{Input:} (noisy) random sample $\vec{y}=(y_1, y_2, \dots, y_n)$, integer $N$, significance $\alpha\in (0,1)$\\
		\BlankLine
		
		$\hat{\theta}\leftarrow \algoformat{MLE}(\vec{y}, F)$, Johnson SB maximum likelihood estimation, $F$ is the Johnson SB cdf\\
		$\chi^2 \leftarrow \algoformat{ChiSquare}(\vec{y}, \hat{\theta})$, Chi-squared goodness-of-fit test statistic\\
		
		\SetKw{KwWithProb}{with probability}
		\SetKw{KwDo}{do}
		
		\SetKwFunction{Shuffle}{Shuffle}
		
		\For{$j=1$ to $N$}{
			$\vec{z} \leftarrow (z_1, \dots, z_n)$, where all $z_i$ are i.\,i.\,d.\ samples from the fitted Johnson SB
			distribution with parameters $\hat{\theta}$\\
			$\vec{z} \leftarrow \vec{z} + \vec{\xi}$, where $\vec{\xi}$ is sampled from an $n$-dimensional normal distribution\\
			$\hat{\theta}'\leftarrow \algoformat{MLE}(\vec{z}, F)$\\
			$\chi^2_j \leftarrow \algoformat{ChiSquare}(\vec{z}, \hat{\theta}')$\\
		}
		Let $\chi^2_{(1)}\leq \chi^2_{(2)}\leq \dots \leq \chi^2_{(N)}$ be the sorted test statistics.\\
		\lIf{$\chi^2_{( \lfloor (1-\alpha)\cdot N \rfloor )}<\chi^2$}{reject \textbf{else} accept}

		\caption{Bootstrap test for noisy data. The null hypothesis is that each datapoint~$x_i$ comes from an underlying Johnson SB distribution. Furthermore, each datapoint is obscured by additive noise~$\xi_i$. Thus, the only available observations are $\vec{y} = \vec{x} + \vec{\xi}$, \ie noisy data.}
		\label{algo:bootstrap}
	\end{algorithm}

	Briefly summarized, this test simulates how our data points were generated, assuming the null hypothesis. 
	Due to the central limit theorem, it is reasonable to assume that the initial data's sample mean originates from a normal distribution around the true expected value.  We use this assumption in the bootstrap test using a noise signal drawn from a normal distribution with expected value~$0$. Since each data point is the average of $100$~runs, the variance of this normal distribution is determined from the initial data and divided by $100$ (cf.\ central limit theorem).

	We now consider Johnson SB distributions' adequacy for describing \SRWA{} runtimes. The results of the statistical analysis are reported in \refTab{tab:stat} and can be found in~\cite{AllData}.
	The total of $4$~rejected instances may be attributed to so-called \mbox{type~1 errors}. 
	
	\begin{table}[htb]
		\centering
		\caption{Goodness-of-fit results for \Alfa{}+\SRWA{} over various problem domains. The \emph{rejected} row contains the number of instances where the Johnson SB distribution is not a good fit according to the bootstrap test at a significance level of $0.05$.
			To put these results into perspective, the second row contains each domain's total number of instances.
			Out of a total of 230 instances, 4 got rejected.}
		\begin{tabular}{l| c c c c c || c}
			& hidden & different chances & uniform & factoring & coloring & total\\ \hline
			rejected & 1 & 2 & 0 & 0 & 1 & 4\\
			$\#$ of instances & 20 & 120 & 25 & 33 & 32 & 230
		\end{tabular}
		\label{tab:stat}
	\end{table}

	For \probSAT{}, the situation appears to be slightly different. The results are summarized in \refTab{tab:stat_probSAT} and~\cite{AllData}.
	As can be seen, the distribution hypothesis was rejected for $7$ of the $55$~instances.
	This number can no longer be accounted for by type 1 errors at a significance level of~$0.05$.

	\begin{table}[htb]
		\centering
		\caption{Goodness-of-fit results for \Alfa{}+\probSAT{} over various hidden solution instance sizes.	
			The columns refer to the number of variables in the corresponding SAT instances.}
		\begin{tabular}{l| c c c c c c || c}
			number of variables & 50 & 100 & 150 & 200 & 300 & 800 & total \\\hline
			rejected & 0 & 0 & 2 & 1 & 3 & 1 & 7\\
			$\#$ of instances & 10 & 10 & 10 & 10 & 10 & 5 & 55
		\end{tabular}
		\label{tab:stat_probSAT}
	\end{table}

	Lastly, %
	for \YalSAT{} according to the bootstrap test, none of the total $10$~instances got rejected. %

	\paragraph{Distribution Conjectures}
	In summary, the presumption that Johnson SB distributions are the appropriate choice for describing runtimes has been reinforced for \Alfa{}+\SRWA{}. 
	Likewise, the choice of Johnson SB distributions also seems very reasonable for \Alfa{}+\YAL{}.
	This appears to be still plausible for \Alfa{}+\probSAT{}. 
	This leads us to:

	\begin{conjecture}[Strong Conjecture]
		\label{conj:strong}
		The runtime of $\Alfa$ with $\algoformat{SLS} \in \set{\texttt{SRWA}, \texttt{probSAT}, \texttt{YalSAT}}$ follows a Johnson SB distribution.
	\end{conjecture}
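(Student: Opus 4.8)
The plan is to prove the conjecture for all three solvers simultaneously by exploiting the fact that \SRWA{}, \probSAT{}, and \YalSAT{} are all Schöning-type random walks differing only in the rule by which the variable to be flipped is selected from a falsified clause. I would model each solver as a Markov chain on $\{0,1\}^n$ whose one-step transition probabilities are a fixed, known functional of the current clause set, and write the hardness as $H(L) \coloneqq \expectationcond{\W(F \cup \ZV{\Clauses})}{\ZV{\Clauses} = L}$, the expected hitting time of the satisfying set as a function of the realized resolvent set $L$ (this is the value taken by the random variable $\ERL$ of Example~\ref{ex:EFlipsFundLGegebenL}). Let $a \coloneqq \inf_L H(L)$ and $b \coloneqq \sup_L H(L)$ be the extremal attainable hardness values, which I would identify with the location and scale parameters via $\xi = a$, $\lambda = b - a$. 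By the very definition of the Johnson SB law, it then suffices to prove that the logit-transformed, normalized hardness
\[
\gamma + \delta \log\!\left( \frac{H(L) - a}{b - H(L)} \right) \distr \Normal{0}{1},
\]
where the randomness is that of $L$ generated by \refAlg{algo:res}.

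Recall that \refAlg{algo:res} includes each candidate resolvent $R \in \operatorname{Res}_w^{\ast}(F) \setminus F$ independently with probability $p$; let $\ZV{I}_R \in \{0,1\}$ denote the corresponding independent inclusion indicators. The central step is a first-order expansion of the logit of $H$ in these indicators,
\[
\gamma + \delta \log\!\left( \frac{H - a}{b - H} \right) = c_0 + \sum_R c_R \, \ZV{I}_R + \ZV{R}_{\mathrm{err}},
\]
in which $c_R$ captures the marginal effect on the logit-hardness of adding the single resolvent $R$. The guiding intuition is that each added clause perturbs the scoring landscape only locally, so that these perturbations act approximately multiplicatively on $H$ and hence additively after the logarithm; equivalently, under the reparametrization $\gamma = \delta(\ln(b-a) - \mu)$ that makes the Johnson SB law converge to a lognormal (\refLem{lem:LogNEmbeddedModelOfSB}), $\log H$ itself should be an approximate sum of many independent contributions. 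Establishing a Lindeberg or Lyapunov condition for the leading sum $\sum_R c_R \ZV{I}_R$ and showing $\ZV{R}_{\mathrm{err}}$ is negligible in probability then yields asymptotic normality, giving $H \distr \SB{\gamma}{\delta}{\lambda}{\xi}$; the scaling behaviour of \refLem{lem:MultiplicationOfSB} can be used to normalize units of flips without affecting the parameters $\gamma, \delta$.

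The main obstacle, and the reason the statement is advanced only as a conjecture, is to justify this additive decomposition and to control $\ZV{R}_{\mathrm{err}}$ \emph{uniformly across the three selection rules}. Concretely, one must prove perturbation estimates for the expected hitting time of the walk as individual clauses are inserted: bounding $c_R$ (a discrete derivative of $H$) and a second-order remainder requires showing that the hitting time depends smoothly and boundedly on each transition probability, and that a single resolvent changes those probabilities only mildly. For \SRWA{}, where selection is uniform, this reduces to the chain analysis carried out in \refSec{sec:theory} and would serve as the base case. The delicate part is \probSAT{} and \YalSAT{}, whose exponential or polynomial break-count scoring can react sharply to a single added clause; here I would abstract the selection rule as a parameter and derive the perturbation bounds from mild regularity hypotheses—boundedness and Lipschitz dependence of the transition kernel on the clause set—that all three rules share, so that the CLT argument transfers verbatim. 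Verifying these regularity hypotheses for the exponential scoring, where break counts can jump, is where I expect the real difficulty to lie.
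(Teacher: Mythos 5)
You set out to prove a statement that the paper itself deliberately leaves as a \emph{conjecture}: the paper never proves it. What the paper offers instead is (i) empirical support for all three solvers via the bootstrap $\chi^2$ goodness-of-fit methodology of \refSec{sec:evidence}, and (ii) a theoretical justification in \refSec{sec:theory} for \SRWA{} \emph{only}, which decomposes the conditional expected runtime by the law of total expectation into an infinite and a finite case, isolates three conditional-probability random variables $\ZV{P}$, $\ZV{Q}$, $\ZV{R}$ (Proposition~\ref{recap:FiniteCaseDone}, Theorem~\ref{recap:expectation_distribution}), and shows each of them is \emph{asymptotically} Johnson SB, via binomial counts of falsified added clauses (\refLem{lem:log_binomial_normal}, \refCor{cor:binomial_lognormal}) plugged into \refLem{lem:1/1+Y_for_Y_LogN_ist_SB_verteilt}. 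The paper's Main Result is explicitly weaker than the Strong Conjecture---the runtime ``consists of'' approximately SB distributed random variables---and the paper says so. So your proposal should be judged not against a proof in the paper (there is none), but on whether your different route could close the gap the paper leaves open.

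Your route is genuinely different---you attack the hardness $H(L)$ directly by logit-transforming it and expanding in the independent inclusion indicators $\ZV{I}_R$, hoping for a Lindeberg CLT---but it has concrete gaps that prevent it from being a proof. First, the additive expansion of the logit of an expected hitting time in the indicators is postulated, never derived: hitting times are highly non-linear functionals of the transition kernel, clause--clause interactions have no a priori reason to be negligible, and controlling $\ZV{R}_{\mathrm{err}}$ is exactly the unsolved core (as you concede). Second, your claimed ``base case'' overstates what \refSec{sec:theory} achieves: that analysis establishes asymptotic SB-ness of the \emph{components} $\ZV{P}$, $\ZV{Q}$, $\ZV{R}$ appearing in a recursion for the expected runtime, not asymptotic normality of the logit of $H(L)$ itself; there is no closure result making sums, products, or the recursion of SB variables SB, which is precisely why the paper stops short of the conjecture. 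Third, even if every step worked, a CLT yields only \emph{asymptotic} normality, hence only approximate SB-ness---i.e., a statement of the same strength as the paper's Main Result, still short of the exact distributional claim in the conjecture; and your identification $\xi = \inf_L H(L)$, $\lambda = \sup_L H(L) - \inf_L H(L)$ is an additional unproven step, since the support endpoints of a limiting SB law need not coincide with the extreme attainable hardness values. Finally, the uniform treatment of \probSAT{} and \YalSAT{} rests on Lipschitz-type regularity hypotheses for break-count scoring that you flag but do not verify---the paper offers no theoretical analysis at all for these two solvers, so nothing there can be borrowed.
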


	If this 
	is true, then it would be intriguing that one can infer how modifying the base instance affects the hardness of instances.
	Simultaneously, 
	the Johnson SB distribution parameters
	also provide insight into how the hardness of the instance changes. For example, the location parameter~$\xi$ implies an inherent problem hardness that cannot be decreased regardless of the choice of the added clauses. 
	At the same time, $\xi$ also serves as a numerical description for the value of this intrinsic hardness.
	Using Bayesian statistics, it is possible to infer the parameters while the solver is running.
	These 
	estimations
	can, \eg be used to schedule restarts.
	This 
	leads to a scenario similar to that in~\cite{RHK02RestartPolicies}.

	\refCon{conj:strong} is a strong statement.
	However, even a slight deviation of the probabilities, for example, at the left tail, would render the strong conjecture invalid from a strictly mathematical point of view. Notably, the visual analyses revealed that the left tail's behavior, \ie for extremely short runs, is occasionally not accurately reflected by Johnson SB distributions. Conversely, the right tail, \ie the probabilities for particularly long runs, are usually either correctly represented by Johnson SB distributions or, occasionally, the corresponding probability approaches $0$~even more slowly. We, therefore, rephrase our conjecture in a weakened form. Our observations fit a class of distributions known as long-tail distributions defined purely in terms of their behavior at the right tail.

	\begin{definition}[\cite{foss2011introduction}]
		\label{def:long_tail}
		A positive, real-valued random variable $\ZV{X}$ is \introduceterm{long-tailed}, if and only if
		\begin{align*}
			\forall x\in \Rpos:\, \prob{\ZV{X}>x} > 0
			\quad \quad \textnormal{and} \quad \quad
			\forall y\in \Rpos: \lim\limits_{x\rightarrow \infty} \frac{\prob{\ZV{X} > x+y}}{\prob{\ZV{X}>x}} = 1.
		\end{align*}
	\end{definition}

	\begin{conjecture}[Weak Conjecture]
		\label{conj:weak}
		The runtime of \Alfa{} with $\algoformat{SLS} \in \set{\texttt{SRWA}, \texttt{probSAT}, \texttt{YalSAT}}$ is long-tailed.
	\end{conjecture}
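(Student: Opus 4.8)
The goal is to verify the two defining properties in \refDef{def:long_tail} for the runtime random variable $\ZV{T} \coloneqq \W(F \cup \ZV{\Clauses})$, whose randomness stems both from the modification $\ZV{\Clauses}$ sampled in \refAlg{algo:res} and from the internal coin flips of the chosen SLS solver. The plan is to treat the two requirements separately. The positivity requirement $\prob{\ZV{T} > x} > 0$ for all $x \in \Rpos$ is the easier half: for \SRWA{} and \probSAT{} every non-satisfying assignment has, at each step, a strictly positive probability of flipping a variable ``the wrong way'', so from any starting point there is positive probability of a run exceeding any prescribed length, and the support of $\ZV{T}$ is genuinely unbounded above. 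I would make this precise by exhibiting, for each instance in the support of $F \cup \ZV{\Clauses}$, a transient assignment from which the solver returns to a non-satisfying state with probability bounded away from $0$, and then iterating.

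The substantive part is the ratio condition $\Lim{x} \prob{\ZV{T} > x+y}/\prob{\ZV{T} > x} = 1$. Here the natural route is to exploit the structure already isolated in this paper. Since the lognormal law is the embedded limiting model of the Johnson SB law (\refLem{lem:LogNEmbeddedModelOfSB}), and since the empirically fitted parameters of the hardness distribution lie in the near-lognormal regime, I would aim to show that $\log \ZV{T}$ is asymptotically normal in its right tail and then invoke the standard fact that a lognormal survival function satisfies the ratio condition (which follows from $\log(x+y) - \log x \to 0$ together with the tail asymptotics $\overline{\Phi}(t) \sim \varphi(t)/t$ of the Gaussian survival function). Conditioning on the realization $\ZV{\Clauses} = \Clauses$ via the law of total probability reduces the task to controlling the single-instance survival functions $\prob{\W(F \cup \Clauses) > x}$ and then recombining them; for \SRWA{} one could seed this step with the approximate Johnson SB description derived in \refSec{sec:theory}.

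The main obstacle, and the reason the statement is only conjectured, is a genuine tension between this heavy-tailed target and the finite combinatorial nature of the process. For a fixed instance $F \cup \Clauses$ the number of flips of \SRWA{} or \probSAT{} is the absorption time of a finite Markov chain on the assignment hypercube, and such hitting times are exponentially tailed, $\prob{\W(F \cup \Clauses) > x} = \Theta(\rho^x)$ for the dominant sub-stochastic eigenvalue $\rho < 1$; an exponential tail fails the ratio condition, yielding the limit $\rho^y \neq 1$ instead of $1$. Moreover $\operatorname{Res}_w^{\ast}(F) \setminus F$ is finite for fixed width $w$, so $\ZV{\Clauses}$ ranges over a finite set and $\ZV{T}$ is a \emph{finite} mixture of exponentially-tailed laws, which is again exponentially tailed (its tail is governed by the single smallest rate $\min_{\Clauses} \rho_{\Clauses}$). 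If one instead studies the hardness $\ERL = \expectationcond{\W(F \cup \ZV{\Clauses})}{\ZV{\Clauses}}$ of \refCon{conj:strong}, the situation is even starker: it takes only finitely many values and hence $\prob{\ERL > x} = 0$ for large $x$, violating even the positivity requirement. Thus the genuine long tail cannot arise from any single run, nor from a finite mixture at fixed instance size; it can only emerge in a limiting regime where the spread of difficulty rates becomes effectively continuous with mass accumulating at the hardest rate --- the same regime in which the Johnson SB fits approach their lognormal limit. Turning the observed pre-asymptotic heavy tail into a theorem therefore seems to require either relaxing \refDef{def:long_tail} to the range of observable runtimes, or passing to a scaling limit (instance size and support growing together) in which lognormality is exactly attained; pinning down such a limit is exactly the gap that keeps \refCon{conj:weak} a conjecture, and it is the step I expect to be hardest.
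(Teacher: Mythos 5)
There is nothing in the paper to check your argument against: \refCon{conj:weak} is stated as a \emph{conjecture} and the paper offers no proof of it. Its only support is empirical --- the Johnson~SB fits of the hardness data, the observation (\refLem{lem:LogNEmbeddedModelOfSB}) that the fitted parameters lie in the regime where the Johnson~SB law approaches a lognormal law, and the fact that lognormal distributions are long-tailed --- together with the tail plots showing the empirical survival function decaying no faster than the fit. What the paper actually \emph{proves} in this neighbourhood is conditional: \refTheo{theo:long_tail_restarts} shows that \emph{if} the Weak Conjecture holds, restarts are useful. So your proposal should not be judged as failing to reproduce a proof that exists; it does not exist.

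That said, your diagnosis of \emph{why} it cannot be proved as literally stated is correct and is sharper than anything the paper makes explicit. For a fixed realization of $\ZV{L}$, the flip count $\W(F \cup \ZV{L})$ of \SRWA{} is the absorption time of a finite Markov chain and hence has a geometric tail, which fails the ratio condition of \refDef{def:long_tail}; a mixture over the finitely many possible realizations of $\ZV{L}$ inherits the slowest geometric rate and is still light-tailed; and the hardness variable $\expectationcondzv{\W(F\cup \ZV{L})}{\ZV{L}}$ --- which is what the experiments actually fit --- takes only finitely many values, so it even violates the unbounded-support condition. This is consistent with the paper's own hedging (the theoretical analysis in \refSec{sec:theory} only ever yields \emph{asymptotically} or \emph{approximately} Johnson~SB distributed quantities, in the limit where the number of candidate clauses grows), but the paper never states the obstruction as cleanly as you do. Your proposed repair --- passing to a scaling limit in which the Johnson~SB fits attain their lognormal limit, then invoking the Mills-ratio asymptotics of the Gaussian tail to verify the ratio condition --- is a reasonable route and is essentially the informal picture the paper has in mind, but it is not carried out there either. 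In short: no gap relative to the paper, because the paper proves nothing here; your text is an accurate account of why the statement remains a conjecture.
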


	The fact of observing long-tailed ecdfs points towards the presence of a limiting process that is involved. Recall from \refLem{lem:LogNEmbeddedModelOfSB} that the Johnson SB distribution converges towards a lognormal distribution (for $\lambda \to \infty$, while it is sufficient for $\gamma$ to increase at a logarithmic rate \wrt~$\lambda$). This property is called embedded distribution.
	In our experiments, we observed that the parameter~$\gamma$ of the resulting Johnson SB fit is sufficiently high for convergence. 
	The Johnson SB distribution has bounded support, \ie all of its probability mass is concentrated on a finite interval. The endpoints of the support can be derived directly from the parameters.	
	Increasing the number of variables of the formula under consideration will additionally ensure that the density support $(a,b) = (\xi, \xi+\lambda)$ of the fitted distribution will increase since formulas with a higher number of variables will naturally be harder to solve. Hence, $\lambda$ must increase.
	Therefore, the Johnson SB distribution fits approach lognormal distributions. An illustration of this convergence is shown in \refFig{fig:sb_converge}.

	As a case in point for the actual involvement of such a convergence phenomenon, we repeated all tests above for the lognormal distribution. The visual inspection reveals that the lognormal distribution can also fit the data exceptionally well. For \Alfa{}+\SRWA{}, 5 out of 230 instances got rejected by the goodness-of-fit test; for \Alfa{}+\probSAT{}, 7 out of 55 instances got rejected; and for \Alfa{}+\YalSAT{}, 2 out of 10 instances got rejected. 
	Hence, it also seems very reasonable to use a lognormal distribution to describe the hardness.

	It should be noted that lognormal distributions have the long-tail property~\cite{foss2011introduction,nair2020fundamentals}. That is if the Strong Conjecture holds, the Weak Conjecture is implied (at least, after convergence). The reverse is, however, not true. In the next section, we show an important consequence in case the Weak Conjecture holds, \ie when the distribution is long-tailed.

	\subsection{Restarts Are Useful For Long-Tailed Distributions}

	If the runtimes are already lognormally distributed, then restarts are useful~\cite{Lorenz18RuntimeDistributions} in the following sense.

	\begin{definition}
		\label{def:RestartsUseful}
		Let $\ZV{X}$ be a random variable for the runtime of an SLS algorithm~$\mathcal{A}$ on some input.
		For $t > 0$, the algorithm~$\mathcal{A}_t$ is obtained by restarting~$\mathcal{A}$ after time~$t$ if no solution is found. %
		Letting~$\ZV{X_t}$ model the runtime of~$\mathcal{A}_t$, we say that restarts are \introduceterm{useful} if there is a $t > 0$ such that
		$
		\expectation{\ZV{X_t}} < \expectation{\ZV{X}}.
		$
	\end{definition}

	This section extends this result and mathematically proves that restarts are useful even if only the Weak Conjecture holds.
	This will be achieved by showing that restarts are useful for long-tailed distributions.
	For this section, we always implicitly use the
	natural assumption that the cdf~$F$ is continuous and strictly monotonically increasing. In this case, the quantile function~$Q$ is the inverse of~$F$.

	A condition for the usefulness of restarts, as defined in \refDef{def:RestartsUseful}, was proven in~\cite{Lorenz18RuntimeDistributions}. For the following, recall the concept of quantile functions (\refDef{def:cdf_quantilemain}).
	We show the result using the following theorem.

	\begin{definitionandtheorem}[\cite{Lorenz18RuntimeDistributions}]
		\label{theo:sufficient}
		Let $\ZV{X}$ be a positive, real-valued random variable having quantile function~$Q$. Let
		\[
		R(p,\ZV{X}) \coloneqq
		(1-p)\cdot \frac{Q(p)}{\expectation{\ZV{X}}}+\frac{\integralLow[u]{0}{p}{Q(u)}}{\expectation{\ZV{X}}}.
		\]
		Then restarts are useful if and only if there is a quantile $p \in (0,1)$ such that
		\begin{equation*}
			R(p,\ZV{X})<p.
		\end{equation*}
	\end{definitionandtheorem}

	Even if the quantile function~$Q$ and the expected value are unknown, $R(p,\ZV{X})$ can be characterized for large values of~$p$.
	
	\begin{lemma}
		\label{lem:exp_infinite_restart}
		Consider a positive, real-valued random variable~$\ZV{X}$ with pdf~$f$ and quantile function~$Q$ such that~\mbox{$\expectation{\ZV{X}}<\infty$}.
		Also, assume that the limit $\lim_{t\rightarrow\infty} t^2 \cdot f(t)$ exists.
		Then,
		\begin{align*}
			\lim\limits_{p\rightarrow 1} R(p,\ZV{X})
			= \lim\limits_{p\rightarrow 1} \, \left((1-p)\cdot \frac{Q(p)}{\expectation{\ZV{X}}}+\frac{\integralLow[u]{0}{p}{Q(u)}}{\expectation{\ZV{X}}}\right)
			= 1.
		\end{align*}
	\end{lemma}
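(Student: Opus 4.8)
The plan is to split the quantity into its two summands,
\[
R(p,\ZV{X}) = \underbrace{(1-p)\cdot \frac{Q(p)}{\expectation{\ZV{X}}}}_{=:\,T_1(p)} \;+\; \underbrace{\frac{\integralLow[u]{0}{p}{Q(u)}}{\expectation{\ZV{X}}}}_{=:\,T_2(p)},
\]
and to handle the two limits as $p\to 1$ independently. For $T_2$ I would invoke the standard identity $\expectation{\ZV{X}} = \integralLow[u]{0}{1}{Q(u)}$, valid for a positive random variable (it follows from the substitution $u = F(t)$ in $\expectation{\ZV{X}} = \int_0^\infty t\,f(t)\,\D t$, using that $F$ is continuous and strictly increasing, so that $Q = F^{-1}$). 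Since $Q \ge 0$, the map $p \mapsto \integralLow[u]{0}{p}{Q(u)}$ is nondecreasing, and by monotone convergence it tends to $\integralLow[u]{0}{1}{Q(u)} = \expectation{\ZV{X}} < \infty$. Hence $\lim_{p\to 1} T_2(p) = 1$, and the whole problem reduces to proving $\lim_{p\to 1} T_1(p) = 0$, i.e.\ $\lim_{p\to 1}(1-p)\,Q(p) = 0$.

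The second step is to rewrite $(1-p)\,Q(p)$ as a tail quantity. Writing $S(t) = \prob{\ZV{X} > t}$ for the survival function and using $F(Q(p)) = p$, I get $1-p = S(Q(p))$. If the support of $\ZV{X}$ is bounded, then $Q$ is bounded and $(1-p)Q(p) \to 0$ trivially, so I may assume the support is unbounded, whence $Q(p) \to \infty$ as $p \to 1$. Substituting $t = Q(p)$ then turns the desired claim into
\[
\lim_{t\to\infty} t\,S(t) = 0,
\]
so everything now hinges on a tail estimate for $S$.

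The key step, and the main obstacle, is extracting this tail bound from the hypothesis that $\ell := \lim_{t\to\infty} t^2 f(t)$ exists. First I would argue that necessarily $\ell = 0$: since $f \ge 0$ we have $\ell \ge 0$, and if $\ell > 0$ then $f(t) \sim \ell/t^2$, so $t\,f(t) \sim \ell/t$ and $\int^\infty t\,f(t)\,\D t$ diverges, contradicting $\expectation{\ZV{X}} < \infty$. Thus $f(t) = o(1/t^2)$. Given $\varepsilon > 0$, I pick $T$ with $f(u) \le \varepsilon/u^2$ for all $u \ge T$; then for $t \ge T$,
\[
S(t) = \int_t^\infty f(u)\,\D u \;\le\; \varepsilon \int_t^\infty \frac{\D u}{u^2} \;=\; \frac{\varepsilon}{t},
\]
so $t\,S(t) \le \varepsilon$. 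As $\varepsilon$ was arbitrary, $\lim_{t\to\infty} t\,S(t) = 0$, which by the substitution of the previous paragraph gives $\lim_{p\to 1}(1-p)Q(p) = 0$ and hence $\lim_{p\to 1} T_1(p) = 0$.

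Combining the two limits then yields $\lim_{p\to 1} R(p,\ZV{X}) = 0 + 1 = 1$, as claimed. The only genuinely delicate point is the deduction $\ell = 0$, where the \emph{existence} of the limit together with the finiteness of the mean is used in an essential way; the bounded-support case and the remaining manipulations are routine.
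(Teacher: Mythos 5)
Your proof is correct, and while it follows the same overall skeleton as the paper's (split $R(p,\ZV{X})$ into the two summands, show the integral term tends to $1$ and the $(1-p)Q(p)$ term tends to $0$, with the second summand handled by the same substitution $u=F(t)$), the treatment of the first summand genuinely differs. The paper computes $Q'(p)=1/f(Q(p))$ via the inverse function theorem, applies L'Hospital's rule to get $\lim_{p\to 1}(1-p)Q(p)=\lim_{t\to\infty}t^2 f(t)$, and then argues this limit must be $0$ because $\liminf_{t\to\infty}t^2f(t)>0$ would force $\expectation{\ZV{X}}=\infty$. You instead use $1-p=S(Q(p))$ to reduce the claim to $\lim_{t\to\infty}t\,S(t)=0$, deduce $\lim_{t\to\infty}t^2f(t)=0$ by the same divergence argument, and then integrate the resulting tail bound $f(u)\le\varepsilon/u^2$ to get $S(t)\le\varepsilon/t$. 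Your route is more elementary: it avoids L'Hospital and any differentiability of $Q$, needing only that $f$ is the density and that $F$ is continuous and strictly increasing (the paper's standing assumption, which you should cite explicitly since you use $F(Q(p))=p$). The paper's route has the minor advantage of producing the identity $\lim_{p\to1}(1-p)Q(p)=\lim_{t\to\infty}t^2f(t)$, which it does not reuse elsewhere; both proofs use the existence of the limit $\lim_{t\to\infty}t^2f(t)$ together with finiteness of the mean in exactly the same essential way.
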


	For the proof of \refLem{lem:exp_infinite_restart}, we will need the inverse function theorem. This theorem roughly states that a continuously differentiable function $\varphi$ is invertible in a neighborhood of any point~$a$ at which $\varphi^\prime(x)$ does not vanish~\cite{rudin1964principles}.

	\begin{theorem}[Inverse function theorem \cite{bartle2000introduction}]
		\label{thm:IfThm}
		Let $I \subseteq \R$ be an interval and let $\varphi \colon I \to \R$ be continuous and strictly monotone on $I$. 
		Then, there is an inverse function $\psi \coloneqq \varphi^{-1}$ defined on $J \coloneqq \varphi(I)$ that is continuous and strictly monotone.
		If $\varphi$ is differentiable at $a \in I$ and $\varphi^\prime(a) \neq 0$, then $\psi$ is differentiable at $b \coloneqq \varphi(a)$ and 
		\[
		\big(\varphi^{-1}\big)^\prime(b) 
		= \psi^\prime(b)
		= \frac{1}{\varphi^\prime(a)} 
		= \frac{1}{\varphi^\prime\big(\psi(b)\big)}.
		\]
		
	\end{theorem}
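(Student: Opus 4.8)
The plan is to prove the two assertions separately: first the existence, strict monotonicity, and continuity of the inverse $\psi$, and then its differentiability together with the derivative formula. Assume without loss of generality that $\varphi$ is strictly increasing (the decreasing case follows by applying the result to $-\varphi$). Strict monotonicity makes $\varphi$ injective, so the inverse $\psi \colon J \to I$ is well defined on $J \coloneqq \varphi(I)$, and $\psi$ is a bijection with range exactly $I$. Since $\varphi$ is continuous on the interval $I$, the Intermediate Value Theorem guarantees that the image $J$ is again an interval. Strict monotonicity of $\psi$ is then immediate: if $y_1 < y_2$ lie in $J$ and $x_i \coloneqq \psi(y_i)$, then $x_1 < x_2$ must hold, since $x_1 \geq x_2$ would force $\varphi(x_1) \geq \varphi(x_2)$, contradicting $y_1 < y_2$.

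The central step is the continuity of $\psi$, and this is where the argument requires the most care. The key observation is that a monotone function on an interval can have only jump discontinuities, and such a jump would leave a gap in its range. Concretely, I would argue that at any point $b_0$ interior to $J$ the one-sided limits $\psi(b_0^-) = \sup_{y < b_0} \psi(y)$ and $\psi(b_0^+) = \inf_{y > b_0} \psi(y)$ exist by monotonicity (and are finite, being sandwiched by $\psi(b_0)$) and satisfy $\psi(b_0^-) \leq \psi(b_0) \leq \psi(b_0^+)$. If either inequality were strict, the open interval it delimits would consist entirely of values not attained by $\psi$, contradicting the fact that the range $I$ is an interval; hence both one-sided limits equal $\psi(b_0)$ and $\psi$ is continuous there. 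The endpoint cases are handled by the corresponding one-sided version of this argument, yielding continuity of $\psi$ on all of $J$.

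For the differentiability claim, fix $a \in I$ with $\varphi^\prime(a) \neq 0$ and set $b \coloneqq \varphi(a)$. For $y \in J$ with $y \neq b$, put $x \coloneqq \psi(y)$; injectivity gives $x \neq a$, and the difference quotient of $\psi$ may be rewritten as
\[
\frac{\psi(y) - \psi(b)}{y - b}
= \frac{x - a}{\varphi(x) - \varphi(a)}
= \left( \frac{\varphi(x) - \varphi(a)}{x - a} \right)^{-1}.
\]
As $y \to b$, the already-established continuity of $\psi$ forces $x = \psi(y) \to \psi(b) = a$, and differentiability of $\varphi$ at $a$ makes the inner difference quotient tend to $\varphi^\prime(a)$. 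Since $\varphi^\prime(a) \neq 0$, the reciprocal is well defined for $y$ near $b$ and converges, giving $\psi^\prime(b) = 1/\varphi^\prime(a) = 1/\varphi^\prime\big(\psi(b)\big)$.

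The main obstacle is the continuity of $\psi$: the differentiability part reduces to the purely algebraic manipulation above combined with the limit transfer $x \to a$. I expect continuity to be the delicate step precisely because it cannot be bypassed — without it one cannot legitimately pass the limit through $\psi$ inside the difference quotient — and it is the place where the interval structure of the domain, via the Intermediate Value Theorem, is genuinely used.
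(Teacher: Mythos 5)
The paper offers no proof of this statement at all: it is quoted as a known result from Bartle's textbook \cite{bartle2000introduction} and used as a black box in the proof of Lemma~\ref{lem:exp_infinite_restart}, so there is no internal argument to compare against. Your proof is correct and is the standard textbook argument — monotone bijection onto an interval, the gap (range-is-an-interval) argument for continuity of $\psi$, and the difference-quotient reciprocal for the derivative — and it handles the two genuinely delicate points properly: ruling out jumps of $\psi$ via the interval structure of $I$, and using injectivity to ensure $x = \psi(y) \neq a$ so that the limit can legitimately be passed through the reciprocal of the difference quotient.
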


	\begin{proof}[Proof of \refLem{lem:exp_infinite_restart}]
		In the following, let~$F$ and~$f$ be the cdf and pdf of~$\ZV{X}$, respectively (see \refDef{def:cdf_quantilemain}).
		We start by specifying the derivative of~$Q$ with respect to~$p$ as a preliminary consideration.
		From $F = Q^{-1}$ and the application of the inverse function theorem~\ref{thm:IfThm} (letting $\psi = Q$ and $\varphi = F$), it follows that
		\begin{align}
			\label{eq:quantil_deriv}
			Q'(p) \coloneqq \frac{\D}{\D p} Q(p)=\frac{1}{f\bigl(Q(p)\bigr)}. 
		\end{align}
		
		As the first step in our proof, we consider the limit
		of the second summand of~$R(p,\ZV{X})$, \ie of the term~$\integralLow[u]{0}{p}{Q(u)} \big/ \expectation{\ZV{X}}$.
		This value can be determined using integration by substitution with~\mbox{$x=Q(u)$} followed by applying the change of variable method with~\mbox{$p=F(t)$}:
		\begin{align*}
			\lim\limits_{p\rightarrow 1}\frac{\integralLow[u]{0}{p}{Q(u)}}{\expectation{\ZV{X}}}
			=\lim\limits_{p\rightarrow 1}\frac{\integralLow{0}{Q(p)}{x\cdot f(x)}}{\expectation{\ZV{X}}}
			=\lim\limits_{t\rightarrow \infty} \frac{\integralLow{0}{t}{x\cdot f(x)}}{\expectation{\ZV{X}}} =1.
		\end{align*}
		The last equality holds because the	numerator matches the definition of the expected value.

		Next, we examine the limit of $(1-p)Q(p)/\expectation{\ZV{X}}$.
		When $\ZV{X}$ has finite support, \ie when there exists an $x \in \R$ with $F(x) = 1$. Then, \mbox{$\lim_{p \to 1} (1-p) Q(p) = 0$} follows from the definition of the quantile function.

		More care needs to be taken in the case when $F(x) < 1$ holds for all $x \in \R$. In this case, we have $\lim_{p \to 1} Q(p) = \infty$. Hence, to examine $\lim_{p \to 1} (1-p)Q(p)/\expectation{\ZV{X}}$, we apply
		L'Hospital's rule twice and use the change of variable method with \mbox{$p=F(t)$} to obtain
		\begin{align*}
			\lim\limits_{p\rightarrow 1} (1-p)\cdot Q(p)
			= \lim\limits_{p\rightarrow 1} Q(p)^2\cdot f\bigl(Q(p)\bigr)
			= \lim\limits_{t\rightarrow \infty} t^2\cdot f(t).
		\end{align*}
		It is well-known that if \mbox{$\liminf_{t\rightarrow \infty} t^2 \cdot f(t) >0$} were to hold, then the expected value $\expectation{\ZV{X}}$ would be infinite (this statement is, for example, implicitly given in~\cite{foss2011introduction}).
		This would contradict the premise of the lemma; therefore, \mbox{$\liminf_{t\rightarrow \infty} t^2 \cdot f(t) = 0$}.
		Moreover, since, by assumption, $\lim_{t\rightarrow\infty} t^2 \cdot f(t)$ exists, we may conclude that 
		\[
		\lim\limits_{t\rightarrow\infty} t^2 \cdot f(t) = \limsup\limits_{t\rightarrow\infty} t^2 \cdot f(t) = \liminf\limits_{t\rightarrow\infty} t^2 \cdot f(t) = 0. \qedhere
		\]
	\end{proof}
	
	A frequently used tool for describing distributions is the hazard rate function.
	
	\begin{definition}[\cite{rausand2020system}]
		\label{def:hazard_rate}
		Let $\ZV{X}$ be a positive, real-valued random variable having cdf~$F$ and pdf~$f$.
		The \introduceterm{hazard rate function} $r \colon \Rpos \to \Rpos$ of $\ZV{X}$ is given by %
		\begin{align*}
			r(t) \coloneqq \frac{f(t)}{1-F(t)}.
		\end{align*}
	\end{definition}
	
	There is an interesting relationship between the long-tail property and the hazard rate function's behavior.

	\begin{lemma}[\cite{nair2020fundamentals}]
		\label{lem:appendixlemma}
		Let $\ZV{X}$ be a positive, real-valued random variable with hazard rate function~$r$ such that the limit \mbox{$\lim_{t\rightarrow\infty}r(t)$} exists.
		Then, the following statements are equivalent:
		\begin{enumerate}[(a)]
			\item $\ZV{X}$ is long-tailed. \label{proof:long_tail_1}
			\item $\LIM{t} r(t)=0$.
			\label{proof:long_tail_3}
		\end{enumerate}
	\end{lemma}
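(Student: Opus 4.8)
The plan is to translate both conditions into statements about the survival function $S(t) = 1 - F(t) = \prob{\ZV{X} > t}$ and to exploit the standard relationship between the hazard rate and the survival function. Since $\ZV{X}$ is positive we have $S(0) = 1$, and since $f = -S'$ holds almost everywhere, the function $\ln S$ is (wherever $S > 0$) absolutely continuous with $\frac{\D}{\D t}\ln S(t) = -f(t)/S(t) = -r(t)$. Integrating from $0$ yields the representation
\[
S(t) = \exp\!\left( - \int_0^t r(u)\,\D u \right).
\]
Note that the very existence of $r$ as a function on all of $\Rpos$ already forces $S(t) > 0$ for every $t$ (a survival function is non-increasing, so a single zero would make $r$ undefined thereafter); hence the first requirement in the definition of long-tailedness (\refDef{def:long_tail}) holds automatically, and I only need to address the limiting-ratio condition.

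The central computation is that, for every fixed $y \in \Rpos$,
\[
\frac{\prob{\ZV{X} > x + y}}{\prob{\ZV{X} > x}} = \frac{S(x+y)}{S(x)} = \exp\!\left( - \int_x^{x+y} r(u)\,\D u \right).
\]
Set $L \coloneqq \LIM{t} r(t)$, which exists by hypothesis and satisfies $L \ge 0$ because $r$ is nonnegative. A short $\varepsilon$-argument shows $\int_x^{x+y} r(u)\,\D u \to L y$ as $x \to \infty$: given $\varepsilon > 0$, pick $M$ with $|r(u) - L| < \varepsilon$ for all $u > M$; then once $x > M$ the integrand deviates from $L$ by at most $\varepsilon$ on the window $[x, x+y]$, so the integral differs from $Ly$ by at most $\varepsilon y$. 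Passing to the limit in the display therefore gives
\[
\LIM{x} \frac{S(x+y)}{S(x)} = \e^{-Ly}.
\]

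From here both implications are immediate. For \eqref{proof:long_tail_3}~$\Rightarrow$~\eqref{proof:long_tail_1}: if $L = 0$, the limit equals $\e^{0} = 1$ for every $y$, which is exactly the ratio condition of long-tailedness; combined with the automatic positivity of $S$, this shows that $\ZV{X}$ is long-tailed. For \eqref{proof:long_tail_1}~$\Rightarrow$~\eqref{proof:long_tail_3}: long-tailedness forces $\LIM{x} S(x+y)/S(x) = 1$, hence $\e^{-Ly} = 1$; fixing any $y > 0$ (say $y = 1$) and using $L \ge 0$ then gives $L = 0$.

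I expect the only genuinely delicate point to be the justification of the hazard-rate representation of $S$, rather than either implication itself. One must argue that $\ln S$ is absolutely continuous and invoke the fundamental theorem of calculus in the Lebesgue sense, so that $r$ need only be measurable and locally integrable rather than continuous, and one must confirm that the hypotheses indeed guarantee $S(t) > 0$ for all $t$ so that $r$ is defined on all of $\Rpos$. Once this representation is secured, the equivalence reduces to the elementary observation that $\e^{-Ly} = 1$ for some $y > 0$ if and only if $L = 0$.
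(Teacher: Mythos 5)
Your proof is correct. Note that the paper itself does not prove this lemma at all---it is imported verbatim from the cited reference \cite{nair2020fundamentals}---so there is no in-paper argument to compare against; your derivation via the integrated-hazard representation $S(t)=\exp\bigl(-\int_0^t r(u)\,\mathrm{d}u\bigr)$, the limit $\int_x^{x+y} r(u)\,\mathrm{d}u \to Ly$, and the observation that $\mathrm{e}^{-Ly}=1$ for some $y>0$ forces $L=0$ is exactly the standard textbook route. You also correctly flag the two points that need care: that the existence of $r$ on all of $\mathbb{R}^{+}$ already yields the positivity clause of Definition~\ref{def:long_tail}, and that the logarithmic representation of $S$ requires only absolute continuity of $F$ (which is guaranteed since the paper's Definition~\ref{def:hazard_rate} presupposes a pdf). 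The single cosmetic gap is that your $\varepsilon$-argument tacitly assumes the limit $L$ is finite; if one reads ``the limit exists'' as permitting $L=+\infty$, the ratio $S(x+y)/S(x)$ tends to $0\neq 1$, so the equivalence still holds, but a one-line remark to that effect would make the argument airtight.
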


	With the help of these preliminary considerations, we are now ready to show that restarts are useful for long-tailed distributions.
	It should be noted that the conditions of this following theorem are not restrictive since all naturally occurring long-tail distributions satisfy these conditions (see also~\cite{nair2020fundamentals}). To be more precise, to the best of our knowledge, all \emph{named} continuous long-tailed distribution do fulfill the requirements of the following theorem (there are only pathological examples that can be constructed that do not fulfill the requirements).

	\begin{theorem}
		\label{theo:long_tail_restarts}
		Consider a positive, long-tailed random variable~$\ZV{X}$ with continuous pdf~$f$ and hazard rate function~$r$. 
		Also, assume that either \mbox{$\expectation{\ZV{X}}=\infty$} holds or the limits \mbox{$\lim_{t\rightarrow\infty}r(t)$} and \mbox{$\lim_{t\rightarrow\infty} t^2 \cdot f(t)$} both exist.
		In both cases, restarts are useful for $\ZV{X}$.
	\end{theorem}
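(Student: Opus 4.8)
The plan is to split the argument along the two hypotheses, handling the infinite-mean case by a direct computation and the finite-mean case through the criterion of \refTheo{theo:sufficient}.

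First, suppose $\expectation{\ZV{X}} = \infty$. Here I would argue straight from \refDef{def:RestartsUseful}, without ever forming $R(p,\ZV{X})$. Since $\ZV{X}$ is positive with a continuous density, there is a finite cutoff $t$ with $q \coloneqq \prob{\ZV{X} \leq t} > 0$. The restarted algorithm $\mathcal{A}_t$ succeeds in each independent round with probability $q$, so the number of rounds is geometric with mean $1/q$ and each round consumes at most time $t$; hence $\expectation{\ZV{X_t}} \leq t/q < \infty = \expectation{\ZV{X}}$, so restarts are useful by definition.

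Now the substantive case: $m \coloneqq \expectation{\ZV{X}} < \infty$ with both limits existing. I would study $g(p) \coloneqq p - R(p,\ZV{X})$ and aim to exhibit a single quantile $p\in(0,1)$ with $g(p)>0$, which by \refTheo{theo:sufficient} is precisely the usefulness of restarts. The crux is a clean formula for $R'(p)$. Differentiating the defining expression of $R$, the term $(1-p)Q(p)$ contributes $-Q(p)+(1-p)Q'(p)$ while the integral contributes $+Q(p)$, so the $Q(p)$ terms cancel; then \refEq{eq:quantil_deriv} together with $1-p = 1-F(Q(p))$ yields
\[
R'(p) = \frac{1}{m}\,(1-p)\,Q'(p) = \frac{1}{m}\,\frac{1-F(Q(p))}{f(Q(p))} = \frac{1}{m\,r(Q(p))},
\]
so the behavior of $R'$ is governed entirely by the hazard rate evaluated at the quantile.

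The finishing argument combines the two preceding lemmas. Because $\ZV{X}$ is long-tailed, \refDef{def:long_tail} forces $\prob{\ZV{X}>x}>0$ for every $x$, so the support is unbounded and $Q(p)\to\infty$ as $p\to 1^-$. Since $\lim_{t\to\infty}r(t)$ exists and $\ZV{X}$ is long-tailed, \refLem{lem:appendixlemma} gives $\lim_{t\to\infty}r(t)=0$; hence $r(Q(p))\to 0$, so $R'(p)\to\infty$ and $g'(p)=1-R'(p)\to-\infty$ as $p\to 1^-$. Meanwhile \refLem{lem:exp_infinite_restart}, whose hypotheses are met here ($\expectation{\ZV{X}}<\infty$ and $\lim_{t\to\infty}t^2 f(t)$ exists), gives $\lim_{p\to 1}R(p,\ZV{X})=1$, so $g(p)\to 0$. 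The expected obstacle is converting these two limit statements into a strict inequality at a concrete $p$: I would observe that $g'(p)\to-\infty$ makes $g$ strictly decreasing on some interval $(p_0,1)$, and a strictly decreasing function on $(p_0,1)$ lies strictly above its right-hand limit at every interior point. Since that limit is $0$, we get $g(p)>0$ for every $p\in(p_0,1)$, producing the required quantile and completing the proof.
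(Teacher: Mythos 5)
Your proposal is correct and follows essentially the same route as the paper's proof: the same case split, the same derivative computation $R'(p) = (1-p)/\bigl(\expectation{\ZV{X}}\, f(Q(p))\bigr) = 1/\bigl(\expectation{\ZV{X}}\, r(Q(p))\bigr)$, and the same combination of \refLem{lem:appendixlemma} and \refLem{lem:exp_infinite_restart} to conclude. The only departures are cosmetic: in the infinite-mean case you argue directly from \refDef{def:RestartsUseful} via a geometric-rounds bound where the paper simply notes that the left side of the criterion in \refTheo{theo:sufficient} vanishes, and your explicit monotonicity argument at the end spells out the final step that the paper leaves as "these two observations imply."
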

	
	\begin{proof}
		Let $F$ be the cdf and $Q$ the quantile function of $\ZV{X}$.
		According to \refTheo{theo:sufficient}, restarts are useful if and only if
		\begin{align}
			\label{eq:RecallRestartUsefulIneq}
			(1-p)\cdot \frac{Q(p)}{\expectation{\ZV{X}}} + 	\frac{1}{\expectation{\ZV{X}}} \cdot \integralLow[u]{0}{p}{Q(u)} < p
		\end{align}
		for some $p \in (0,1)$. Let us consider two cases.
		
		First, consider the case where the expected value~$\expectation{\ZV{X}}$ is infinite. 
		Let $p \in (0,1)$ be such that $Q(p) < \infty$. Since $\expectation{\ZV{X}} = \infty$, it immediately follows that $Q(p) / \expectation{\ZV{X}} = 0$. Moreover, we also have $\integralLow[u]{0}{p}{Q(u)} \leq p \cdot Q(p) < \infty$. Hence, the left side of \refIneq{eq:RecallRestartUsefulIneq} is zero, and the inequality is obviously satisfied.
		Thus, the statement follows.
		
		For the second case, we assume that $\expectation{\ZV{X}} < \infty$ and that
		both \mbox{$\Lim{t}r(t)$} and \mbox{$\Lim{t} t^2 \cdot f(t)$} exist. \refEq{eq:quantil_deriv} can now be used to calculate the following derivative:
		\begin{align*}
			\frac{\D}{\D p} \Big( R(p,\ZV{X}) - p \Big)
			= \frac{\D}{\D p} \left((1-p)\cdot \frac{Q(p)}{\expectation{\ZV{X}}}+\frac{\integralLow[u]{0}{p}{Q(u)}}{\expectation{\ZV{X}}}-p\right)
			= \frac{1-p}{\expectation{\ZV{X}}\cdot f\bigl(Q(p)\bigr)}-1.
		\end{align*}
		Consider the limit of this expression for $p\rightarrow 1$. Once again, the change of variable method is applied with $p=F(t)$, resulting in
		\begin{align*}
			\lim\limits_{p\rightarrow 1} \frac{1-p}{\expectation{\ZV{X}}\cdot f\bigl(Q(p)\bigr)}-1 
			= \lim\limits_{t\rightarrow \infty} \frac{1-F(t)}{\expectation{\ZV{X}}\cdot f(t)}-1 
			= \lim\limits_{t\rightarrow \infty} \frac{1}{\expectation{\ZV{X}}\cdot r(t)}-1.
		\end{align*}
		By assumption, $\ZV{X}$ has a long-tail distribution, and the limit of \mbox{$\Lim{t}r(t)$} exists. For this reason, \mbox{$\lim_{t\rightarrow \infty}r(t)=0$} follows as a result of \refLem{lem:appendixlemma}. Furthermore, since $\expectation{\ZV{X}} < \infty$ holds, we may conclude that  	
		\begin{align}
			\label{eq:long_tail_exp_infinity}
			\lim\limits_{p\rightarrow 1} \frac{1-p}{\expectation{\ZV{X}}\cdot f\bigl(Q(p)\bigr)}-1 
			= \lim\limits_{t\rightarrow \infty} \frac{1}{\expectation{\ZV{X}}\cdot r(t)}-1 = \infty.
		\end{align}
		The condition from \refTheo{theo:sufficient} can be rephrased in such a way that restarts are useful if and only if 
		\[
		R(p,\ZV{X}) - p < 0.
		\]
		According to \refLem{lem:exp_infinite_restart}, the left-hand side of this inequality approaches $0$ for $p\rightarrow 1$.
		However, as shown in \refEq{eq:long_tail_exp_infinity}, the derivative of $R(p,\ZV{X})-p$ approaches infinity for $p\rightarrow 1$. These two observations imply that there is a $p\in (0,1)$ satisfying $R(p,\ZV{X}) - p < 0$.
		Consequently, restarts are useful for~$\ZV{X}$.
	\end{proof}

	With the help of this theorem, we obtain the following corollary of the Weak Conjecture.

	\begin{conjecture}
		Restarts are useful for \Alfa{} with $\algoformat{SLS} \in \set{\texttt{SRWA}, \texttt{probSAT}, \texttt{YalSAT}}$.
	\end{conjecture}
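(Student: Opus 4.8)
The plan is to read this statement as a conditional corollary of the Weak Conjecture (\refCon{conj:weak}) and to establish it by a single application of \refTheo{theo:long_tail_restarts}. Let $\ZV{X}$ denote the random variable modelling the runtime of \Alfa{} with $\algoformat{SLS} \in \set{\texttt{SRWA}, \texttt{probSAT}, \texttt{YalSAT}}$. Assuming the Weak Conjecture holds, $\ZV{X}$ is positive and long-tailed, so the first and principal hypothesis of \refTheo{theo:long_tail_restarts} is in force for free. The entire argument then reduces to verifying the auxiliary regularity conditions of that theorem: that $\ZV{X}$ possesses a continuous pdf~$f$, and that either $\expectation{\ZV{X}} = \infty$ or both limits $\lim_{t\to\infty} r(t)$ and $\lim_{t\to\infty} t^2 \cdot f(t)$ exist.

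For the continuity requirement I would appeal to the shape of the fitted laws: the Johnson SB density of Remark~\ref{rem:SBParam} and the lognormal density following \refDef{def:threeparameterLogN} are both everywhere continuous on their supports, so $f$ is continuous. For the limit conditions, note that long-tailedness of~$\ZV{X}$ together with the assumed existence of $\lim_{t\to\infty} r(t)$ forces $\lim_{t\to\infty} r(t) = 0$ by \refLem{lem:appendixlemma}. The remaining limit is a routine computation for the lognormal limiting density: the substitution $u = \ln t$ turns $t^2 f(t)$ into a constant times $\exp\bigl(u - (u-\mu)^2 / (2\sigma^2)\bigr)$, whose exponent tends to $-\infty$, so $\lim_{t\to\infty} t^2 \cdot f(t) = 0$. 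With both limits in hand (or, in the degenerate regime $\expectation{\ZV{X}} = \infty$, via the first case of the theorem), \refTheo{theo:long_tail_restarts} yields the conclusion that restarts are useful.

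The point demanding the most care is the tension between the two conjectures: a genuine Johnson SB variable has bounded support $(\xi, \xi + \lambda)$ and hence violates $\prob{\ZV{X} > x} > 0$ for large~$x$, so it is \emph{not} long-tailed in the strict sense of \refDef{def:long_tail}. The resolution is exactly the embedding explained around \refLem{lem:LogNEmbeddedModelOfSB}: the Weak Conjecture must be understood through the limiting lognormal behavior as $\lambda \to \infty$, and it is to this unbounded-support limit that \refTheo{theo:long_tail_restarts} is applied. Once this interpretation is fixed, no delicate estimate remains — as remarked before \refTheo{theo:long_tail_restarts}, every named continuous long-tailed distribution meets the two limit hypotheses — and the corollary follows directly.
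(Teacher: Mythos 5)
Your proposal follows exactly the paper's route: the statement is read as a conditional corollary of the Weak Conjecture (\refCon{conj:weak}), and the paper's entire justification is the one-line remark that it ``follows immediately by \refTheo{theo:long_tail_restarts}.'' Your additional verification of the regularity hypotheses and your observation that the bounded support of a genuine Johnson SB law must be reconciled with long-tailedness via the lognormal embedding are both sound and consistent with the paper's own remarks, but they do not change the argument's structure.
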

	
	If \refCon{conj:weak} is true, then this statement follows immediately by \refTheo{theo:long_tail_restarts}.

	\section{Theoretical Justifications for the Johnson SB Conjecture}
	\label{sec:theory}
	
	\renewcommand{\D}{\partial}
	\renewcommand{\Clauses}{\ensuremath{\ZV{L}}}

	Up to this point, we have established that Johnson SB distributions accurately describe the runtime behavior of the \Alfa{} algorithm as demonstrated for the solvers \SRWA{}, \probSAT{}, and \YalSAT{}. This observation was derived from extensive empirical investigations. This section provides a theoretical justification for why the runtime distributions are Johnson SB distributed in the special case of \SRWA{} as the SLS component of \Alfa{}. We focus on \SRWA{} because this algorithm is best suitable for purely theoretical analyses (as witnessed by the worst-case analysis conducted by Schöning in \cite{Schoening02AProbabilisticAlgorithm}). Furthermore, it is a special case of \probSAT{}.
	For convenience, \SRWA{} is presented in \refAlg{algo:schoening}.
	
	\SetAlgoVlined
	\begin{algorithm}[htb]
		\DontPrintSemicolon
		\SetKwComment{tcp}{$\triangleleft$\,}{}%
		
		\let\oldnl\nl %
		\newcommand{\nonl}{\renewcommand{\nl}{\let\nl\oldnl}} %

		\SetKwInOut{Input}{Input}
		\Input{Boolean formula $G$, \textbf{Promise:} $G \in \SAT$}

		\BlankLine
		\everypar={\nl}
		
		\While{True}{
			
			Choose a random assignment $\alpha$ \label{alg:srwa:assignment} \;
			
			\For(\tcp*[f]{restart mechanism}){$j=1$ \KwTo $t_{\textnormal{restart}}$}{
				\label{alg:srwa:restarttime}
				\everypar={\nl}
				\lIf{$G \alpha = 1$}{\KwRet $\alpha$}
				
				Uniformly at random choose a clause $\clstd \in G$ with $\applyassi{\clstd}{\alpha} = 0$ \label{alg:srwa:clause}\; %
				
				Uniformly at random choose a literal $\litstd \in \clstd$ \label{alg:srwa:literal}\; %
				
				Perform the flip $\alpha \coloneqq \FlipVarInAssi{\litstd = 1}{\alpha}$
				\label{alg:srwa:flip}\; %
				
			}
			
		}
		\caption{Schöning's original random walk algorithm \SRWA{}~\cite{Schoening02AProbabilisticAlgorithm}. Line~\ref{alg:srwa:restarttime} takes care of restarting the search after $t_{\textnormal{restart}}$ flips. Then, a new random assignment is chosen.}
		\label{algo:schoening}
	\end{algorithm}

	\subsection{Proof Overview}
	
	Let us begin by providing an overview of the organization of \refSec{sec:theory}. This overview will also function as a rough proof sketch. The overall idea of the proof is to study which random variables make up the expected runtime (called $\ZV{P}, \ZV{Q}$, and $\ZV{R}$ in the following) and then, subsequently, analyze these random variables. We succeed in showing that these three random variables are indeed approximately Johnson SB distributed. We have provided more details of the proof in the following overview:
	\begin{description}
		\item[\refSec{sec:glossaryofnotation}] To increase readability, we provided an overview of all used notation in a glossary. We also discuss notational convention in this section. %
		\item[\refSec{sec:analysis_rtd}] We start the proof by showing that the expected runtime (as measured in the number of flips), $\ERL$, on the extended instance $F \cup \Clauses$ can be analyzed by separating the expected value into two components. The first component, $\FiniteCase$, takes care of the case where at some point during the run of the algorithm, a clause of $\Clauses$ will be selected by Schöning's algorithm (see line~\ref{alg:srwa:clause} in \refAlg{algo:schoening}). The second component, $\InfCase$, takes care of the case when the formula is solved solely on the initial formula~$F$. We analyze each component in a separate subsection.
		\begin{description}
			\item{\emph{\refSec{sec:TheInfiniteCase}}} We analyze the term $\InfCase$ that gives the expected number of flips in case no clause of $\Clauses$ will ever get selected. We show that this term consists of one random variable~$\ZV{Q}$.
			\item{\emph{\refSec{sec:TheFiniteCase}}} We show that the term $\FiniteCase$ that gives the expected number of flips in the case where $\Clauses$ is involved in the solving process contains three random variables that we call $\ZV{P}, \ZV{Q}$ and $\ZV{R}$ (plus the expected value $\ERL$ after one flip has taken place).
			\item{\emph{\refSec{sec:CombiningBothCases}}} We combine the two cases in one single equation.
		\end{description}
		\item[\refSec{sec: AnalysisOfTheRVs}] We analyze the random variables $\ZV{P}, \ZV{Q}$ and $\ZV{R}$ that we have obtained in the last section and find that each is asymptotically Johnson SB distributed.
	\end{description}

	\subsection{Glossary of Notations and Notational Conventions}
	\label{sec:glossaryofnotation}

	For the following sections, the reader might want to refer to the following glossary to look up terminology that is introduced in the following subsections. %
	Let us also emphasize that all our notations abide by the following convention.

	\begin{convention}
		\label{convention:probs}
		In the next section, we will consider a random set~$\ZV{\Clauses}$ of clauses. For the sake of clarity, we print random variables depending on~$\ZV{\Clauses}$ in bold font (\eg $\ZV{P}, \ZV{Q}, \ZV{R}$), whereas constants are not printed in bold font (\eg $C_1$, $C_2$, or $\probcond{\text{the number of flips to solve } F \text{ is } i}{\SRWA \text{ is started in } \alpha}$, etc.).
		
		We wish to especially highlight that some random variables describe probabilities since they depend on the random set~$\ZV{\Clauses}$. We use a subscript $\ZV{\Clauses}$ and boldface and denote this by~$\ZV{\ProbopZV_{\Clauses}} \left( \cdot \right)$. For example, we will write 
		\[
		\ZV{\ProbopZV}_{\Clauses}(\text{the first time a clause from } \ZV{\Clauses} \text{ is chosen is in iteration } c+1 \mid \SRWA \text{ was started in } \alpha, \: \ZV{\Clauses} ).
		\]
		To correctly interpret this notation, we refer to \refDef{def:KomplizierteWahrscheinlichkeit}. On the other hand, some probabilities are constants not depending on~$\Clauses$, denoted by the notation~$\prob{\cdot}$.
		
		We use the same principle for the expectation operator.
	\end{convention}

	Upon first reading the paper, the reader might skip this glossary, as all definitions are introduced and explained in the main body of the following section. In brackets, we indicate where the full definition can be found.

	\begin{description}
		\item[$F$] \gloss{the original formula \SRWA{} is trying to solve}
		
		\item[$F^\prime$] \gloss{the modified formula, given by $F^\prime \coloneqq F \cup \Clauses$}
		
		\item[$\Clauses$] \gloss{random set of logically equivalent clauses that gets added to $F$}
		
		\item[$\mathbb{L}$] \gloss{set of some logically equivalent clauses \wrt $F$}
		
		\item[$\W(G)$] \gloss{random variable for the runtime in flips of \SRWA{} on instance~$G$ (\refDef{def:NumberOfFlips})}
		
		\item[$\ERL$] \gloss{expected runtime $\expectationcondZV{\W(F\cup \Clauses)}{\Clauses}$ of \SRWA{} (in flips) on the extended instance $F \cup \Clauses$ (\refDef{def:ERLEingefuehrt})}
		
		\item[$A(\alpha)$] \gloss{event that the initial assignment of \SRWA{} is~$\alpha$ (page~\refDef{def:Aalpha})}
		
		\item[$\econd{\cdot}$] \gloss{expected runtime of \SRWA{} (in flips) on $F \cup \Clauses$ subject to the conditions given in brackets (page~\pageref{page:ERLMitBrackets})}
		
		\item[$\ZV{\ProbopZV_{\Clauses}} \left( \cdot \right)$] 
		\gloss{$\coloneqq \probcondZV{\cdot}{\Clauses}$ if $B$ is some event (page~\pageref{page:PLBrackets})}
		
		\item[$\ZV{\ProbopZV}_{\Clauses}(\cdot \! \mid \! B)$]
		\gloss{$\coloneqq \probcondZV{\cdot}{B, \Clauses}$ if $B$ is some event (page~\pageref{page:PLBrackets})}
		
		\item[$\NoSel{c}$] \gloss{event that 
			the first time \SRWA{} selects a clause of~$\ZV{\Clauses}$ in line~\ref{alg:srwa:clause} is in the $(c+1)$-st iteration (\refDef{def:FirstSelAndNeverSel})}
		
		\item[$\NeverSel$] \gloss{event that \SRWA{} never chooses a clause of $\Clauses$ and solves the formula only using clauses from $F$ (\refDef{def:FirstSelAndNeverSel})}
		
		\item[$\Sel{c}$] \gloss{indicator variable being $1$ if and only if a clause in $\Clauses$ gets selected in the $c$-th iteration of \SRWA{} (\refDef{def:ScIndVar})}
		
		\item[$\Wa{c}{\beta}$] 
		\gloss{event that \new{\algoformat{SRWA}} (started from $\alpha$) ends up in assignment~$\beta$ after performing $c$~flips (\refDef{def:FromAlphaToBetaInC})} 
		
		\item[${\UNSATunder[G]{\beta}}$] \gloss{$\coloneqq \setdescr{D \in G}{ \applyassi{D}{\beta} = 0}$ (\refDef{def:UNSAT})}
		
		\item[${\UNSATin[G]{\beta}{x}}$] \gloss{$\coloneqq \setdescr{D \in G}{ \applyassi{D}{\beta} = 0 \text{ and } x \in \Vars{D}}$ (\refDef{def:UNSAT})}
		
		\item[${\UNSATnotin[G]{\beta}{x}}$] \gloss{$\coloneqq \setdescr{D \in G}{ \applyassi{D}{\beta} = 0 \text{ and } x \notin \Vars{D}}$ (\refDef{def:UNSAT})}
		
		\item[$\ZV{V}$] \gloss{$\coloneqq \Vars{\UNSATunder[\Clauses]{\beta}}$, the set of clauses in~$\ZV{\Clauses}$ that are falsified by $\beta$ (\refDef{def:VZV})}
		
		\item[$\Flipa{c}{\beta}{x}$] 
		\gloss{event that in the execution of \SRWA{} (started with assignment~$\alpha$), at the \new{beginning of the} $c$-th iteration, the current assignment is $\beta$. Furthermore, the next flip will flip variable~$x$. (\refDef{def:FlipAlphacbetax})}
		
		\item[$C_1(\alpha,i)$]
		\gloss{$\coloneqq \Probcond{ \W(F) = i }{ \InitAssi{\alpha} }$, \ie some constant independent of $\Clauses$, namely the probability that $F$ gets solved with $i$ flips when \SRWA{} is started from~$\alpha$ (\refLem{lem:LemmaWithSimpleEquation})}
		
		\item[$C_2(\alpha,k,\gamma)$]
		\gloss{$\coloneqq \probcondResize{ \Wa{k-1}{\gamma} }{\bigcap_{j=1}^{k-1} \singleevent{ \Sel{j} = 0 }, \InitAssi{\alpha}}$, \ie some constant independent of $\Clauses$, namely the probability that \SRWA{} takes the random walk from the initial assignment~$\alpha$ to $\gamma$ in $k-1$ flips, under the condition that no clause of $\ZV{\Clauses}$ was touched (\refLem{lem:LemmaWithCZwoUndQ})} 
		
		\item[$C_3(\alpha,c,\beta)$]
		\gloss{$\coloneqq \Probcond{ \Wa{c}{\beta} }{\NoSel{c}, \InitAssi{\alpha}}$, \ie some constant independent of $\Clauses$, namely the probability that \SRWA{} takes the random walk from the initial assignment~$\alpha$ to $\gamma$ in $c$ flips, under the condition that the first clause of~$\ZV{\Clauses}$ gets selected in iteration~$c+1$ (Proposition~\ref{recap:Finite})} 
		
		\item[$C_4(\alpha,c,\gamma)$]
		\gloss{$\coloneqq C_2(\alpha,k+1,\gamma)$, \ie some constant independent of $\Clauses$ (Proposition~\ref{recap:Finite})} 
		
		\item[$\ZV{P}$] \gloss{$\coloneqq \PACond{ \Flipa{c}{\beta}{x} }{ \Wa{c}{\beta}, \NoSel{c} }$, 
			Johnson SB distributed random variable
			(page~\pageref{page:IntroVonZVP})}
		
		\item[$\ZV{Q}$] \gloss{$\coloneqq \PACondResize{ \Sel{k} = 0 }{ \Wa{k-1}{\gamma} }$, Johnson SB distributed random variable (\refLem{lem:LemmaWithCZwoUndQ})} 
		
		\item[$\ZV{R}$] \gloss{$\coloneqq \PACondResize{ \Sel{c+1} = 1 }{ \Wa{c}{\gamma} }$, Johnson SB distributed random variable (\refSec{sec:CaseSk0})}
	\end{description}

	\subsection{Analysis of the Runtime Distribution of the Algorithm}
	\label{sec:analysis_rtd}

	Before beginning the analysis, let us quickly recall the setting of \refSec{sec:evidence}:
	The input of the algorithm \Alfa{} is a Boolean CNF formula~$F$, and the promise that this formula is indeed satisfiable.
	The algorithm then randomly generates some 
	set~$\ZV{\Clauses}$ of clauses that are logical implications of the original formula.
	Then, some SLS solver (in this section, Schöning's random walk algorithm, again abbreviated with \SRWA{}) is called on~$\ZV{F^\prime} \coloneqq F \cup \Clauses$.

	Note that the parameter~$t_{\textnormal{restart}}$ in \refAlg{algo:schoening} controls the restart mechanism of \SRWA{}. In other words, \SRWA{} chooses a new random assignment after every $t_{\textnormal{restart}}$ flips. Initially, we consider the case $t_{\textnormal{restart}} = \infty$ implying that no restarts are performed. This choice simplifies the analysis slightly. However, the observations and results can be extended to the case in which restarts are performed at the cost of sacrificing clarity of exposition. We briefly explain how to adapt our arguments to the case with restarts in \refSec{app:AllowingRestarts} of the appendix.

	The aim at the beginning of this section is to establish that the expected runtime on the extended instance $F \cup \Clauses$ can be analyzed by considering two components: The first component~$\FiniteCase$, taking care of the case where the algorithm will at some point select a clause of~$\Clauses$; and the other component~$\InfCase$ of the case where such a clause is never selected.

	We need the following two definitions to make the notion of expected runtime more precise.

	\begin{definition}
		\label{def:NumberOfFlips}
		We let $\W(G)$ denote the number of flips of \SRWA{} on an instance~$G$ until a satisfying assignment is found. Since \SRWA{} is a Las Vegas algorithm, $\W(G)$ is a random variable.
	\end{definition}
	
	We frequently refer to the number of flips required to find a satisfying assignment as the runtime of \SRWA{}. We aim to find the asymptotic distribution of the expected runtimes of \algoformat{SRWA} when the algorithm is provided with a random set~$\Clauses$ from \Alfa{}. We capture this random choice of additional clauses with the following definition. To understand the term $\expectationcondZV{\W(F\cup \Clauses)}{\Clauses}$ in the definition, the reader might refer back to \refDef{def:KomplizierterErwartungswert} and Example~\ref{ex:EFlipsFundLGegebenL}.
	
	\begin{definition}
		\label{def:ERLEingefuehrt}
		Let $F$ be some SAT instance and let $\mathbb{L}$ be a set of clauses such that for all $R \subseteq \mathbb{L}$, the formula $F \cup R$ is logically equivalent to $F$. Furthermore, let $\Clauses$ be the random subset of $\mathbb{L}$. In the following, the random variable~$\ERL$ denotes the expected runtime $\expectationcondZV{\W(F\cup \Clauses)}{\Clauses}$ of \SRWA{} (in flips) on $F \cup \Clauses$.
		That is,
		\[
		\ERL(R) = \expectationcond{\W(F \cup \ZV{L})}{\ZV{L} = R} = \expectation{\W(F \cup R)} \in \N_{0}, \text{ where } R \subseteq \mathbb{L}.
		\]
	\end{definition}
	
	In this section, $\mathbb{L}$ may be arbitrarily chosen as long as it only contains clauses implied by $F$. In the first part of this section, any stochastic process can create the random set~$\Clauses$. Later on, in \refsec{sec: AnalysisOfTheRVs}, we fix a generating model for~$\Clauses$. 
	As~$\ZV{\Clauses}$ is being randomly selected in \Alfa{}, $\ZV{\Clauses}$ is a random set (denoted in bold). Thus, the expected runtime~$\ERL \colon \mathcal{P}(\mathbb{L}) \to \N_0$ is also a random variable.
	
	\label{page:ERLMitBrackets}
	Furthermore, we frequently work with further restrictions on $\ERL$, such as a condition for the initial assignment. These restrictions result in a conditional expectation. 
	
	\begin{notation}
		\label{not:ERLmitZusatz}
		We denote additional conditions in round brackets, \ie $\ERL(\cdot)$. 
	\end{notation}

	\begin{example}
		Let $\InitAssi{\alpha}$ be the event that the initial assignment chosen by \SRWA{} is some fixed assignment $\alpha$ (cf. line~\ref{alg:srwa:assignment} in \refAlg{algo:schoening}). Then, $\ECond{\InitAssi{\alpha}}$ denotes the conditional expectation $\expectationcondZV{\W(F \cup \Clauses)}{\InitAssi{\alpha}, \Clauses}$.
	\end{example}

	We begin our analysis of the runtime distribution of \Alfa{} by applying the law of total expectation (LTE) to~$\ERL$ when conditioning on the randomly chosen initial assignment.
	
	\begin{definition}
		\label{def:Aalpha}
		Let $A(\alpha)$ denote the event that the initial starting assignment chosen in line~\ref{alg:srwa:assignment} of \SRWA{} is $\alpha$.
	\end{definition}

	Using this definition, the LTE yields
	\begin{align}
		\label{eq:FirstApplicationOfLTE}
		\ERL = \sum_{\alpha \in \assi} \Prob{\InitAssi{\alpha}} \cdot \ECond{\InitAssi{\alpha}},
	\end{align}
	where $\ECond{\InitAssi{\alpha}}$ denotes the expected runtime of \SRWA{} (in flips) on a given extended problem instance~\mbox{$F \cup \Clauses$} under the condition that the algorithm picks $\alpha$ as the initial assignment.
	Letting $n \coloneqq |\!\operatorname{Vars}(\ZV{F^\prime})| = |\!\operatorname{Vars}(F)|$, one can
	notice that $\prob{\InitAssi{\alpha}} = \frac{1}{2^n}$ for all $\alpha \in \assi$, \ie this probability is independent of the random set~$\ZV{\Clauses}$.

	In the following, we concentrate on analyzing the expression~$\ECond{\InitAssi{\alpha}}$ appearing in \refEq{eq:FirstApplicationOfLTE}. The following definition is used to distinguish between the cases of whether the algorithm will use a clause of $\Clauses$ in the solution or not.
	
	\begin{definition}
		\label{def:FirstSelAndNeverSel}
		Let $\NoSel{c}$ denote the event that the first time \SRWA{} selects a clause of~$\ZV{\Clauses}$ in line~\ref{alg:srwa:clause} is in iteration~$c+1$ (\ie in iteration $1, \dots, c$ a clause of $F$ is chosen).
		Similarly, we let $\NeverSel$ denote the event that \SRWA{} never chooses a clause of~$\ZV{\Clauses}$, \ie the algorithm solves the instance~$\ZV{F^\prime}$ using only clauses from~$F$. 
	\end{definition}

	\label{page:PLBrackets}
	Similar to $\ERL$, denoting an expected runtime depending on~$\ZV{\Clauses}$, we also deal with probabilities depending on $\Clauses$. Following Convention~\ref{convention:probs}, we write $\Clauses$ as subscript and use bold font: 
	
	\begin{notation}
		\label{not:PLmitZusatz}
		The notation $\ZV{\ProbopZV}_{\Clauses}(\cdot)$ 
		will be used as a
		shorthand for $\probcondZV{\cdot}{\Clauses}$, and $\ZV{\ProbopZV}_{\Clauses}(\cdot \! \mid \! B) \coloneqq \probcondZV{\cdot}{B, \Clauses}$ if $B$ is some event.
	\end{notation}
	
	\begin{example}
		Again, since $\Clauses$ is a random set, $\ZV{\ProbopZV}_{\Clauses}(\cdot)$ is a random variable.
		For example, we will write 
		\[
		\PACond{\NoSel{c}}{}
		\]
		as a shorthand for the conditional probability
		\[
		\probcondZV{\NoSel{c}}{A(\alpha), \Clauses},
		\]
		\ie for the (random variable) probability that \SRWA{} picks a clause from~$\ZV{\Clauses}$ in iteration~$c+1$ for the first time, given that it was started from~$\alpha$, and dependent on the concrete realization of the random set~$\ZV{\Clauses}$. 
		Similarly, as introduced above, the notation 
		\[
		\EACond{\NoSel{c}}
		\]
		should be understood as 
		\[
		\expectationcondZV{\W(F\cup \Clauses)}{\NoSel{c}, A(\alpha), \Clauses}.
		\]
		The definitions of $\ERL(\cdot)$ and $\ZV{\ProbopZV}_{\Clauses}(\cdot)$ in Notation~\ref{not:ERLmitZusatz} and~\ref{not:PLmitZusatz} are flexible enough to add multiple events, separated by commas.
	\end{example}

	Now, applying the LTE again to the respective terms $\ECond{\InitAssi{\alpha}}$ in sum~\eqref{eq:FirstApplicationOfLTE} 
	and conditioning on the iteration~$c$ in which a clause of~$\ZV{\Clauses}$ gets selected for the first time yields
	\begin{align}
		\label{eq:SplittingERAlphaWithLTE}
		\ECond{\InitAssi{\alpha}} 
		= \sum_{c\in \N_0} &\PACond{\NoSel{c}}{} \cdot \EACond{\NoSel{c}}\\
		\label{eq:SplittingERAlphaWithLTEsecondline}
		+\:
		&\PACond{\NeverSel}{} \cdot \EACond{\NeverSel}.
	\end{align}

	For clarity of exposition, we analyze each line of this sum in a different case.
	We start with line~\eqref{eq:SplittingERAlphaWithLTEsecondline}, \ie the case in which \SRWA{} never selects a clause from~$\ZV{\Clauses}$ (called the \introduceterm{infinite case}~$\InfCase$). Its treatment can be found in \refSec{sec:TheInfiniteCase}.
	The case of line~\eqref{eq:SplittingERAlphaWithLTE}, the \introduceterm{finite case}~$\FiniteCase$, uses a similar but more involved argument. For this reason, we only mention the result of the analysis in \refSec{sec:TheFiniteCase} and defer the analysis to \refApp{app:FiniteCaseHandled}. We then proceed to present the combined result in \refSec{sec:CombiningBothCases}, showing which random variables (later called $\ZV{P}$, $\ZV{Q}$, and $\ZV{R}$) make up the above expression. These random variables will be in such an elementary form that it is easy for us to analyze their distribution.

	\subsubsection{The Infinite Case $\NeverSel$}%
	\label{sec:TheInfiniteCase}

	As announced, this section will treat the analysis of the term
	\[
	\InfCase
	\coloneqq \PACond{\NeverSel}{} \cdot \EACond{\NeverSel},
	\]
	\ie the term in \refEq{eq:SplittingERAlphaWithLTEsecondline}.
	We will state the goal of this section in an informal form before we begin with the detailed analysis (a detailed version can be found in Proposition~\ref{recap:InfiniteCaseDone}).

	\begin{recapitulation}[The infinite case $\NeverSel$, informal]
		\label{prop:InfiniteCaseInformal}
		It holds that
		\begin{align*}
			\InfCase
			=
			\PACond{\NeverSel}{} \cdot \EACond{\NeverSel}
			=
			\sum_{i \in \N_0} 
			\left\{
			C_1 %
			\cdot
			\prod_{k=1}^{i}
			\left( \sum_{ \falsassi{\gamma}{F} }	
			C_2 %
			\cdot \ZV{Q} \right)
			\cdot i
			\right\}, %
		\end{align*}
		with constants $C_1, C_2$ not depending on $\ZV{\Clauses}$,
		and the random variable $\ZV{Q} \coloneqq \ZV{Q}(\alpha,k,\gamma)$ that, roughly speaking, tells us how likely it is that we select a clause of the set $\Clauses$ in the $k$-th iteration of \SRWA{}, given the knowledge of the previous random walk path (from $\alpha$ to $\gamma$) over the last $k-1$ iterations. The random variable~$\ZV{Q}$ is ``elementary enough'' such that we can analyze its distribution in \refSec{sec:analysis_Q}.
	\end{recapitulation}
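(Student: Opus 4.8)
The plan is to dismantle $\InfCase$ into the stated form using the definition of conditional expectation, the chain rule for probabilities (\refTheo{thm:ChainRule}), the law of total probability, and the memorylessness of \SRWA{}'s clause-selection step. First I would unfold the conditional expectation by its definition: since the runtime is a non-negative integer-valued random variable, $\EACond{\NeverSel} = \sum_{i \in \N_0} i \cdot \PACond{\W(F \cup \Clauses) = i}{\NeverSel}$. Multiplying by the ($i$-independent) factor $\PACond{\NeverSel}{}$ and applying the multiplication rule collapses each summand into a joint probability, so that
\[
\InfCase = \PACond{\NeverSel}{} \cdot \EACond{\NeverSel} = \sum_{i \in \N_0} i \cdot \PACond{\W(F \cup \Clauses) = i, \NeverSel}{}.
\]
It then remains to match the joint probability inside the sum with $C_1 \cdot \prod_{k=1}^{i} \bigl( \sum_{\falsassi{\gamma}{F}} C_2 \cdot \ZV{Q} \bigr)$.

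Next I would note that, conditioned on a run of length exactly $i$, the event $\NeverSel$ coincides with $\bigcap_{k=1}^{i} \singleevent{\Sel{k} = 0}$, since such a run performs precisely $i$ clause-selection steps. The crucial structural observation is that the transition of \SRWA{} out of an assignment $\gamma$ depends only on $\gamma$: a falsified clause is drawn uniformly from $\UNSATunder[F \cup \Clauses]{\gamma}$ (line~\ref{alg:srwa:clause}) and then a literal uniformly from it (line~\ref{alg:srwa:literal}). Conditioning that step on $\Sel{k} = 0$ reweights it to a uniform draw from $\UNSATunder[F]{\gamma}$, which is exactly the transition \SRWA{} would make on $F$ alone. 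Hence, conditioned on $\bigcap_{k=1}^{i} \singleevent{\Sel{k} = 0}$, the first $i$ steps of the walk on $F \cup \Clauses$ are distributed identically to the walk on $F$; splitting the joint probability by the multiplication rule therefore yields
\[
\PACond{\W(F \cup \Clauses) = i, \NeverSel}{} = \PACond{\bigcap_{k=1}^{i} \singleevent{\Sel{k} = 0}}{} \cdot \Probcond{\W(F) = i}{\InitAssi{\alpha}},
\]
where the second factor is the $\Clauses$-independent constant $C_1 = C_1(\alpha, i)$.

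It remains to expand the selection probability into the stated product. Here I would apply the chain rule (\refTheo{thm:ChainRule}) to the intersection, obtaining $\PACond{\bigcap_{k=1}^{i}\singleevent{\Sel{k}=0}}{} = \prod_{k=1}^{i} \PACondResize{\Sel{k}=0}{\bigcap_{j=1}^{k-1} \singleevent{\Sel{j}=0}}$, and then resolve each factor by the law of total probability, conditioning on the assignment $\gamma$ reached after $k-1$ flips. Because the walk has not yet terminated, this assignment falsifies $F$, so the sum ranges over $\falsassi{\gamma}{F}$, and the first factor of each summand is precisely $C_2 = C_2(\alpha, k, \gamma)$. Invoking the memorylessness once more lets the remaining factor drop its conditioning on the path history, identifying it with $\ZV{Q} = \ZV{Q}(\alpha, k, \gamma) = \PACondResize{\Sel{k} = 0}{\Wa{k-1}{\gamma}}$. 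Substituting the resulting product into the two displays above delivers the asserted identity.

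I expect the main obstacle to be the rigorous justification of the two memorylessness steps. The first is that conditioning on never selecting a clause of $\Clauses$ makes the walk on $F \cup \Clauses$ coincide in distribution with the walk on $F$, which is what makes the factor $C_1$ a genuinely $\Clauses$-free solving probability; the second is that the step-$k$ non-selection probability depends only on the current assignment, which is what allows $\ZV{Q}$ to discard the history $\bigcap_{j=1}^{k-1}\singleevent{\Sel{j}=0}$. Both follow from the uniform, history-independent choices in lines~\ref{alg:srwa:clause}--\ref{alg:srwa:literal} of \refAlg{algo:schoening}, but some care is needed in tracking the normalization when passing from the $F \cup \Clauses$-measure to the $F$-measure, and in handling the boundary case $i = 0$ (empty product, zero contribution) as well as runs that would terminate before step $i$.
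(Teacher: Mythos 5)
Your proposal is correct and takes essentially the same route as the paper: the paper likewise unfolds $\EACond{\NeverSel}$ by the law of total expectation over $\{\W(\ZV{F^\prime})=i\}$, collapses the product into the joint probability, rewrites $\NeverSel$ as $\bigcap_{j=1}^{i}\singleevent{\Sel{j}=0}$ given runtime~$i$, applies the chain rule, and then resolves each factor by the law of total probability over the assignment reached after $k-1$ flips, using the memorylessness of the clause selection to extract $C_1$, $C_2$, and $\ZV{Q}$. The two memorylessness steps you flag as the main obstacles are exactly the points the paper addresses in Lemmas~\ref{lem:LemmaWithSimpleEquation} and~\ref{lem:LemmaWithCZwoUndQ}.
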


	To make the proof of Proposition~\ref{prop:InfiniteCaseInformal} more digestible, we have split it into several steps, each containing one or multiple lemmas of what will be achieved in this step.

	\paragraph{Step 1: Reduction From Probability and Expectation to Probabilities Only}
	In the first step, we rewrite~$\InfCase$
	in a form that only contains probabilities. 
	These
	probabilities will then, in turn, be analyzed in a later step.
	For the formulation of the lemma in this step, we would like to remind the reader that $\ZV{F^\prime}$ refers to the modified formula, \ie $\ZV{F^\prime} \coloneqq F \cup \ZV{\Clauses}$.

	Additionally, we want to emphasize that the definitions of $\ERL(\cdot)$ and $\ZV{\ProbopZV}_{\Clauses}(\cdot)$ in Notation~\ref{not:ERLmitZusatz} and~\ref{not:PLmitZusatz} are flexible enough to add multiple events, separated by commas.
	For example, $\PACond{\W(\ZV{F^\prime})=i}{\NeverSel}$ denotes the probability that \SRWA{} takes $i$~flips to solve~$\ZV{F^\prime}$ under the condition that it was started with~$\alpha$ and never selects a clause from~$\ZV{\Clauses}$.
	Similarly, $\EACond{\NeverSel, \W(\ZV{F^\prime})=i}$ denotes the expected runtime of \SRWA{} subject to the conditions listed in brackets, \ie under the assumptions that no clause of~$\ZV{\Clauses}$ gets selected, $\ZV{F^\prime}$ is solved in $i$~flips, and the random walk started in assignment~$\alpha$.

	\begin{lemma}%
		\label{lem:StepOneLemma}
		It holds 
		\[
		\InfCase
		= \sum_{i \in \N_0} \PACond{\NeverSel}{} \cdot \PACond{\W(\ZV{F^\prime})=i}{\NeverSel} %
		\cdot
		i.
		\]
	\end{lemma}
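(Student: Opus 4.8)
The plan is to recognize that Lemma~\ref{lem:StepOneLemma} is, at heart, just the definition of the expectation of a non-negative integer-valued random variable, once we isolate the factor $\PACond{\NeverSel}{}$, which does not interact with the summation index. I would start from the definition $\InfCase = \PACond{\NeverSel}{} \cdot \EACond{\NeverSel}$ and observe that on the right-hand side of the claimed identity the factor $\PACond{\NeverSel}{}$ is constant in~$i$. Since (by \refDef{def:KomplizierteWahrscheinlichkeit}) it is a single random variable depending only on~$\ZV{\Clauses}$, I would pull it out of the sum, reducing the lemma to the equality of random variables
\[
\EACond{\NeverSel} = \sum_{i \in \N_0} i \cdot \PACond{\W(\ZV{F^\prime})=i}{\NeverSel}.
\]

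Because both sides are conditional expectations and probabilities \emph{given} the random set~$\ZV{\Clauses}$, I would establish this equality pointwise, that is, for every fixed realization $\ZV{\Clauses} = R$ (with $R \subseteq \mathbb{L}$). For such a realization, both sides collapse to ordinary conditional quantities over the internal coin flips of \SRWA{} with the formula frozen to $F \cup R$: by \refDef{def:KomplizierterErwartungswert}, the left side equals $\expectationcond{\W(F \cup R)}{\NeverSel, \InitAssi{\alpha}, \ZV{\Clauses} = R}$, while the $i$-th summand on the right equals $i \cdot \probcond{\W(F \cup R) = i}{\NeverSel, \InitAssi{\alpha}, \ZV{\Clauses} = R}$. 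Abbreviating $\ZV{Y} \coloneqq \W(F \cup R)$ and letting $E$ denote the conditioning event, the required statement is precisely the elementary identity $\expectationcond{\ZV{Y}}{E} = \sum_{i \in \N_0} i \cdot \probcond{\ZV{Y} = i}{E}$ for a non-negative integer-valued random variable (valid with both sides possibly infinite, and using that $\W$ ranges over~$\N_0$ as recorded in \refDef{def:NumberOfFlips}). As this holds for each realization~$R$, the two random variables coincide, and remultiplying by $\PACond{\NeverSel}{}$ recovers the claim.

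The argument contains no genuinely hard analytic step; the one point that requires care — and the closest thing to an obstacle — is the bookkeeping around conditioning on the \emph{random} set~$\ZV{\Clauses}$. One must consistently read $\PACond{\NeverSel}{}$ and $\EACond{\NeverSel}$ as functions of~$\ZV{\Clauses}$ in the sense of \refDef{def:KomplizierteWahrscheinlichkeit} and \refDef{def:KomplizierterErwartungswert}, and verify the identity realization-by-realization rather than treating it as a numerical equality. Finally, I would dispose of the degenerate realizations on which $\PACond{\NeverSel}{} = 0$ (so that $\EACond{\NeverSel}$ is undefined) by the same convention already adopted for the LTP and LTE in the preliminaries: the product is interpreted as~$0$ on both sides, so such realizations contribute nothing and may be omitted.
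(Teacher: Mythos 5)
Your proposal is correct and follows essentially the same route as the paper: the paper applies the law of total expectation to $\EACond{\NeverSel}$ with the partition $\bigl(\singleevent{\W(\ZV{F^\prime})=i}\bigr)_{i\in\N_0}$ and then observes that $\EACond{\NeverSel, \W(\ZV{F^\prime})=i} = i$, which is exactly the elementary identity $\expectationcond{\ZV{Y}}{E} = \sum_{i} i \cdot \probcond{\ZV{Y}=i}{E}$ you invoke. Your additional care about reading the conditional quantities realization-by-realization in $\ZV{\Clauses}$ is consistent with the paper's conventions and does not change the argument.
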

	
	\begin{proof}
		By applying the LTE to the factor $\EACond{\NeverSel}$ when conditioning on the event $\W(\ZV{F^\prime}) = i$, one obtains
		\begin{align*}
			\InfCase 
			&\stackrel{\phantom{\text{LTE}}}{=} \PACond{\NeverSel}{} \cdot \EACond{\NeverSel} \\
			&\stackrel{\text{LTE}}{=} \sum_{i \in \N_0}
			\Big[
			\PACond{\NeverSel}{} \cdot \PACondResize{\W(\ZV{F^\prime})=i}{\NeverSel} \\
			&\phantom{\stackrel{\text{LTE}}{=} \sum_{i \in \N_0}
				\Big[}
			\cdot
			\EACond{\NeverSel, \W(\ZV{F^\prime})=i}
			\Big].
		\end{align*}
		The last factor of the above sum can be expressed in a simpler form:
		\[
		\EACond{\NeverSel, \W(\ZV{F^\prime})=i} = i.
		\]
		This equation holds because we have already conditioned on the event $\singleevent{\W(\ZV{F^\prime})=i}$, \ie the runtime on~$\ZV{F^\prime}$ being~$i$, and the condition that the algorithm never selects a clause from the random set~$\ZV{\Clauses}$. Hence, the lemma follows.
	\end{proof}

	\paragraph{Step 2: Analysis of the Remaining Two Probabilities With the Help of a Selector Variable and the Chain Rule}
	As our next step, we will analyze the product of the first two factors in Lemma~\ref{lem:StepOneLemma}, \ie the expression 
	\[
	\Expression \coloneqq \PACond{\NeverSel}{} \cdot \PACond{\W(\ZV{F^\prime})=i}{\NeverSel}.
	\]
	For this, we need the following definition telling us if in the $c$-th iteration of~\SRWA{} a clause of $\Clauses$ gets selected or not.
	
	\begin{definition}
		\label{def:ScIndVar}
		Let $\Sel{c}$ be the indicator variable defined as follows: 
		\[
		\Sel{c} \coloneqq \twopartdef{1}{\text{a clause in } $\Clauses$ \text{ gets selected in the } $c$\text{-th iteration of \SRWA{}}}{0}{\text{otherwise}.}
		\]
	\end{definition}
	
	With this definition in place, we can present the lemma for this step.
	
	\begin{lemma}%
		We have
		\begin{align}
			\notag
			\Expression 
			&= %
			\PACond{\NeverSel}{} \cdot \PACondResize{\W(\ZV{F^\prime})=i}{\NeverSel} \\
			\label{eq:ResultOfExpressionCalElone}
			&=
			\PACondResize{\W(\ZV{F^\prime}) = i}{ \bigcap_{j=1}^{i} \singleevent{ \Sel{j} = 0 } } \\
			\label{eq:ResultOfExpressionCalEltwo}
			&\phantom{=} \: \cdot \prod_{k=1}^{i} \PACondResize{\Sel{k} = 0}{ \bigcap_{j=1}^{k-1} \singleevent{ \Sel{j} = 0 } }.
		\end{align}
	\end{lemma}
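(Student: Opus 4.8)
The plan is to collapse the product $\Expression$ into a single joint conditional probability, rewrite the conditioning event $\NeverSel$ in terms of the selector variables $\Sel{j}$, and then re-expand the result via the chain rule so that the factors line up with the two lines \eqref{eq:ResultOfExpressionCalElone} and \eqref{eq:ResultOfExpressionCalEltwo} of the claim. Throughout, every probability carries the implicit conditioning on $\InitAssi{\alpha}$ and on the random set $\Clauses$ encoded in the notation $\PACond{\cdot}{}$ (cf.\ Notation~\ref{not:PLmitZusatz}); since this background is fixed, the whole argument is ordinary conditional-probability bookkeeping relative to it.

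First I would invoke the elementary definition of conditional probability to merge the two factors of $\Expression$. Because $\PACond{\W(\ZV{F^\prime})=i}{\NeverSel}$ is the probability of $\singleevent{\W(\ZV{F^\prime})=i}$ conditioned additionally on $\NeverSel$, multiplying by $\PACond{\NeverSel}{}$ telescopes to the joint probability $\Expression = \PACond{\NeverSel, \W(\ZV{F^\prime})=i}{}$. The crux is then the set-theoretic identity
\[
\NeverSel \cap \singleevent{\W(\ZV{F^\prime})=i} = \Big( \bigcap_{j=1}^{i} \singleevent{\Sel{j}=0} \Big) \cap \singleevent{\W(\ZV{F^\prime})=i}.
\]
This follows from the semantics of \SRWA{} in \refAlg{algo:schoening}: by \refDef{def:NumberOfFlips}, on the event $\singleevent{\W(\ZV{F^\prime})=i}$ the algorithm selects a clause in line~\ref{alg:srwa:clause} exactly in iterations $1,\dots,i$ and in no later iteration, so the selectors $\Sel{j}$ with $j>i$ are automatically $0$ once a satisfying assignment has been reached. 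Hence, restricted to the event $\singleevent{\W(\ZV{F^\prime})=i}$, the event $\NeverSel$ (that no clause of $\Clauses$ is \emph{ever} chosen) is equivalent to the finite conjunction $\bigcap_{j=1}^{i}\singleevent{\Sel{j}=0}$.

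It then remains to factor this joint probability. I would first peel off the runtime event using the definition of conditional probability, obtaining $\Expression = \PACondResize{\W(\ZV{F^\prime})=i}{\bigcap_{j=1}^{i}\singleevent{\Sel{j}=0}} \cdot \PACondResize{\bigcap_{j=1}^{i}\singleevent{\Sel{j}=0}}{}$, whose first factor is exactly line~\eqref{eq:ResultOfExpressionCalElone}. Applying the chain rule \refTheo{thm:ChainRule} to the remaining probability of the nested intersection (with the empty intersection at $k=1$ being the whole sample space) yields precisely the product $\prod_{k=1}^{i}\PACondResize{\Sel{k}=0}{\bigcap_{j=1}^{k-1}\singleevent{\Sel{j}=0}}$ of line~\eqref{eq:ResultOfExpressionCalEltwo}, completing the argument. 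The only genuinely nonroutine step is the identity displayed above, which forces one to reason about how \SRWA{} actually runs, namely that finishing in $i$~flips means exactly $i$ selections occur; everything else is a mechanical application of the chain rule and is understood, as with the earlier applications of the LTE, to silently drop any conditioning events of probability zero.
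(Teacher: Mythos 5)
Your proposal is correct and follows essentially the same route as the paper: collapse the two factors into the joint probability $\PACond{\W(\ZV{F^\prime})=i, \NeverSel}{}$, rewrite $\NeverSel$ as the finite intersection $\bigcap_{j=1}^{i}\singleevent{\Sel{j}=0}$ on the event $\singleevent{\W(\ZV{F^\prime})=i}$, and then apply the chain rule (your ``peel off the runtime event, then chain-rule the rest'' is just the chain rule with that event ordered last). Your justification of the key identity via the semantics of \SRWA{} is, if anything, slightly more explicit than the paper's.
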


	\begin{proof}

		By the definition of the conditional probability and by reducing the resulting fraction, we obtain
		\begin{align}
			\notag
			\Expression 
			&= \PACond{\NeverSel}{} \cdot \PACondResize{\W(\ZV{F^\prime})=i}{\NeverSel} \\
			\notag
			&= \frac{\Pbig{\NeverSel, \InitAssi{\alpha}}}{\Pbig{\InitAssi{\alpha}}} \cdot \frac{\Pbig{\W(\ZV{F^\prime})=i, \NeverSel, \InitAssi{\alpha}}}{\Pbig{\NeverSel, \InitAssi{\alpha}}} \\
			\label{eq:MussWeiterEinfacherWerden}
			&= \PACond{\W(\ZV{F^\prime})=i, \NeverSel, \InitAssi{\alpha}}{}.
		\end{align}
		Now notice that
		\[
		\probcond{A}{B} 
		= \frac{\prob{A \cap B}}{\prob{B}}
		= \frac{\prob{A \cap B \cap B}}{\prob{B}}
		= \probcond{A \cap B}{B},
		\]
		hence we can simplify line~\eqref{eq:MussWeiterEinfacherWerden} even further
		\[
		\Expression = \PACond{\W(\ZV{F^\prime})=i, \NeverSel, \InitAssi{\alpha}}{} = \PACond{\W(\ZV{F^\prime})=i, \NeverSel}{}.
		\]

		We continue with analyzing $\PACond{\W(\ZV{F^\prime})=i, \NeverSel}{}$.
		Since we have the information that $\ZV{F^\prime}$ is being solved with $i$~flips, we can express $\NeverSel$ more precisely as the event that in none of the iterations $1, \dots, i$ a clause from $\ZV{\Clauses}$ gets selected, \ie as the intersection $\bigcap_{j=1}^{i} \singleevent{\Sel{j} = 0}$. Thus,
		\[
		\Expression = \PACondResize{ \singleevent{\W(\ZV{F^\prime}) = i} \cap \bigcap_{j=1}^{i} \singleevent{\Sel{j} = 0} }{}.
		\]

		Applying the chain rule for probabilities (cf.\ \refTheo{thm:ChainRule}) we obtain
		\[
		\Expression
		=
		\PACondResize{\W(\ZV{F^\prime}) = i}{ \bigcap_{j=1}^{i} \singleevent{ \Sel{j} = 0 } }
		\cdot \prod_{k=1}^{i} \PACondResize{\Sel{k} = 0}{ \bigcap_{j=1}^{k-1} \singleevent{ \Sel{j} = 0 } },
		\]
		which is what we wanted.
	\end{proof}

	\paragraph{Step 3: Analyzing the Product Rule Factors}
	
	Having achieved this, we proceed to analyze the factors in lines~\eqref{eq:ResultOfExpressionCalElone} and~\eqref{eq:ResultOfExpressionCalEltwo}.
	Let us begin with the first factor in line~\eqref{eq:ResultOfExpressionCalElone}.
	
	\begin{lemma}
		\label{lem:LemmaWithSimpleEquation}
		The following expression is not a random variable anymore:
		\begin{align}
			\label{eq:simple_equation}
			\PACondResize{\W(\ZV{F^\prime}) = i}{ \bigcap_{j=1}^{i} 	\singleevent{ \Sel{j} = 0 } }
			\eqqcolon C_1.	
		\end{align}
	\end{lemma}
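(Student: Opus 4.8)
The plan is to unfold the notation and then show that, once we condition on the event $\bigcap_{j=1}^{i} \singleevent{\Sel{j}=0}$ --- i.e.\ on \SRWA{} selecting a clause of the original formula~$F$ (and never one of~$\Clauses$) in each of the first $i$~iterations --- the random walk becomes entirely oblivious to the added set~$\Clauses$, so that the conditional probability in~\eqref{eq:simple_equation} coincides with the corresponding quantity for \SRWA{} run on~$F$ alone. Spelling out Convention~\ref{convention:probs} and Notation~\ref{not:PLmitZusatz}, the left-hand side of~\eqref{eq:simple_equation} is
\[
\mathbb{P}\!\left[\,\W(\ZV{F^\prime}) = i \;\middle|\; \bigcap_{j=1}^{i} \singleevent{\Sel{j}=0},\ \InitAssi{\alpha},\ \Clauses\,\right],
\]
and the goal is to identify this with $C_1 = \Probcond{\W(F)=i}{\InitAssi{\alpha}}$, which visibly does not depend on the realization of~$\Clauses$.

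The argument rests on two observations. First, a \emph{per-step uniformity}: in any iteration~$c$ with current assignment~$\beta$, \SRWA{} selects a clause uniformly among all clauses of $\ZV{F^\prime} = F \cup \Clauses$ falsified by~$\beta$ (line~\ref{alg:srwa:clause} of \refAlg{algo:schoening}). Conditioning on $\singleevent{\Sel{c}=0}$ restricts this uniform law to the $F$-clauses falsified by~$\beta$, and a uniform distribution restricted to a subset of its support is again uniform on that subset. Hence, conditioned on $\singleevent{\Sel{c}=0}$, the clause chosen from~$F$ is uniform over the falsified $F$-clauses, with a law that is \emph{independent} of how many clauses of~$\Clauses$ happen to be falsified by~$\beta$. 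Since the ensuing literal choice (uniform within the chosen $F$-clause) and the resulting flip involve only~$F$, the one-step transition of the walk, conditioned on $\singleevent{\Sel{c}=0}$, is exactly the one-step transition of \SRWA{} on~$F$. I would promote this to the whole trajectory by induction over $c = 1, \dots, i$: conditioned on $\bigcap_{j=1}^{c}\singleevent{\Sel{j}=0}$, the joint law of the first $c$~assignments visited equals that of the first $c$~assignments of \SRWA{} started from~$\alpha$ on~$F$.

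Second, an \emph{identical stopping condition}: because $\Clauses$ is a set of logical consequences of~$F$, we have $F \vDash \Clauses$, so an assignment satisfies~$F$ if and only if it satisfies $\ZV{F^\prime}$; the two formulas have the same satisfying assignments. Consequently the termination test ``$\applyassi{\ZV{F^\prime}}{\beta}=1$'' fires at exactly the same assignments~$\beta$ as ``$\applyassi{F}{\beta}=1$''. Combining the two observations, on the conditioned event both the trajectory and the instant of termination are distributed exactly as for \SRWA{} on~$F$; in particular the conditional probability of $\singleevent{\W(\ZV{F^\prime})=i}$ equals the probability of $\singleevent{\W(F)=i}$ for the walk on~$F$ started from~$\alpha$, which is precisely $C_1 = \Probcond{\W(F)=i}{\InitAssi{\alpha}}$ and carries no dependence on~$\Clauses$.

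The main obstacle is making the first observation airtight, i.e.\ rigorously excluding any hidden dependence on~$\Clauses$ after conditioning. The delicate point is that the raw, \emph{unconditioned} selection probabilities do depend on~$\Clauses$ through the denominator given by the number of falsified clauses of $\ZV{F^\prime}$; one must verify that this dependence cancels exactly upon renormalizing by $\PACond{\Sel{c}=0}{\dots}$, and --- crucially --- that the cancellation persists through the induction, since the falsified-clause sets change after each flip and could a priori reintroduce a dependence on~$\Clauses$ at later steps. Once the per-step conditional law is shown to match that of \SRWA{} on~$F$ uniformly in the preceding history, the passage to the full path law and the identification of $C_1$ are routine bookkeeping.
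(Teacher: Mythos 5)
Your proposal is correct and follows essentially the same route as the paper, which simply asserts that under the conditioning event the walk only ever touches clauses of~$F$ and hence $\Clauses$ has no influence, yielding $C_1 = \Probcond{\W(F)=i}{\InitAssi{\alpha}}$. Your write-up merely makes explicit the two points the paper leaves implicit (the conditioned clause choice being uniform over the falsified $F$-clauses, and $F$ and $F\cup\Clauses$ sharing the same satisfying assignments), so there is nothing genuinely different to compare.
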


	\begin{proof}
		When \mbox{$\bigcap_{j=1}^{i} \singleevent{ \Sel{j} = 0 }$} holds and we have the information that $\W(\ZV{F^\prime}) = i$, Schöning's algorithm has never selected a clause from~$\ZV{\Clauses}$ in the $i$ iterations it required to solve formula~$\ZV{F^\prime}$. Thus, the algorithm performs its random walk only on clauses of the original formula~$F$; hence, the random set~$\Clauses$ does not have any influence. We can therefore write
		\begin{align*}
			\PACondResize{\W(\ZV{F^\prime}) = i}{ \bigcap_{j=1}^{i} \singleevent{ \Sel{j} = 0 } }
			= \Probcond{ \W(F) = i }{ \InitAssi{\alpha} } = C_1.	%
		\end{align*}
		Notice that $\Probcond{ \W(F) = i }{ \InitAssi{\alpha} }$ is not a random variable.
	\end{proof}

	Because of the simplification provided in \refEq{eq:simple_equation}, we can concentrate on the factors in the big product of line~\eqref{eq:ResultOfExpressionCalEltwo}, \ie the part
	$
	\prod_{k=1}^{i} \PACondResize{\Sel{k} = 0}{ \bigcap_{j=1}^{k-1} \singleevent{ \Sel{j} = 0 } }.
	$
	To proceed, however, we require additional notation.
	
	\begin{definition}
		\label{def:FromAlphaToBetaInC}
		We will denote the event that \algoformat{SRWA} (started from the initial assignment $\alpha$) ends up in assignment~$\beta$ after performing $c$~flips with $\Wa{c}{\beta}$.
	\end{definition}
	
	Let us look at a few easy examples to get an intuition for this notation.
	
	\begin{example}
		\label{ex:WalkNotation}
		\begin{enumerate}[(i)]
			\item It holds that $\probcond{\Wa{0}{\alpha}}{\InitAssi{\alpha}} = 1$, since $\alpha$ is the initial assignment selected by \SRWA{}.
			\item If $\alpha \coloneqq \set{x = y = 0, z = 1}$ gets chosen in line \ref{alg:srwa:assignment} of \refAlg{algo:schoening} and we also know that the clause $K \coloneqq (x \lor y \lor \overline{z})$ gets selected in line~\ref{alg:srwa:clause}, then setting 
			\[
			\beta_1 = \set{x = 1, y = 0, z = 1}, \:
			\beta_2 = \set{x = 0, y = 1, z = 1}, \:
			\beta_3 = \set{x = y = z = 0},
			\]
			we have $\probcond{\Wa{1}{\beta_i}}{\InitAssi{\alpha}, K \textnormal{ is selected}} = \frac{1}{3}$ for all $i \in \set{1,2,3}$.
			\item \label{ex:SelectFalsifyBeob} If in the $(c+1)$-th iteration of \SRWA{} a clause of $\Clauses$ gets selected, then for all $\beta \in \assi$ that do not falsify a clause in $\Clauses$ we have
			\[
			\probcond{ \Wa{c}{\beta} }{\InitAssi{\alpha}, \Sel{c+1} = 1} = 0.
			\]
		\end{enumerate}
	\end{example}

	Now, we are ready to analyze
	the factors in the big product of line~\eqref{eq:ResultOfExpressionCalEltwo}. 
	In analyzing these factors, we end up with the random variable~$\ZV{Q}$, of which we show in the \refSec{sec: AnalysisOfTheRVs} that it is Johnson SB distributed.

	\begin{lemma}
		\label{lem:LemmaWithCZwoUndQ}
		For $k \in \set{1, \dots, i}$
		it holds
		\begin{align*}
			\PACondResize{\Sel{k} = 0}{ \bigcap_{j=1}^{k-1} \singleevent{ \Sel{j} = 0 } } 
			=	
			\sum_{ \falsassi{\gamma}{F} } C_2 \cdot \ZV{Q},
		\end{align*}
		where
		\begin{align*}
			C_2 \coloneqq C_2(\alpha,k,\gamma) \coloneqq
			\probcondResize{ \Wa{k-1}{\gamma} }{\bigcap_{j=1}^{k-1} \singleevent{ \Sel{j} = 0 }, \InitAssi{\alpha}} \in [0,1] 
		\end{align*}
		is no random variable, and
		\begin{align*}
			\ZV{Q} 
			\coloneqq
			\ZV{Q}(\alpha,k,\gamma)
			\coloneqq
			\PACond{ \Sel{k} = 0 }{ \Wa{k-1}{\gamma} }.
		\end{align*}
	\end{lemma}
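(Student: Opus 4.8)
The plan is to apply the conditional form of the law of total probability (LTP), partitioning the sample space according to the assignment~$\gamma$ in which \SRWA{} finds itself after its first $k-1$ flips. Starting from~$\alpha$ and performing exactly $k-1$ flips lands the algorithm in precisely one assignment, so the events $\singleevent{\Wa{k-1}{\gamma}}$ (recall \refDef{def:FromAlphaToBetaInC}), ranging over all $\gamma \in \assi$, form a partition. Applying the conditional LTP pointwise in the realization of~$\ZV{\Clauses}$ — so that the bold, $\ZV{\Clauses}$-dependent probabilities of Convention~\ref{convention:probs} are preserved throughout — I would rewrite the left-hand side as
\[
\sum_{\gamma \in \assi} \PACondResize{\Sel{k} = 0}{ \Wa{k-1}{\gamma}, \bigcap_{j=1}^{k-1} \singleevent{ \Sel{j} = 0 } } \cdot \PACondResize{\Wa{k-1}{\gamma}}{ \bigcap_{j=1}^{k-1} \singleevent{ \Sel{j} = 0 } }.
\]
It then remains to identify the two factors as~$\ZV{Q}$ and~$C_2$ and to reduce the index set to those~$\gamma$ falsifying~$F$.

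For the second (path-probability) factor I would argue that it is in fact independent of~$\ZV{\Clauses}$ and therefore collapses to the ordinary (non-bold) probability $C_2(\alpha,k,\gamma)$. The point is that conditioning on $\bigcap_{j=1}^{k-1} \singleevent{\Sel{j} = 0}$ guarantees that no clause of~$\ZV{\Clauses}$ is touched in lines~\ref{alg:srwa:clause}--\ref{alg:srwa:flip} during the first $k-1$ iterations; hence the random walk from~$\alpha$ to~$\gamma$ is driven solely by clauses of the original formula~$F$, and the realization of~$\ZV{\Clauses}$ has no influence on this path probability. By Convention~\ref{convention:probs} this lets me drop the subscript and boldface, leaving the constant $C_2 \in [0,1]$.

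For the first factor I would invoke the memorylessness of \SRWA{} to discard the selector history and obtain $\ZV{Q}(\alpha,k,\gamma) = \PACond{\Sel{k} = 0}{\Wa{k-1}{\gamma}}$. Indeed, by \refDef{def:ScIndVar} the value of $\Sel{k}$ is determined solely by which clauses of $F \cup \ZV{\Clauses}$ are falsified by the \emph{current} assignment~$\gamma$, together with the uniform choices made in lines~\ref{alg:srwa:clause}--\ref{alg:srwa:literal}; it is independent of the path that led to~$\gamma$. Consequently, once we condition on $\Wa{k-1}{\gamma}$, the extra information carried by $\bigcap_{j=1}^{k-1} \singleevent{\Sel{j} = 0}$ is redundant and may be dropped, turning the factor into~$\ZV{Q}$. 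Finally, I would restrict the sum to assignments with $\applyassi{F}{\gamma} = 0$: if instead $\gamma$ satisfied~$F$, then the logical equivalence $F \vDash \ZV{\Clauses}$ would force $\gamma$ to satisfy $F \cup \ZV{\Clauses} = \ZV{F^\prime}$ as well, so \SRWA{} would already have returned before a $k$-th iteration could occur; such $\gamma$ contribute nothing, leaving exactly the sum $\sum_{\falsassi{\gamma}{F}} C_2 \cdot \ZV{Q}$.

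The main obstacle is rigorously justifying the two ``dropping'' steps, since both are precisely where the $\ZV{\Clauses}$-dependent (random-variable) probabilities of Convention~\ref{convention:probs} are converted — once into an ordinary constant~$C_2$, and once into the cleanly defined random variable~$\ZV{Q}$. Making these conditional-independence claims airtight requires being explicit that the state of the walk after $k-1$ steps is a Markov state for \SRWA{}, and that, on the event that no clause of~$\ZV{\Clauses}$ has ever been selected, the induced walk is literally the walk on~$F$ alone; everything else is a routine bookkeeping application of the LTP.
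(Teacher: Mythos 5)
Your proposal is correct and follows essentially the same route as the paper's proof: an application of the LTP conditioning on $\Wa{k-1}{\gamma}$, followed by the two observations that the selector history becomes redundant once the current assignment is known (yielding $\ZV{Q}$) and that the path probability is a constant because the walk up to step $k-1$ uses only clauses of~$F$. Your additional justification for restricting the sum to assignments falsifying~$F$ matches the paper's remark as well.
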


	\begin{proof}
		Let $k \in \set{1, \dots, i}$. One can notice 
		by applying the LTP when conditioning on the event $\Wa{k-1}{\gamma}$, that
		\begin{align*}
			&\PACondResize{\Sel{k} = 0}{ \bigcap_{j=1}^{k-1} \singleevent{ \Sel{j} = 0 } } \\ %
			=& \sum_{\falsassi{\gamma}{F}}
			\left\{
			\PACondResize{\Sel{k} = 0}{\Wa{k-1}{\gamma}, \bigcap_{j=1}^{k-1} \singleevent{\Sel{j}=0}} \right. \\
			&\phantom{=\sum_{\falsassi{\gamma}{F}}}
			\cdot \left. \probcondResize{ \Wa{k-1}{\gamma} }{ \bigcap_{j=1}^{k-1} \singleevent{\Sel{j} = 0}, \InitAssi{\alpha}} \right\}. %
		\end{align*}
		Notice that we sum only over those assignments that falsify a clause in $F$ since the algorithm would already have finished in case of a satisfying assignment. Two additional remarks are due:
		
		First, one can observe that $\PACondResize{\Sel{k} = 0}{\Wa{k-1}{\gamma}, \bigcap_{j=1}^{k-1} \singleevent{\Sel{j}=0}}$ can be rewritten in the form $\PACond{ \Sel{k} = 0 }{ \Wa{k-1}{\gamma} }$ since the condition $\Wa{k-1}{\gamma}$ (\ie the algorithm being in $\gamma$ after $k-1$ flips) makes the condition $\bigcap_{j=1}^{k-1} \singleevent{\Sel{j}=0}$ (\ie the information which clauses were selected along the way) obsolete for determining the expected number of flips the algorithm performs. 
		
		Secondly,
		notice 
		that $\probcondResize{ \Wa{k-1}{\gamma} }{ \bigcap_{j=1}^{k-1} \singleevent{\Sel{j} = 0}, \InitAssi{\alpha}}$ is a probability that does not depend on $\ZV{\Clauses}$; in other words, this term is no random variable (also recall Convention~\ref{convention:probs}). This is true because of the condition $\bigcap_{j=1}^{k-1} \singleevent{\Sel{j} = 0}$, the complete $(k-1)$ flip random walk from $\alpha$ to $\gamma$ is made without considering clauses from~$\ZV{\Clauses}$.
	\end{proof}
	
	Putting together Lemmata~\ref{lem:LemmaWithSimpleEquation} and \ref{lem:LemmaWithCZwoUndQ} yields:
	
	\begin{corollary}
		\label{cor:ExpressionEAnaDone}
		We have 
		\[
		\Expression = C_1 \cdot \prod_{k=1}^{i} \left( \sum_{ \falsassi{\gamma}{F} } C_2 \cdot \ZV{Q} \right).
		\]
	\end{corollary}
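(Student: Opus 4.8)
The plan is to assemble Corollary~\ref{cor:ExpressionEAnaDone} directly from results already in hand, since all the substantive work has been carried out in the lemmas above. The chain-rule factorization established just above already gives
\[
\Expression = \PACondResize{\W(\ZV{F^\prime}) = i}{ \bigcap_{j=1}^{i} \singleevent{ \Sel{j} = 0 } } \cdot \prod_{k=1}^{i} \PACondResize{\Sel{k} = 0}{ \bigcap_{j=1}^{k-1} \singleevent{ \Sel{j} = 0 } },
\]
so the only task remaining is to replace each piece by the closed-form evaluation proved afterwards.

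First I would invoke \refLem{lem:LemmaWithSimpleEquation} to substitute the constant $C_1$ for the leading factor $\PACondResize{\W(\ZV{F^\prime}) = i}{ \bigcap_{j=1}^{i} \singleevent{ \Sel{j} = 0 } }$; that lemma has already shown that conditioning on $\bigcap_{j=1}^{i} \singleevent{ \Sel{j} = 0 }$ erases all dependence on the random set~$\Clauses$, so this factor is genuinely a constant. Next I would apply \refLem{lem:LemmaWithCZwoUndQ} to each of the $i$ factors of the product, replacing $\PACondResize{\Sel{k} = 0}{ \bigcap_{j=1}^{k-1} \singleevent{ \Sel{j} = 0 } }$ by $\sum_{ \falsassi{\gamma}{F} } C_2 \cdot \ZV{Q}$ for every $k \in \set{1, \dots, i}$. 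Substituting both evaluations into the factorization above produces precisely the asserted identity $\Expression = C_1 \cdot \prod_{k=1}^{i} ( \sum_{ \falsassi{\gamma}{F} } C_2 \cdot \ZV{Q} )$.

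There is no genuine obstacle here: the corollary is a purely mechanical composition of the two lemmas inside the chain-rule factorization, and no new probabilistic argument is needed. The only point worth keeping in mind is the bookkeeping of indices, since both $C_2 = C_2(\alpha,k,\gamma)$ and $\ZV{Q} = \ZV{Q}(\alpha,k,\gamma)$ depend on the iteration index~$k$ and on the intermediate assignment~$\gamma$; consequently the inner sum over the falsifying assignments~$\gamma$ is taken afresh inside each factor of the product. This dependence is already baked into the notation, so writing down the product is all that the proof requires.
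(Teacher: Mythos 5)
Your proposal is correct and matches the paper's own derivation exactly: the paper obtains the corollary by ``putting together'' Lemma~\ref{lem:LemmaWithSimpleEquation} and Lemma~\ref{lem:LemmaWithCZwoUndQ} inside the chain-rule factorization of $\Expression$, which is precisely the substitution you describe. Your added remark about the $k$- and $\gamma$-dependence of $C_2$ and $\ZV{Q}$ is consistent with the paper's notation and does not change the argument.
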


	\paragraph{Final Step: Putting Everything Together}
	Having done the steps above, we summarize the results obtained in Subsection~\ref{sec:TheInfiniteCase}, dealing with the infinite case, in the following recapitulation.

	\begin{recapitulation}[The infinite case $\NeverSel$]
		\label{recap:InfiniteCaseDone}
		It holds that
		\begin{align*}
			\InfCase
			=
			\PACond{\NeverSel}{} \cdot \EACond{\NeverSel}
			=
			\sum_{i \in \N_0} 
			\left\{
			C_1 %
			\cdot
			\prod_{k=1}^{i}
			\left( \sum_{ \falsassi{\gamma}{F} }	
			C_2 %
			\cdot \ZV{Q} \right)
			\cdot i
			\right\}, %
		\end{align*}
		with the constants
		\begin{align*}
			C_1 &\coloneqq C_1(\alpha, i) \coloneqq
			\Probcond{ \W(F) = i }{ \InitAssi{\alpha} } \in [0,1], \text{ and} \\
			C_2 &\coloneqq C_2(\alpha,k,\gamma) \coloneqq
			\probcondResize{ \Wa{k-1}{\gamma} }{\bigcap_{j=1}^{k-1} \singleevent{ \Sel{j} = 0 }, \InitAssi{\alpha}} \in [0,1]
		\end{align*}
		and the random variable
		\begin{align*}
			\ZV{Q} 
			\coloneqq
			\ZV{Q}(\alpha,k,\gamma)
			\coloneqq
			\PACond{ \Sel{k} = 0 }{ \Wa{k-1}{\gamma} }.
		\end{align*}
	\end{recapitulation}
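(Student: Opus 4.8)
The plan is to assemble the pieces already established in this subsection: the recapitulation is essentially a bookkeeping statement that folds \refLem{lem:StepOneLemma}, \refLem{lem:LemmaWithSimpleEquation}, \refLem{lem:LemmaWithCZwoUndQ}, and \refCor{cor:ExpressionEAnaDone} into one closed-form expression for $\InfCase$. First I would invoke \refLem{lem:StepOneLemma}, which already rewrites $\InfCase = \PACond{\NeverSel}{} \cdot \EACond{\NeverSel}$ as
\[
\InfCase = \sum_{i \in \N_0} \underbrace{\PACond{\NeverSel}{} \cdot \PACond{\W(\ZV{F^\prime})=i}{\NeverSel}}_{=\,\Expression} \cdot\, i,
\]
so that the product of the first two factors under the sum is precisely the quantity $\Expression$ (which carries an implicit dependence on $i$ through the condition $\W(\ZV{F^\prime})=i$) that was isolated and analyzed in Steps~2 and~3.

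The second step is a direct substitution: I would plug the closed form of $\Expression$ supplied by \refCor{cor:ExpressionEAnaDone}, namely $\Expression = C_1 \cdot \prod_{k=1}^{i} \bigl( \sum_{ \falsassi{\gamma}{F} } C_2 \cdot \ZV{Q} \bigr)$, into the sum above. Because the product limit and the trailing multiplier $i$ are the same index as the summation variable, this is consistent and immediately yields the asserted identity
\[
\InfCase = \sum_{i \in \N_0} \left\{ C_1 \cdot \prod_{k=1}^{i} \left( \sum_{ \falsassi{\gamma}{F} } C_2 \cdot \ZV{Q} \right) \cdot i \right\}.
\]
The explicit descriptions $C_1 \coloneqq C_1(\alpha,i) \coloneqq \Probcond{ \W(F) = i }{ \InitAssi{\alpha} }$ and $C_2 \coloneqq C_2(\alpha,k,\gamma) \coloneqq \probcondResize{ \Wa{k-1}{\gamma} }{\bigcap_{j=1}^{k-1} \singleevent{ \Sel{j} = 0 }, \InitAssi{\alpha}}$ as non-random constants, together with the random variable $\ZV{Q} \coloneqq \ZV{Q}(\alpha,k,\gamma) \coloneqq \PACond{ \Sel{k} = 0 }{ \Wa{k-1}{\gamma} }$, are then read off verbatim from \refLem{lem:LemmaWithSimpleEquation} and \refLem{lem:LemmaWithCZwoUndQ}, which completes the statement.

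Since every step here is a substitution, I do not expect the recapitulation itself to present any genuine obstacle; its role is to package the section's results. The conceptual difficulty was front-loaded into the supporting lemmas, and that is where I would locate the real work: the crux is the argument that, after conditioning on the events $\bigcap_{j} \singleevent{\Sel{j}=0}$ (no clause of $\ZV{\Clauses}$ has ever been touched) and on the traversed walk $\Wa{k-1}{\gamma}$, the remaining probabilities $C_1$ and $C_2$ no longer depend on the realization of the random set $\ZV{\Clauses}$ and are hence legitimately treated as constants, leaving $\ZV{Q}$ as the sole carrier of $\ZV{\Clauses}$-randomness. In particular, one must justify that the path condition $\Wa{k-1}{\gamma}$ makes the selector history $\bigcap_{j=1}^{k-1} \singleevent{\Sel{j}=0}$ obsolete inside $\ZV{Q}$; verifying that this separation is clean is exactly what makes the bold/non-bold classification in the final display (Convention~\ref{convention:probs}) correct.
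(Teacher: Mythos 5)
Your proposal matches the paper's own proof exactly: it applies Lemma~\ref{lem:StepOneLemma} to write $\InfCase = \sum_{i \in \N_0} \Expression \cdot i$ and then substitutes the closed form of $\Expression$ from Corollary~\ref{cor:ExpressionEAnaDone}, with the constants and $\ZV{Q}$ read off from Lemmas~\ref{lem:LemmaWithSimpleEquation} and~\ref{lem:LemmaWithCZwoUndQ}. Your observation that the substantive work lives in the supporting lemmas (and in justifying which terms are genuinely $\ZV{\Clauses}$-independent) is also consistent with how the paper structures the argument.
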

	
	\begin{proof}
		We want to emphasize that we will, from now on, drop the dependencies of random variables like $\ZV{Q}$ (from $\alpha,k,\gamma$, etc.) in the notation in order not to overload notation. The reader should, however, keep this in mind.
		We have
		\begin{align*}
			\InfCase
			&\stackrel[\phantom{\text{Cor}~\ref{cor:ExpressionEAnaDone}}]{\phantom{\text{Cor}~\ref{cor:ExpressionEAnaDone}}}{=} 
			\PACond{\NeverSel}{} \cdot \EACond{\NeverSel} \\	
			&\stackrel[\phantom{\text{Cor}~\ref{cor:ExpressionEAnaDone}}]{\text{Lem}~\ref{lem:StepOneLemma}}{=}
			\sum_{i \in \N_0} \Expression \cdot	i \\
			&\stackrel{\text{Cor}~\ref{cor:ExpressionEAnaDone}}{=} 
			\sum_{i \in \N_0} \left\{ C_1 \cdot	\prod_{k=1}^{i}	\left( \sum_{ \falsassi{\gamma}{F} } C_2 \cdot \ZV{Q} \right) \cdot i \right\}. \qedhere
		\end{align*}	
	\end{proof}

	\subsubsection{The Finite Case $c < \infty$}
	\label{sec:TheFiniteCase}

	As seen in the previous section, we sometimes end up with terms that are not random variables (above $C_1$ and $C_2$). Since we will almost exclusively care about random variables in \refSec{sec:CombiningBothCases}, we introduce the following notation that will allow us to drop non-random variables. This has the advantage of keeping the calculations cleaner and more readable.
	This is theoretically founded in Lemma~\ref{lem:MultiplicationOfSB}.
	
	\begin{notation}
		We will use the symbol $\cong$ to indicate equality up to constants, \eg $\frac{1}{2} \ZV{X} + \frac{1}{4} \cong \ZV{X}$.
	\end{notation}
	
	Furthermore, we need the following definition.
	
	\begin{definition}
		\label{def:FlipAlphacbetax}
		We let $\Flipa{c}{\beta}{x}$ denote that in the execution of \SRWA{} (started with assignment~$\alpha$), at the beginning of the $c$-th iteration, the current assignment is $\beta$. Furthermore, the next flip will flip variable~$x$ (\ie at the end of the algorithm's $c$-th iteration in line~\ref{alg:srwa:flip}, we have~$\beta[x]$ as the current assignment).
	\end{definition}
	
	By applying similar arguments as in \refSec{sec:TheInfiniteCase}, where we analyzed $\InfCase$, \ie applying the LTE, the chain rule for probabilities, and using the LTP, one can obtain the following recursion.
	
	\begin{recapitulation}[The finite case $c < \infty$]
		\label{recap:FiniteCaseDone}
		Let $\ZV{V} \coloneqq \Vars{\UNSATunder[\Clauses]{\beta}}$, where $\UNSATunder[\Clauses]{\beta}$ is the set of clauses in~$\Clauses$ that are falsified by $\beta$.
		It then holds that
		\begin{align*}
			\FiniteCase
			&= \sum_{c\in \N_0} \PACond{\NoSel{c}}{} \cdot \EACond{\NoSel{c}} \\
			&\cong \sum_{c\in \N_0} \sum_{ \falsassi{\beta}{\Clauses} }
			\left\{
			\left( \sum_{ \falsassi{\gamma}{F} }
			\ZV{R} \right)
			\cdot
			\prod_{k=1}^{c}
			\left( \sum_{ \falsassi{\gamma}{F} }
			\ZV{Q} \right)
			\right.			
			\left.
			\cdot \sum_{x \in \ZV{V}}  
			\Big(          
			\ZV{P}
			\cdot 
			\ECond{ \InitAssi{ \FlipVarInAssi{x}{\beta} } }
			\Big)
			\right\}
		\end{align*}
		with the random variables
		\begin{align*}
			\ZV{P} &\coloneqq \PACond{ \Flipa{c}{\beta}{x} }{ \Wa{c}{\beta}, \NoSel{c} }, \\	
			\ZV{Q} &\coloneqq \PACond{ \Sel{k} = 0 }{ \Wa{k-1}{\gamma} },\\
			\ZV{R} &\coloneqq \PACond{ \Sel{c+1} = 1 }{ \Wa{c}{\gamma} }.
		\end{align*}
	\end{recapitulation}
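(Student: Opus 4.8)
The plan is to reuse the machinery developed for the infinite case in \refSec{sec:TheInfiniteCase} almost verbatim, adding one new ingredient: because a clause of $\Clauses$ is now eventually selected, I must record the assignment reached at the moment of its first selection and then exploit the memorylessness of \SRWA{} to close a recursion back onto $\ERL$. Concretely, I would fix $c$ and analyze the summand $\PACond{\NoSel{c}}{}\cdot\EACond{\NoSel{c}}$ of $\FiniteCase=\sum_{c\in\N_0}\PACond{\NoSel{c}}{}\cdot\EACond{\NoSel{c}}$ by treating the probability factor and the expectation factor separately, and only at the end recombine them and discard every factor that is not a $\Clauses$-dependent random variable via $\cong$ (this discarding being founded on the scaling property \refLem{lem:MultiplicationOfSB}).

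For the probability factor $\PACond{\NoSel{c}}{}$, I would write the event $\NoSel{c}$ as $\bigcap_{j=1}^{c}\singleevent{\Sel{j}=0}\cap\singleevent{\Sel{c+1}=1}$ and apply the chain rule (\refTheo{thm:ChainRule}) exactly as in the proof of \refCor{cor:ExpressionEAnaDone}. The first $c$ factors $\PACond{\Sel{k}=0}{\bigcap_{j=1}^{k-1}\singleevent{\Sel{j}=0}}$ are handled verbatim by \refLem{lem:LemmaWithCZwoUndQ}: conditioning (LTP) on the walk position $\gamma$ after $k-1$ flips turns each into $\sum_{\falsassi{\gamma}{F}}C_2\cdot\ZV{Q}$, so their product becomes $\prod_{k=1}^{c}\bigl(\sum_{\falsassi{\gamma}{F}}\ZV{Q}\bigr)$ once the constants $C_2$ are dropped. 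The single remaining factor $\PACond{\Sel{c+1}=1}{\bigcap_{j=1}^{c}\singleevent{\Sel{j}=0}}$ is treated by the identical LTP manipulation — conditioning on the position $\gamma$ after $c$ flips — but now reading off $\Sel{c+1}=1$ in place of $\Sel{k}=0$; this produces $\sum_{\falsassi{\gamma}{F}}C_4\cdot\ZV{R}$ with $\ZV{R}=\PACond{\Sel{c+1}=1}{\Wa{c}{\gamma}}$, i.e.\ the factor $\sum_{\falsassi{\gamma}{F}}\ZV{R}$ after dropping $C_4$.

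For the expectation factor $\EACond{\NoSel{c}}$, I would apply the LTE twice. First, condition on the assignment $\beta$ held at the start of iteration $c+1$, i.e.\ on $\Wa{c}{\beta}$; only assignments that falsify some clause of $\Clauses$ can trigger a selection, which yields precisely the sum $\sum_{\falsassi{\beta}{\Clauses}}$ with $\Clauses$-independent weights $C_3=\Probcond{\Wa{c}{\beta}}{\NoSel{c},\InitAssi{\alpha}}$. Second, condition on the variable $x$ flipped in iteration $c+1$; since the selected clause lies in $\Clauses$ and is falsified by $\beta$, necessarily $x\in\ZV{V}=\Vars{\UNSATunder[\Clauses]{\beta}}$, contributing the sum $\sum_{x\in\ZV{V}}$ weighted by $\ZV{P}=\PACond{\Flipa{c}{\beta}{x}}{\Wa{c}{\beta},\NoSel{c}}$. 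The decisive step is that after this flip the current assignment is $\FlipVarInAssi{x}{\beta}$, and since \SRWA{} is a time-homogeneous random walk whose future depends only on the current assignment, the expected residual number of flips equals that of a fresh run started in $\FlipVarInAssi{x}{\beta}$, namely $\ECond{\InitAssi{\FlipVarInAssi{x}{\beta}}}$, which closes the recursion on $\ERL$. Multiplying the two analyzed factors, pulling the ($\beta$-independent) $\ZV{Q}$- and $\ZV{R}$-sums inside the outer $\sum_{\falsassi{\beta}{\Clauses}}$, and dropping $C_2,C_3,C_4$ then gives the claimed form.

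I expect the main obstacle to be the honest justification of the $\cong$ step rather than the algebra. The full expansion of $\EACond{\NoSel{c}}$ carries an additive ``$c+1$ flips already spent'' term, so $\PACond{\NoSel{c}}{}\cdot\EACond{\NoSel{c}}$ also contains a contribution $(c+1)\cdot\PACond{\NoSel{c}}{}$ which, unlike a plain constant, still depends on $\Clauses$; I must argue that such leading flip-count contributions, together with the $\Clauses$-independent constants $C_2,C_3,C_4$, may be absorbed without affecting the eventual distributional conclusion (this is where \refLem{lem:MultiplicationOfSB} and the embedding of the lognormal into the Johnson SB family are invoked). A secondary, purely notational difficulty is disciplined bookkeeping of the two independent position summations — the $\gamma$ inside the $\ZV{Q}$- and $\ZV{R}$-factors (from the probability analysis) versus the $\beta$ and $x$ in the $\ZV{P}$-factor and the recursion (from the expectation analysis) — and keeping straight which objects are genuine $\Clauses$-dependent random variables ($\ZV{P},\ZV{Q},\ZV{R}$, and $\ERL(\cdot)$) and which are the constants that vanish under $\cong$.
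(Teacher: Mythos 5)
Your proposal follows essentially the same route as the paper's proof in \refApp{app:FiniteCaseHandled}: a double application of the LTE to the expectation factor (conditioning first on the assignment~$\beta$ reached after $c$ flips, then on the flipped variable $x \in \ZV{V}$), the memorylessness recursion $\EACond{\NoSel{c}, \Wa{c}{\beta}, \Flip{c}{\beta}{x}} = c + 1 + \ECond{\InitAssi{\FlipVarInAssi{x}{\beta}}}$, and a chain-rule/LTP decomposition of $\PACond{\NoSel{c}}{}$ into the $\ZV{R}$-factor and the product of $\ZV{Q}$-factors, with the constants $C_2, C_3, C_4$ absorbed by~$\cong$. Your closing concern about the additive $(c+1)$ contribution is well placed, but the paper drops it under~$\cong$ with no more justification than you offer, so your argument matches the paper's own treatment.
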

	
	\begin{proof}    
		For a detailed presentation of the arguments involved, we refer to \refApp{app:FiniteCaseHandled}.
	\end{proof}

	\subsubsection{Combining Both Cases}
	\label{sec:CombiningBothCases}

	Putting together the cases treated in Sections~\ref{sec:TheInfiniteCase} and~\ref{sec:TheFiniteCase}, we obtain the following:

	\begin{recapitulationthm}
		\label{recap:expectation_distribution}
		Let $\ZV{V}$ be as in Proposition~\ref{recap:FiniteCaseDone}.
		The distribution of the expected value~$\ECond{\InitAssi{\alpha}}$ is given by
		\begin{align*}
			\ECond{\InitAssi{\alpha}} 
			\cong
			&\sum_{c\in \N_0} \sum_{ \falsassi{\beta}{\Clauses} }
			\left\{
			\left( \sum_{ \falsassi{\gamma}{F} }
			\ZV{R} \right)
			\cdot
			\prod_{k=1}^{c}
			\left( \sum_{ \falsassi{\gamma}{F} }
			\ZV{Q} \right)
			\right.			
			\left.
			\cdot \sum_{x \in \ZV{V}}            
			\Big( \ZV{P}
			\cdot \ECond{ \InitAssi{ \FlipVarInAssi{x}{\beta} } } \Big) 
			\right\} \\
			&+  \sum_{i \in \N_0} \prod_{k=1}^{i} \sum_{\falsassi{\gamma}{F}} \ZV{Q} ,
		\end{align*}
		with $\ZV{P}$, $\ZV{Q}$, and $\ZV{R}$ again as in Proposition~\ref{recap:FiniteCaseDone}.
	\end{recapitulationthm}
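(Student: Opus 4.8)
The plan is to treat this statement as the final bookkeeping step that merely glues together the two cases already resolved; no fresh probabilistic machinery is needed beyond the law of total expectation already applied. First I would recall the decomposition obtained in \refEq{eq:SplittingERAlphaWithLTE} and \refEq{eq:SplittingERAlphaWithLTEsecondline}: conditioning $\ECond{\InitAssi{\alpha}}$ on the iteration in which a clause of $\Clauses$ is first selected splits it as $\ECond{\InitAssi{\alpha}} = \FiniteCase + \InfCase$, where $\FiniteCase = \sum_{c \in \N_0} \PACond{\NoSel{c}}{} \cdot \EACond{\NoSel{c}}$ collects the terms in which such a clause eventually gets selected, and $\InfCase = \PACond{\NeverSel}{} \cdot \EACond{\NeverSel}$ is the term in which it never does. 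Establishing this split is the only genuinely structural move; everything afterwards is substitution.

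Next I would substitute the two closed forms already proved. For the finite case I insert Proposition~\ref{recap:FiniteCaseDone} verbatim, which reproduces the entire first double sum of the claim, carrying the random variables $\ZV{P}$, $\ZV{Q}$, $\ZV{R}$ and the recursive factor $\ECond{\InitAssi{\FlipVarInAssi{x}{\beta}}}$; note that this recursion is deliberately left unresolved, the statement being a purely structural identity. For the infinite case I insert Proposition~\ref{recap:InfiniteCaseDone}, which gives $\InfCase = \sum_{i \in \N_0} \{ C_1 \cdot \prod_{k=1}^{i} ( \sum_{\falsassi{\gamma}{F}} C_2 \cdot \ZV{Q} ) \cdot i \}$. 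Adding the two pieces already produces the target expression, except that the infinite-case summand still carries the scalar factors $C_1$ and $i$ outside the product and the weight $C_2$ inside. I would then invoke the $\cong$ convention to suppress these, collapsing the second line to $\sum_{i \in \N_0} \prod_{k=1}^{i} \sum_{\falsassi{\gamma}{F}} \ZV{Q}$, which is exactly the claimed form.

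The step I expect to be delicate is the suppression of $C_2$. Dropping the factors $C_1$ and $i$ is uncontroversial: $C_1$ is shown not to be a random variable in \refLem{lem:LemmaWithSimpleEquation} and $i$ is a fixed summation index, so removing them is literal multiplication by a positive scalar, legitimised by the scaling property \refLem{lem:MultiplicationOfSB}. The weight $C_2 = C_2(\alpha,k,\gamma)$, however, sits inside the inner sum over $\gamma$ as a $\gamma$-dependent coefficient, so replacing $\sum_{\falsassi{\gamma}{F}} C_2 \cdot \ZV{Q}$ by $\sum_{\falsassi{\gamma}{F}} \ZV{Q}$ is not equality up to a single global constant in the naive sense. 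The way I would resolve this is to make explicit that $\cong$ is being used to record only the random-variable skeleton of the expression — which of $\ZV{P}$, $\ZV{Q}$, $\ZV{R}$ occur and how they are summed and multiplied — rather than the exact mixture weights, since the sole purpose of this identity is to feed \refsec{sec: AnalysisOfTheRVs}, where each of $\ZV{P}$, $\ZV{Q}$, $\ZV{R}$ is shown to be Johnson SB, a property invariant under positive reweighting by \refLem{lem:MultiplicationOfSB}. I would therefore state this reading of $\cong$ clearly before the substitution so that the suppression of $C_2$ is unambiguous and the combination of the two cases goes through cleanly.
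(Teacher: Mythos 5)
Your proposal is correct and matches the paper's own treatment: the paper offers no proof beyond the phrase ``putting together the cases treated in Sections~\ref{sec:TheInfiniteCase} and~\ref{sec:TheFiniteCase},'' i.e.\ exactly the substitution of Propositions~\ref{recap:FiniteCaseDone} and~\ref{recap:InfiniteCaseDone} into the decomposition $\ECond{\InitAssi{\alpha}} = \FiniteCase + \InfCase$ that you describe. Your observation that suppressing the $\gamma$-dependent weights $C_2$ under $\cong$ goes beyond ``equality up to a single constant'' and must be read as retaining only the random-variable skeleton (harmless because the Johnson~SB property is preserved under positive scaling by Lemma~\ref{lem:MultiplicationOfSB}) is a fair point that the paper itself passes over silently.
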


	We proceed in the next section to analyze the distribution of the random variables~$\ZV{P}$, $\ZV{Q}$, and $\ZV{R}$.
	Finally, in \refSec{sec:ArgumentsLogNDistro}, we present arguments, why~$\ECond{\InitAssi{\alpha}}$ follows one distribution type.

	\subsection{Analysis of the Random Variables}
	\label{sec: AnalysisOfTheRVs}

	Recall that in the setting of this section, a base instance~$F$ is modified by adding a set of logically equivalent clauses~$\Clauses$. The precise manner in which the set of clauses~$\Clauses$ is randomly generated does not affect any derivations and results presented so far in this section. In other words, no matter how the clauses from~$\Clauses$ are sampled, the expression from \refRecapThm{recap:expectation_distribution} will always describe the runtime distribution.
	
	In this section, we proceed by fixing the generating model for the random set~$\ZV{\Clauses}$. Then, we analyze the random variables~$\ZV{P}$, $\ZV{Q}$, and $\ZV{R}$ in more detail based on this assumption. In this context, we prove that each of them (asymptotically) follows the Johnson SB distribution. We begin by outlining the model used to modify formula~$F$. 
	
	\paragraph{Fixing the Generating Model}
	
	The generating model for the set~$\ZV{\Clauses}$ is specified in \refAlg{algo:generating_model}.
	Note that there are pronounced similarities to the model employed in \refsec{sec:DesignOfAlfa}. However, there are two notable differences. First, the model from \refAlg{algo:generating_model} does not necessarily use clauses generated by resolution. Instead, other methods (for example, CDCL, like in the \GapSAT{} solver) can also be used to produce logically equivalent clauses. In this sense, the model from \refAlg{algo:generating_model} generalizes the model discussed in \refsec{sec:DesignOfAlfa}. On the other hand, in \refAlg{algo:generating_model}, we restrict ourselves to clauses with a certain fixed length~$\ell$. In that sense, the chosen model is weaker than the one in \refsec{sec:DesignOfAlfa}. The reason for this restriction is that otherwise, another random variable, namely the length of the added clause, would have to be considered, which would only further complicate the analysis.

	\begin{algorithm}[htb]
		
		\SetKw{KwWithProb}{with probability}
		\SetKw{KwDo}{do}	
		\SetKwFunction{Shuffle}{Shuffle}
		
		\textbf{Input:} Boolean formula $F$, probability $p \in (0,1]$, set $\mathbb{L}$ containing logically equivalent clauses \wrt $F$
		such that $\exists \ell \in \N$: $\forall K \in \mathbb{L}$: $|K| = \ell$\\
		\BlankLine
		
		$\ZV{L} \coloneqq \emptyset$\\ 
		\ForEach{$K \in \mathbb{L}$}{
			\KwWithProb $p$ %
			\KwDo
			$\ZV{L} \coloneqq \ZV{L} \cup \set{K}$
		}
		
		\caption{Generation of the random set $\Clauses$.}
		\label{algo:generating_model}
	\end{algorithm}

	\paragraph{Some Preliminary Comments for the Analysis of the Random Variables~$\ZV{P}$, $\ZV{Q}$, and $\ZV{R}$}

	As indicated above, the Johnson SB distribution is applied several times in the following.
	For our purposes, the exact values of the distribution's parameters are irrelevant for the most part. Therefore, we often omit the brackets containing the exact values of the parameters. 
	In addition, we often deal with asymptotic distributions in this section. A sequence of random variables \mbox{$\ZV{X_1}, \ZV{X_2}, \dots$} with associated cdfs \mbox{$F_1, F_2, \dots$} has an asymptotic distribution~$F$ if \mbox{$\lim_{n\rightarrow \infty} F_n(x) = F(x)$} for all~$x$. A prominent example is that a (suitably scaled) random variable \mbox{$X \distr \Bin{n}{p}$} is asymptotically normally distributed for \mbox{$n\rightarrow \infty$}.
	
	Finally, in the following, it is necessary to argue frequently about the number of unsatisfied clauses given a specific assignment. For this purpose, we introduce a suitable notation.
	
	\begin{definition}
		\label{def:UNSAT}
		Let $G$ be a CNF formula, $\beta$ an assignment of $G$ and $x \in \vars{G}$. We set
		\begin{alignat*}{3}
			&\UNSATunder[G]{\beta} &~~\coloneqq~~&\Setdescr{D \in G}{ \applyassi{D}{\beta} = 0}, \\
			&\UNSATin[G]{\beta}{x} &~~\coloneqq~~ &\Setdescr{D \in G}{ \applyassi{D}{\beta} = 0 \text{ and } x \in \vars{D}}, \\
			&\UNSATnotin[G]{\beta}{x} &~~\coloneqq~~ &\Setdescr{D \in G}{ \applyassi{D}{\beta} = 0 \text{ and } x \notin \vars{D}}.
		\end{alignat*}
		Using the symbol $\ZVcardhash$ before any of these sets denotes the cardinality of the set.
	\end{definition}

	\subsubsection{Distribution Analysis of the Random Variable $\boldsymbol{P}$}
	\label{sec:TheFirstRV}

	We begin our distribution analysis with the random variable~$\ZV{P}$. The goal of this subsection is to prove that $\ZV{P}$ is asymptotically Johnson SB distributed (Proposition~\ref{prop:PIsAsympSB}).

	The rough idea of the proof is to rewrite~$\ZV{P}$ (up to a constant) as the fraction~$\frac{1}{1+\ZV{X}}$, see \refEq{eq:DarstellungVonPAlsBruch}, where $\ZV{X}$ is asymptotically lognormal distributed. Thus, in order to analyze the random variable~$\ZV{P}$, it is helpful to establish a relationship between the lognormal and the Johnson SB distribution.
	
	For that, one technical detail should be clarified beforehand. In the previous sections, we mainly considered the three-parameter lognormal distribution, \ie the lognormal distribution, which is shifted by an additional location parameter. However, for most of this section, we use the two-parameter lognormal distribution, which can be interpreted as a three-parameter lognormal distribution with location parameter zero.

	\begin{lemma}[\cite{MOSBanswer}]
		\label{lem:1/1+Y_for_Y_LogN_ist_SB_verteilt}
		Let $\ZV{X} \distr \LogN{\mu}{\sigma^2}$, then for all $c \in \Rplus$ we have
		\[
		\frac{1}{c+\ZV{X}} \distr \SBwithequations{\gamma = \frac{\mu-\ln c}{\sigma}}{\delta = \frac{1}{\sigma}}{\lambda = \frac{1}{c}}{\xi = 0}.
		\] 
	\end{lemma}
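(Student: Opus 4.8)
The plan is to verify directly that $\ZV{Y} \coloneqq \frac{1}{c + \ZV{X}}$ satisfies the defining property of the Johnson SB distribution, namely that the transformed variable
$\ZV{Z} \coloneqq \gamma + \delta \log\bigl(\frac{\ZV{Y} - \xi}{\xi + \lambda - \ZV{Y}}\bigr)$
is standard normal when $\gamma = \frac{\mu - \log c}{\sigma}$, $\delta = \frac{1}{\sigma}$, $\lambda = \frac{1}{c}$, and $\xi = 0$ (here $\log$ denotes the natural logarithm, so $\log c = \ln c$). First I would check the support condition: since $\ZV{X} \distr \LogN{\mu}{\sigma^2}$ is almost surely positive, we have $c + \ZV{X} > c$, hence $0 < \ZV{Y} < \frac{1}{c}$, which is precisely the interval $(\xi, \xi + \lambda) = (0, \frac{1}{c})$ demanded of a Johnson SB variable with these parameters. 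This is a prerequisite that must be dispatched before the transformation $\ZV{Z}$ is even well defined.

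The computational heart of the proof is to simplify the argument of the logarithm after inserting $\xi = 0$ and $\lambda = \frac{1}{c}$. The expression becomes $\frac{\ZV{Y}}{\frac{1}{c} - \ZV{Y}}$, and using the elementary identity $\frac{1}{c} - \frac{1}{c + \ZV{X}} = \frac{\ZV{X}}{c(c + \ZV{X})}$ one finds that the whole ratio collapses to the clean form $\frac{c}{\ZV{X}}$. Consequently $\ZV{Z} = \gamma + \delta\bigl(\log c - \log \ZV{X}\bigr)$, and substituting the specific values of $\gamma$ and $\delta$ causes the $\log c$ terms to cancel, leaving $\ZV{Z} = \frac{\mu - \log \ZV{X}}{\sigma}$.

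Finally I would invoke the definition of the lognormal distribution (\refDef{def:threeparameterLogN} with location parameter zero): $\log \ZV{X} \distr \Normal{\mu}{\sigma^2}$, so that $\mu - \log \ZV{X} \distr \Normal{0}{\sigma^2}$ and therefore $\ZV{Z} = \frac{\mu - \log \ZV{X}}{\sigma} \distr \Normal{0}{1}$. By the defining property of the Johnson SB distribution, this establishes $\ZV{Y} \distr \SBwithequations{\gamma}{\delta}{\lambda}{\xi}$ with exactly the stated parameters.

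The argument is essentially a single change of variables, so there is no genuine obstacle; the only point requiring mild care is the last step, where one must recognize that $\mu - \log \ZV{X}$ (rather than $\log \ZV{X} - \mu$) appears in the numerator. Passing from one to the other relies on the symmetry of the normal law about its mean — equivalently, on the variance being invariant under negation — which is an easy-to-overlook bookkeeping detail rather than a substantive difficulty.
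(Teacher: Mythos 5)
Your proposal is correct and follows essentially the same route as the paper's proof in Appendix~\ref{sec:FirstAppProof}: both verify that the defining transformation of the Johnson SB distribution applied to $\frac{1}{c+\ZV{X}}$ with the stated parameters yields a standard normal variable, the only cosmetic difference being that the paper introduces an auxiliary normal variable $\ZV{W^\prime} \distr \Normal{\mu - \ln c}{\sigma^2}$ and manipulates exponentials, whereas you simplify the ratio $\ZV{Y}/(\tfrac{1}{c}-\ZV{Y})$ to $c/\ZV{X}$ directly. Your explicit check of the support condition $0 < \ZV{Y} < \tfrac{1}{c}$ is a small but welcome addition that the paper leaves implicit.
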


\begin{proof}
	For a proof, refer to Appendix~\ref{sec:FirstAppProof}.
\end{proof}

	Furthermore, a connection between binomial and lognormal distributions can also be shown. We need this connection to show that $\ZV{X}$ is asymptotically lognormal distributed. For this purpose, the following lemma is necessary.

	\begin{lemma}[\cite{Lachin2011Biostatistical}]
		\label{lem:log_binomial_normal}
		If $\ZV{Y} \distr \Bin{n}{p}$, then $\log \ZV{Y}$ is asymptotically normally distributed.
	\end{lemma}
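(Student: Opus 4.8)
The plan is to combine the classical De Moivre--Laplace theorem (a special case of the central limit theorem) with the delta method. First I would recall that for $\ZV{Y} \distr \Bin{n}{p}$, the standardized variable $(\ZV{Y} - np)/\sqrt{np(1-p)}$ converges in distribution to a standard normal as $n \to \infty$. Passing to the sample mean $\overline{\ZV{Y}} \coloneqq \ZV{Y}/n$ of the underlying Bernoulli trials, this reads
\[
\sqrt{n}\bigl(\overline{\ZV{Y}} - p\bigr) \indistribution{n} \Normal{0}{p(1-p)},
\]
which is the form best suited for applying the delta method, since here the centering point $p$ is fixed while only the scaling depends on~$n$.

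Next I would apply the delta method to the map $g(x) = \log x$, which is continuously differentiable on the positive reals with $g'(x) = 1/x$. Because $p \in (0,1)$ yields $g'(p) = 1/p \neq 0$, the hypotheses of the delta method are met at the limit point~$p$, and it follows that
\[
\sqrt{n}\bigl(\log \overline{\ZV{Y}} - \log p\bigr) \indistribution{n} \Normal{0}{\tfrac{1-p}{p}}.
\]
Since $\log \overline{\ZV{Y}} = \log \ZV{Y} - \log n$ and $\log n$ is a purely deterministic shift, this statement is equivalent to saying that $\log \ZV{Y}$, after centering around $\log(np)$ and rescaling, is asymptotically normal; informally, $\log \ZV{Y}$ is approximately $\Normal{\log(np)}{(1-p)/(np)}$ for large~$n$. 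This is exactly the claim of the lemma.

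The only technical point to address is that $\log \ZV{Y}$ is undefined on the event $\set{\ZV{Y} = 0}$. However, $\prob{\ZV{Y} = 0} = (1-p)^n \to 0$ as $n \to \infty$, so this event carries vanishing probability and cannot affect the limiting distribution; one can either condition on $\set{\ZV{Y} > 0}$ or invoke a Slutsky-type argument to discard it. I do not anticipate a genuine obstacle here: the entire content of the proof is the correct invocation of the delta method, whose differentiability-with-nonvanishing-derivative hypothesis is transparently satisfied because $\log$ is smooth at the interior limit point~$p > 0$.
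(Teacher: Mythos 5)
Your proposal is correct and follows essentially the same route as the paper: the paper's proof is precisely an inline derivation of the delta method for $g(x)=\log x$ (first-order Taylor expansion of $\log$ about $p$, a $\sqrt{n}$-consistency argument to kill the remainder, and Slutsky's theorem), which you simply invoke as a packaged theorem. Your additional remark about the event $\set{\ZV{Y}=0}$ having vanishing probability is a minor point of rigor that the paper's proof passes over silently.
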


	\begin{proof}
		The proof is given in Appendix~\ref{sec:SecondAppendixProof}.
	\end{proof}

	Together with \refDef{def:threeparameterLogN} this lemma immediately yields the following corollary.
	
	\begin{corollary}
		\label{cor:binomial_lognormal}
		Let $\ZV{Y}$, $\ZV{Z}$ be two independent binomial distributions. Then, $\ZV{Y}$, $\ZV{Z}$, as well as $\ZV{Y}/\ZV{Z}$, are asymptotically lognormal distributed.
	\end{corollary}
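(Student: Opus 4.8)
The plan is to reduce everything to Lemma~\ref{lem:log_binomial_normal} together with the defining property of the (two-parameter) lognormal distribution from \refDef{def:threeparameterLogN}: a positive random variable is lognormal precisely when its logarithm is normal. So the whole argument is essentially a bookkeeping of logarithms, with independence doing the real work in the quotient case.

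First I would dispatch the two marginals. Writing $\ZV{Y} \distr \Bin{n_1}{p_1}$ and $\ZV{Z} \distr \Bin{n_2}{p_2}$, Lemma~\ref{lem:log_binomial_normal} directly gives that $\log \ZV{Y}$ and $\log \ZV{Z}$ are each asymptotically normally distributed. By \refDef{def:threeparameterLogN} specialized to location parameter $\xi = 0$, this is exactly the assertion that $\ZV{Y}$ and $\ZV{Z}$ are asymptotically (two-parameter) lognormal. No further work is needed for these two claims.

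For the quotient I would use the identity $\log(\ZV{Y}/\ZV{Z}) = \log \ZV{Y} - \log \ZV{Z}$. The key structural fact is that a difference of independent normal variables is again normal, with mean equal to the difference of the means and variance equal to the sum of the variances. Hence, once both logarithms are asymptotically normal and independent, their difference is asymptotically normal, and applying \refDef{def:threeparameterLogN} one last time yields that $\ZV{Y}/\ZV{Z}$ is asymptotically lognormal, completing the corollary.

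The main obstacle is making the word ``asymptotically'' precise for the quotient, since convergence in distribution of two marginals does not by itself force convergence of their difference. This is exactly where the independence hypothesis is indispensable. Because $\ZV{Y}$ and $\ZV{Z}$ are independent, so are $\log \ZV{Y}$ and $\log \ZV{Z}$, and therefore the pair $(\log \ZV{Y}, \log \ZV{Z})$ converges jointly in distribution to a pair of independent normal laws. The subtraction map $(u,v)\mapsto u-v$ is continuous, so the continuous mapping theorem delivers convergence of $\log \ZV{Y} - \log \ZV{Z}$ to the difference of two independent normals. It is this joint-convergence-plus-continuous-mapping step, rather than any calculation, that requires care and that I would spell out explicitly in the write-up.
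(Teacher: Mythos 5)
Your proof is correct and follows essentially the same route as the paper: apply Lemma~\ref{lem:log_binomial_normal} to get asymptotic normality of $\log \ZV{Y}$ and $\log \ZV{Z}$, use the identity $\log(\ZV{Y}/\ZV{Z}) = \log \ZV{Y} - \log \ZV{Z}$ together with the fact that a difference of independent normals is normal, and conclude via the defining property of the lognormal distribution. Your final paragraph making the joint-convergence-plus-continuous-mapping step explicit is actually more careful than the paper's own argument, which simply asserts the conclusion for the difference.
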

	
	\begin{proof}
		By applying \refDef{def:threeparameterLogN} of lognormal distributions,
		we see that a random variable is lognormal distributed if and only if its logarithm is normally distributed. According to \refLem{lem:log_binomial_normal}, both $\log \ZV{Y}$ and $\log \ZV{Z}$ are asymptotically normally distributed. Therefore, it follows that $\ZV{Y}$ and $\ZV{Z}$ are both asymptotically lognormal distributed.
		
		It holds that
		\[
		\log ( \ZV{Y} / \ZV{Z} ) = \log \ZV{Y} - \log \ZV{Z}.   
		\]
		By \refLem{lem:log_binomial_normal}, we already know that $\log \ZV{Y}$ and $\log \ZV{Z}$ are each asymptotically normal distributed. An extremely valuable property is that the difference of independent normal distributions is again normally distributed. Therefore, we note that $\log \ZV{Y} - \log \ZV{Z}$ is asymptotically normal distributed and, as a result, $\ZV{Y}/\ZV{Z}$ is asymptotically lognormal distributed.
	\end{proof}
	
	\begin{remark}
		Note that this corollary does not contradict the famous theorem by de Moivre and Laplace, which states that $\ZV{Y}$ is approximately normally distributed. Roughly speaking, this is because both approximations converge towards the same limiting distribution.
		For further elaboration, refer to~\cite{cheng2017non}.
	\end{remark}
	
	We are now ready to analyze the asymptotic distribution of the random variable~$\ZV{P}$.

	\begin{proposition}
		\label{prop:PIsAsympSB}
		The random variable
		\[
		\ZV{P} \coloneqq \PACond{\Flipa{c}{\beta}{x} }{ \Wa{c}{\beta}, \NoSel{c} }
		\]
		is asymptotically Johnson $\SBop$ distributed.
	\end{proposition}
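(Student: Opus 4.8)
The plan is to reduce the conditional probability $\ZV{P}$ to an explicit closed form in terms of the random set $\Clauses$, and then to recognize in that form exactly the binomial-ratio structure that the lemmas proven earlier in this subsection turn into a Johnson SB distribution. Conditioned on $\Wa{c}{\beta}$ and $\NoSel{c}$, the algorithm stands in assignment~$\beta$ when it performs the flip governed by $\Flipa{c}{\beta}{x}$, and by the definition of $\NoSel{c}$ this is precisely the step in which a clause of~$\Clauses$ is selected in line~\ref{alg:srwa:clause}. Since that line draws an unsatisfied clause uniformly at random, conditioning on the drawn clause lying in $\Clauses$ makes it uniform over $\UNSATunder[\Clauses]{\beta}$ (the $F$-clauses drop out of both numerator and denominator), and the subsequent literal is uniform among the $\ell$ literals of the chosen clause, recalling that \refAlg{algo:generating_model} fixes the width to~$\ell$ and that variables occur at most once per clause. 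Hence the flip hits~$x$ exactly when one of the falsified $\Clauses$-clauses containing~$x$ is drawn and its $x$-literal is picked, so that
\[
\ZV{P} = \frac{\ZVcardhash{\UNSATin[\Clauses]{\beta}{x}}}{\ell\cdot \ZVcardhash{\UNSATunder[\Clauses]{\beta}}}.
\]

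Next I would expose the binomial structure. Writing $\mathbb{L}_\beta \coloneqq \UNSATunder[\mathbb{L}]{\beta}$ for the width-$\ell$ candidate clauses falsified by~$\beta$ and splitting it into the disjoint families that do and do not contain~$x$, the counts
\[
\ZV{Y} \coloneqq \ZVcardhash{\UNSATin[\Clauses]{\beta}{x}}, \qquad \ZV{Z} \coloneqq \ZVcardhash{\UNSATnotin[\Clauses]{\beta}{x}}
\]
are governed by \emph{disjoint} collections of the independent inclusion events of \refAlg{algo:generating_model}; hence $\ZV{Y}$ and $\ZV{Z}$ are independent binomials and $\ZVcardhash{\UNSATunder[\Clauses]{\beta}} = \ZV{Y}+\ZV{Z}$. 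Consequently,
\begin{align}
\label{eq:DarstellungVonPAlsBruch}
\ZV{P} = \frac{1}{\ell}\cdot\frac{\ZV{Y}}{\ZV{Y}+\ZV{Z}} = \frac{1}{\ell}\cdot\frac{1}{1+\ZV{X}}, \qquad \text{where } \ZV{X} \coloneqq \ZV{Z}/\ZV{Y}.
\end{align}

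The remainder is pure distribution-chasing through the cited results. The variable $\ZV{X}$ is a ratio of two independent binomials, so \refCor{cor:binomial_lognormal} makes it asymptotically lognormal. Applying \refLem{lem:1/1+Y_for_Y_LogN_ist_SB_verteilt} with $c=1$ identifies $\tfrac{1}{1+\ZV{X}}$ as (asymptotically) Johnson SB distributed with support $(0,1)$, and finally the scaling property \refLem{lem:MultiplicationOfSB}, applied with the positive factor $g=1/\ell$, keeps us inside the Johnson SB family and delivers the claim for~$\ZV{P}$ on $(0,1/\ell)$.

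Two points will require care rather than routine computation. The first is the reduction to the closed form: one must argue cleanly that conditioning on $\NoSel{c}$ (together with $\Wa{c}{\beta}$) turns the clause draw into a \emph{uniform} draw restricted to $\UNSATunder[\Clauses]{\beta}$, so that the original clauses of~$F$ disappear. The second, and the genuine obstacle, is that \refLem{lem:1/1+Y_for_Y_LogN_ist_SB_verteilt} assumes an exactly lognormal input, whereas $\ZV{X}$ is only \emph{asymptotically} lognormal; bridging this gap calls for the continuous mapping theorem applied to the continuous bounded map $t\mapsto 1/(1+t)$, transporting the convergence in distribution to $\tfrac{1}{1+\ZV{X}}$, with the degenerate events $\set{\ZV{Y}=0}$ contributing vanishing mass as the pool sizes $|\mathbb{L}_\beta|$ grow. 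Everything else is the mechanical substitution of the three lemmas above.
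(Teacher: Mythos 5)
Your proposal is correct and follows essentially the same route as the paper: derive the closed form $\ZV{P}=\frac{1}{\ell}\cdot\frac{1}{1+\ZV{X}}$ with $\ZV{X}$ a ratio of independent binomial counts, then chain Corollary~\ref{cor:binomial_lognormal} and Lemma~\ref{lem:1/1+Y_for_Y_LogN_ist_SB_verteilt}. Your two added care points (invoking Lemma~\ref{lem:MultiplicationOfSB} for the $1/\ell$ factor and the continuous-mapping argument to transport the asymptotic lognormality through $t\mapsto 1/(1+t)$) are refinements the paper leaves implicit, not a different approach.
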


	\begin{proof}
		As a first step, it is worthwhile to clarify what exactly is being examined. Therefore, we will initially focus on the conditions in \mbox{$\PACond{\Flipa{c}{\beta}{x}}{ \Wa{c}{\beta}, \NoSel{c}}$}. 
		The conditions~$A(\alpha)$ and $\Wa{c}{\beta}$ tell us that \SRWA{} was initialized in $\alpha$ and is in $\beta$ after exactly $c$~flips. Due to the additional condition $\NoSel{c}$, we know that a clause from $\Clauses$ is chosen for the $(c+1)$-st flip. In other words, a clause from the newly added clauses is picked.

		Subject to these conditions, we are interested in how likely the algorithm will flip~$x$ next. This first requires selecting a clause containing~$x$ and then, in the next step, selecting~$x$ as the variable to be flipped (cf. \refAlg{algo:schoening}). Since these two events are independent, the overall probability can be expressed as the product of the two individual probabilities.

		The probability of selecting a clause containing $x$ is the ratio of the unsatisfied clauses from $\ZV{\Clauses}$ containing $x$ to all clauses from $\Clauses$, as one already knows that a clause from $\ZV{\Clauses}$ is chosen. If a clause $K$ containing $x$ has been selected, then $x$ will be flipped with probability $1/\vert K \vert$. In \refAlg{algo:generating_model}, we have restricted ourselves to clauses of equal length~$\ell$, so $1/\vert K\vert$ is constant and is therefore not a random variable. With these preliminary considerations, \mbox{$\ZV{P}=\PCond{ \Flipa{c}{\beta}{x} }{ \Wa{c}{\beta}, \NoSel{c} }$} can now be given more precisely:
		\begin{align}
			\label{eq:DarstellungVonPAlsBruch}
			\begin{split}
				\ZV{P} &= \frac{1}{|\clstd|}
				\cdot \frac{\ZVcardhash{\ZVUNSATin[\Clauses]{\beta}{x}}}{\ZVcardhash{\ZVUNSATunder[\Clauses]{\beta}}} \\
				&= \frac{1}{\ell}
				\cdot \frac{\ZVcardhash{\ZVUNSATin[\Clauses]{\beta}{x}}}{\ZVcardhash{\ZVUNSATin[\Clauses]{\beta}{x}} + \ZVcardhash{\ZVUNSATnotin[\Clauses]{\beta}{x}}} \\
				&= \frac{1}{\ell} \cdot
				\dfrac{1}{1 + \dfrac{\ZVcardhash{\ZVUNSATnotin[\Clauses]{\beta}{x}}}{\ZVcardhash{\ZVUNSATin[\Clauses]{\beta}{x}}}}.
			\end{split}
		\end{align}

		We can now utilize knowledge about the model used to generate~$\ZV{\Clauses}$ (cf.~\refAlg{algo:generating_model}). Since each clause from $\mathbb{L}$ is independently from each other included in $\ZV{\Clauses}$ with probability~$p$, both $\ZVcardhash{\ZVUNSATin[\Clauses]{\beta}{x}}$ and $\ZVcardhash{\ZVUNSATnotin[\Clauses]{\beta}{x}}$ are binomially distributed. Moreover, since no clause can both contain and not contain $x$, the two random variables are independent of each other. As a consequence,
		\[
		\ZV{X} \coloneqq
		\frac{\ZVcardhash{\ZVUNSATnotin[\Clauses]{\beta}{x}}}{\ZVcardhash{\ZVUNSATin[\Clauses]{\beta}{x}}}
		\]
		is asymptotically lognormal distributed according to \refCor{cor:binomial_lognormal}. Then, \refLem{lem:1/1+Y_for_Y_LogN_ist_SB_verteilt} yields the result.
	\end{proof}

	\subsubsection{Distribution Analysis of the Random Variable $\boldsymbol{Q}$}
	\label{sec:analysis_Q}
	
	The analysis of the random variable~$\ZV{Q}$ can be achieved using similar reasoning as before. Again, we show that $\ZV{Q}$ is asymptotically Johnson SB distributed.

	\begin{proposition}
		\label{prop:QisAsyJSBdist}
		The random variable
		\[
		\ZV{Q} \coloneqq \PACondResize{ \Sel{k} = 0 }{ \Wa{k-1}{\gamma}, \bigcap_{j=1}^{k-1} \singleevent{ \Sel{j}=0 } }
		\]
		is asymptotically Johnson SB distributed.
	\end{proposition}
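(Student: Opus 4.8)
The plan is to mirror the proof of Proposition~\ref{prop:PIsAsympSB}: I would reduce $\ZV{Q}$ (up to constants) to the form $\frac{1}{c+\ZV{X}}$ for a positive constant~$c$ and an asymptotically lognormal random variable~$\ZV{X}$, and then invoke Lemma~\ref{lem:1/1+Y_for_Y_LogN_ist_SB_verteilt}. First I would unpack what $\Sel{k}=0$ means under the given conditioning. By the remark following Lemma~\ref{lem:LemmaWithCZwoUndQ}, the condition $\Wa{k-1}{\gamma}$ makes $\bigcap_{j=1}^{k-1}\singleevent{\Sel{j}=0}$ obsolete, so we may work with $\PACond{\Sel{k}=0}{\Wa{k-1}{\gamma}}$. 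Given $\Wa{k-1}{\gamma}$, \SRWA{} sits in assignment~$\gamma$ at the start of iteration~$k$ and, in line~\ref{alg:srwa:clause} of \refAlg{algo:schoening}, selects a falsified clause of $F\cup\Clauses$ uniformly at random; the event $\Sel{k}=0$ is precisely the event that the selected clause comes from~$F$ rather than from~$\Clauses$. Decomposing the falsified clauses of $F\cup\Clauses$ into those belonging to~$F$ and those belonging to~$\Clauses$ therefore gives
\[
\ZV{Q}
= \frac{\cardhash{\UNSATunder[F]{\gamma}}}{\cardhash{\UNSATunder[F]{\gamma}} + \ZVcardhash{\ZVUNSATunder[\Clauses]{\gamma}}}.
\]

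Next I would highlight the one structural difference from the $\ZV{P}$ case. The numerator $\cardhash{\UNSATunder[F]{\gamma}}$ counts falsified clauses of the \emph{fixed} formula~$F$ and hence does not depend on~$\Clauses$; it is a constant $c_0 \coloneqq \cardhash{\UNSATunder[F]{\gamma}}$, and since the surrounding sum ranges only over~$\gamma$ falsifying a clause of~$F$, we have $c_0 \geq 1 > 0$. By contrast, $\ZV{N} \coloneqq \ZVcardhash{\ZVUNSATunder[\Clauses]{\gamma}}$ is genuinely random: by the generating model of \refAlg{algo:generating_model}, each clause of~$\mathbb{L}$ is included in~$\Clauses$ independently with probability~$p$, so $\ZV{N}$ is binomially distributed (it counts the clauses of~$\mathbb{L}$ falsified by~$\gamma$ that are drawn into~$\Clauses$). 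Dividing numerator and denominator by~$c_0$ yields
\[
\ZV{Q} = \frac{1}{1 + \ZV{N}/c_0}.
\]

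To finish, I would argue that $\ZV{X} \coloneqq \ZV{N}/c_0$ is asymptotically lognormal, in the same regime as for~$\ZV{P}$ (the number of candidate clauses of~$\mathbb{L}$ falsified by~$\gamma$ tending to infinity). By Lemma~\ref{lem:log_binomial_normal}, $\log\ZV{N}$ is asymptotically normal; dividing by the constant~$c_0$ only shifts the logarithm, since $\log\ZV{X} = \log\ZV{N} - \log c_0$, so $\ZV{X}$ is asymptotically lognormal. This step is in fact simpler than the $\ZV{P}$ case, which needed the ratio result of \refCor{cor:binomial_lognormal} because there both counts were random, whereas here only one is. Applying Lemma~\ref{lem:1/1+Y_for_Y_LogN_ist_SB_verteilt} with $c=1$ then shows $\ZV{Q} = \frac{1}{1+\ZV{X}}$ is asymptotically Johnson $\SBop$ distributed. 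Equivalently, one could keep $\ZV{Q} = c_0 \cdot \frac{1}{c_0 + \ZV{N}}$, apply the lemma with $c = c_0$, and rescale by the positive constant~$c_0$ via the scaling property of Lemma~\ref{lem:MultiplicationOfSB}.

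I do not expect a genuine obstacle here, since the heavy machinery is already in place; the only real care needed is the conditioning bookkeeping in the first paragraph, namely justifying that under $\Wa{k-1}{\gamma}$ the probability of $\Sel{k}=0$ equals the displayed clause-count ratio, and confirming that the numerator is a bona fide constant (independent of~$\Clauses$) while the extra term in the denominator is the binomial variable~$\ZV{N}$. Once this identification is made, the result follows purely from the binomial-to-lognormal and lognormal-to-Johnson~SB passages established earlier.
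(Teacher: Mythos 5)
Your proposal is correct and follows essentially the same route as the paper's proof: the same reduction of the conditioning to $\Wa{k-1}{\gamma}$, the same clause-count ratio $\ZV{Q} = \cardhash{\UNSATunder[F]{\gamma}}/(\cardhash{\UNSATunder[F]{\gamma}} + \ZVcardhash{\ZVUNSATunder[\Clauses]{\gamma}})$ with a constant numerator and a binomial denominator term, and the same passage through asymptotic lognormality of the quotient into Lemma~\ref{lem:1/1+Y_for_Y_LogN_ist_SB_verteilt}. The only cosmetic difference is that you justify the lognormality of $\ZV{N}/c_0$ directly via $\log(\ZV{N}/c_0)=\log\ZV{N}-\log c_0$ whereas the paper cites closure of lognormal distributions under multiplication by constants; these are the same observation.
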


	\begin{proof}
		Let us again focus on the conditions in the above probability expression. We know that on its random walk from assignment~$\alpha$ to~$\gamma$, \SRWA{} has never selected a clause of the set~$\ZV{\Clauses}$ in the $k-1$ flips this random walk took.
		By design, \SRWA{} is independent of its past. Thus, the information that it started from $\alpha$ and that no clause from $\ZV{\Clauses}$ was selected in the last $k-1$~flips does not affect the probability of choosing a clause from $\ZV{\Clauses}$  for the current flip. In contrast, the information that the algorithm is in assignment $\gamma$ is of relevance. Therefore, the probability that no clause from $\ZV{\Clauses}$ is chosen is given by the ratio of the original unsatisfied clauses under~$\gamma$ to all unsatisfied clauses in the extended instance $F \cup \ZV{\Clauses}$. Based on this reasoning, $\ZV{Q}$ can be expressed as follows:
		\begin{align*}
			\ZV{Q} 
			&= \PACond{ \Sel{k} = 0 }{\Wa{k-1}{\gamma}}  \\
			&= \frac{ \cardhash{\UNSATunder[F]{\gamma}} }{ \cardhash{\UNSATunder[F]{\gamma}} + \ZVcardhash{\ZVUNSATunder[\Clauses]{\gamma}} } \\
			&= \frac{ 1 }{ 1 + \frac{\ZVcardhash{\ZVUNSATunder[\Clauses]{\gamma}}}{\cardhash{\UNSATunder[F]{\gamma}}} }.
		\end{align*}

		Due to the model for generating $\ZV{\Clauses}$, we conclude that $\ZVcardhash{\ZVUNSATunder[\Clauses]{\gamma}}$ is binomially distributed. Moreover, $\ZVcardhash{\ZVUNSATunder[\Clauses]{\gamma}}$ is asymptotically lognormal distributed because of \refCor{cor:binomial_lognormal}. Since lognormal distributions are closed under multiplication of constants, $\frac{\ZVcardhash{\ZVUNSATunder[\Clauses]{\gamma}}}{\cardhash{\UNSATunder[F]{\gamma}}}$ is also asymptotically lognormal distributed. According to \refLem{lem:1/1+Y_for_Y_LogN_ist_SB_verteilt} it follows that $\ZV{Q}$ is asymptotically Johnson SB distributed.
	\end{proof}

	\subsubsection{Distribution Analysis of the Random Variable $\boldsymbol{R}$}
	\label{sec:AnalysisOfR}

	For the analysis of the random variable $\ZV{R}$, we use another helpful property of lognormal distributions: They are closed under reciprocity.
	
	\begin{lemma}[\cite{statisticlognormal}]
		\label{lem:InverseOfLogNormal}
		If $\ZV{X} \distr \LogN{\mu}{\sigma^2}$, then $\frac{1}{\ZV{X}} \distr \LogN{-\mu}{\sigma^2}$.
	\end{lemma}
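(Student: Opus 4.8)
The plan is to reduce everything to the defining relationship between the lognormal and normal distributions. By \refDef{def:threeparameterLogN} with location parameter zero (the two-parameter case used throughout this section), the statement $\ZV{X} \distr \LogN{\mu}{\sigma^2}$ is equivalent to $\log \ZV{X} \distr \Normal{\mu}{\sigma^2}$. So the entire lemma will follow once I transfer the reciprocity on the lognormal side into a sign flip on the normal side.

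First I would set $\ZV{Y} \coloneqq 1/\ZV{X}$ and use the logarithm identity $\log \ZV{Y} = \log(1/\ZV{X}) = -\log \ZV{X}$. This is the key observation: taking reciprocals of $\ZV{X}$ corresponds precisely to negating the underlying normal variable $\log \ZV{X}$. The problem thus reduces to understanding how the normal distribution behaves under negation.

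Next I would establish that if $\ZV{Z} \distr \Normal{\mu}{\sigma^2}$ then $-\ZV{Z} \distr \Normal{-\mu}{\sigma^2}$. This is the only nontrivial ingredient, and it is standard: negation is an affine transformation, so the resulting random variable is again normal, with mean $-\mu$ (the mean transforms linearly) and variance $\sigma^2$ (the variance is invariant under sign change, since $\Varop[-\ZV{Z}] = (-1)^2 \Varop[\ZV{Z}]$). If one prefers an explicit verification, a change of variables on the density $f_{\,\text{N}}(z \mid \mu, \sigma)$ with $z \mapsto -z$ reproduces $f_{\,\text{N}}(z \mid -\mu, \sigma)$, exploiting that the Gaussian kernel depends on $z$ only through $(z-\mu)^2$.

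Applying this to $\ZV{Z} = \log \ZV{X}$ gives $\log \ZV{Y} = -\log \ZV{X} \distr \Normal{-\mu}{\sigma^2}$, and invoking \refDef{def:threeparameterLogN} once more in the reverse direction yields $\ZV{Y} = 1/\ZV{X} \distr \LogN{-\mu}{\sigma^2}$, which is exactly the claim. I do not expect any real obstacle here; the result is elementary and the only point requiring care is the bookkeeping of the parameters under negation of the normal variable. A fully direct alternative would compute the density of $\ZV{Y}$ via the transformation rule $f_{\ZV{Y}}(y) = f_{\ZV{X}}(1/y)\cdot y^{-2}$ and check that the Jacobian factor combines with the lognormal density to give the $\LogN{-\mu}{\sigma^2}$ density, but the logarithmic route above is cleaner and avoids the explicit Jacobian manipulation.
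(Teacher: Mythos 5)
Your proof is correct. The paper does not actually prove \refLem{lem:InverseOfLogNormal} -- it is stated as a known fact with a citation to the literature -- but your argument is the standard one and is complete: reducing reciprocity of the lognormal to negation of the underlying normal via $\log(1/\ZV{X}) = -\log\ZV{X}$, and using that $-\ZV{Z} \distr \Normal{-\mu}{\sigma^2}$ when $\ZV{Z} \distr \Normal{\mu}{\sigma^2}$, settles the claim; the only point worth making explicit is that you are invoking \refDef{def:threeparameterLogN} in its two-parameter form ($\xi = 0$), which is indeed the convention used in that section of the paper.
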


	With this knowledge, one can now turn to the analysis of $\ZV{R}$ itself.
	
	\begin{proposition}
		The random variable
		$\ZV{R} \coloneqq \PACondResize{ \Sel{c+1} = 1 }{ \Wa{c}{\gamma} }$ is asymptotically Johnson SB distributed.
	\end{proposition}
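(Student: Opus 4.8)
The plan is to mirror the reasoning used for $\ZV{Q}$ in Proposition~\ref{prop:QisAsyJSBdist} almost verbatim; the only structural difference is that here the random count will end up in the \emph{numerator} rather than the denominator, which is precisely why the reciprocity property of the lognormal family (\refLem{lem:InverseOfLogNormal}) was stated just beforehand. First I would unpack the conditioning event. Because \SRWA{} is memoryless, the condition $A(\alpha)$ that the walk was started in~$\alpha$ plays no role for the current flip; only the fact that the algorithm currently sits in the assignment~$\gamma$ is relevant. The probability that a clause of~$\ZV{\Clauses}$ is selected in iteration~$c+1$ is therefore the ratio of the clauses of~$\ZV{\Clauses}$ falsified by~$\gamma$ to all clauses of the extended instance $F \cup \ZV{\Clauses}$ falsified by~$\gamma$, so that
\[
\ZV{R}
= \frac{ \ZVcardhash{\ZVUNSATunder[\Clauses]{\gamma}} }{ \cardhash{\UNSATunder[F]{\gamma}} + \ZVcardhash{\ZVUNSATunder[\Clauses]{\gamma}} }
= \frac{1}{\, 1 + \frac{\cardhash{\UNSATunder[F]{\gamma}}}{\ZVcardhash{\ZVUNSATunder[\Clauses]{\gamma}}} \,}.
\]
As a consistency check, this is exactly $1 - \ZV{Q}$, which already signals that the support will asymptotically be finite; however, the cleanest route to the Johnson SB form is to work with the right-hand representation directly.

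The second step is to set $\ZV{X} \coloneqq \cardhash{\UNSATunder[F]{\gamma}} \big/ \ZVcardhash{\ZVUNSATunder[\Clauses]{\gamma}}$ and to argue that $\ZV{X}$ is asymptotically lognormal. As in the previous propositions, the generating model of \refAlg{algo:generating_model} makes $\ZVcardhash{\ZVUNSATunder[\Clauses]{\gamma}}$ binomially distributed, hence asymptotically lognormal by \refCor{cor:binomial_lognormal}, while $\cardhash{\UNSATunder[F]{\gamma}}$ is a constant independent of~$\ZV{\Clauses}$. The new ingredient is \refLem{lem:InverseOfLogNormal}: since the lognormal family is closed under reciprocity, $1 \big/ \ZVcardhash{\ZVUNSATunder[\Clauses]{\gamma}}$ is again asymptotically lognormal, and multiplying by the positive constant $\cardhash{\UNSATunder[F]{\gamma}}$ preserves this, because lognormals are closed under scaling by positive constants. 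Thus $\ZV{X}$ is asymptotically lognormal, and applying \refLem{lem:1/1+Y_for_Y_LogN_ist_SB_verteilt} with $c=1$ to $\ZV{R} = 1/(1+\ZV{X})$ yields that $\ZV{R}$ is asymptotically Johnson SB distributed, completing the argument.

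I do not expect a serious obstacle: the proof is essentially a one-line variant of Proposition~\ref{prop:QisAsyJSBdist}. The only genuine deviation from the $\ZV{Q}$ case—and the single place where care is needed—is that the random clause count now appears in the numerator, so one must first pass through the reciprocal (via \refLem{lem:InverseOfLogNormal}) before the expression is in the $1/(1+\ZV{X})$ shape required by \refLem{lem:1/1+Y_for_Y_LogN_ist_SB_verteilt}. Everything else, including the independence and binomiality of the involved clause counts, is inherited directly from the setup already established for~$\ZV{P}$ and~$\ZV{Q}$.
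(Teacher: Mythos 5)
Your proposal is correct and follows essentially the same route as the paper's own proof: rewrite $\ZV{R}$ as $1/(1+\ZV{X})$ with $\ZV{X} = \cardhash{\UNSATunder[F]{\gamma}}\big/\ZVcardhash{\ZVUNSATunder[\Clauses]{\gamma}}$, use the binomiality of $\ZVcardhash{\ZVUNSATunder[\Clauses]{\gamma}}$ and Corollary~\ref{cor:binomial_lognormal} together with Lemma~\ref{lem:InverseOfLogNormal} to get that $\ZV{X}$ is asymptotically lognormal, and conclude via Lemma~\ref{lem:1/1+Y_for_Y_LogN_ist_SB_verteilt}. The observation that $\ZV{R}$ is the complement of the corresponding $\ZV{Q}$-type ratio is a nice sanity check not present in the paper, but otherwise the arguments coincide.
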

	
	\begin{proof}
		The reasoning is similar to that in \refsec{sec:analysis_Q}. The probability that a clause from $\ZV{\Clauses}$ is selected is given by the ratio of unsatisfied clauses from $\ZV{\Clauses}$ to the total number of unsatisfied clauses:
		\[
		\ZV{R} 
		= \frac{ \ZVcardhash{\ZVUNSATunder[\Clauses]{\gamma}} }{ \cardhash{\UNSATunder[F]{\gamma}} + \ZVcardhash{\ZVUNSATunder[\Clauses]{\gamma}} }
		= \dfrac{1}{ 1 + \dfrac{\cardhash{\UNSATunder[F]{\gamma}}}{\ZVcardhash{\ZVUNSATunder[\Clauses]{\gamma}}} }.		
		\]

		In our model for adding the new clauses, $\ZVcardhash{\ZVUNSATunder[\Clauses]{\gamma}}$ has a binomial distribution. Thus $\ZVcardhash{\ZVUNSATunder[\Clauses]{\gamma}}$ is asymptotically lognormal distributed (cf. \refCor{cor:binomial_lognormal}). By applying \refLem{lem:InverseOfLogNormal}, we see that $\frac{\cardhash{\UNSATunder[F]{\gamma}}}{\ZVcardhash{\ZVUNSATunder[\Clauses]{\gamma}}}$ is also asymptotically lognormal distributed. \refLem{lem:1/1+Y_for_Y_LogN_ist_SB_verteilt} then implies that $\ZV{R}$ is asymptotically Johnson SB distributed.
	\end{proof}

	\subsubsection{Concluding Remarks Regarding the Analysis of the Random Variables}

	In the results of this section, we frequently mentioned that the random variables are asymptotically Johnson SB distributed. Nevertheless, it is crucial to clarify \emph{under which conditions} the random variables converge to a Johnson SB distribution.

	As a starting point, we commonly deal with binomial distributions of the form $\ZVcardhash{\ZVUNSATunder[\Clauses]{\gamma}} \distr \Bin{n_\gamma}{p}$, where $n_\gamma$ is the number of clauses in $\mathbb{L}$ that are unsatisfied under $\gamma$. Our results take effect precisely when $n_\gamma$ approaches infinity.

	As is common practice, these results can also be applied under weaker conditions. For example, if one has ``only'' a large number of unsatisfied clauses in~$\mathbb{L}$, then the random variables are not exactly Johnson SB distributed, but Johnson SB distributions then represent an excellent approximation.

	\subsection{Putting Everything Together}
	\label{sec:ArgumentsLogNDistro}

	In the last two sections, we observed the distribution of the expected runtime of \Alfa{}. Then, in \refsec{sec:analysis_rtd}, we deduced an expression for this distribution consisting of the random variables $\ZV{P}$, $\ZV{Q}$, and $\ZV{R}$.
	Finally, in \refsec{sec: AnalysisOfTheRVs}, we considered 
	these random variables
	in detail. We found that each asymptotically follows a Johnson SB distribution, respectively. 
	In other words, the distribution consists of asymptotically Johnson SB distributed random variables.

	\begin{mainresult}
		The expected runtime~$\ERL$ consists of approximately Johnson SB distributed random variables.
	\end{mainresult}
	
	While this is not enough to prove Strong Conjecture (\refCon{conj:strong}), the distribution of~$\ERL$ can be treated as Johnson SB for all intents and purposes.
	Note that this result is slightly weaker than the Strong Conjecture.
	Therefore, it does not make our conjectures obsolete.

	\section{Conclusion}
	\label{sec:Conclusion}
	
	It has been shown in~\cite{LW20OnTheEffectOfLearnedClauses} that adding new, logically equivalent clauses to a formula improves the performance of SLS SAT solvers on average.
	Building upon this observation, we have shown the following main results in this paper:
	\begin{enumerate}
		\item Treating this process as a random process, the hardness distribution follows a Johnson SB distribution. These distributions can converge to long-tailed distributions. 
		\item We have proven that restarts are useful to avoid long-tails. Thus, the algorithms can be further improved by implementing a restart strategy.
	\end{enumerate}
	
	There are several possible starting points for future research.
	In this work, we studied the runtime distributions of SAT solvers on modified instances.  A major influencing factor for our work was that different problem formulations are often employed in mixed integer programming. Therefore, it would be interesting to see if Johnson SB distributions are also the appropriate descriptive model in that context or if a different family of distributions has to be used. This would also provide more insight into how exactly modified instances affect an algorithm.

	As can be inferred from the theoretical analysis of the runtime behavior, Johnson SB distributions significantly impact the overall runtime. This suggests that for future studies of algorithms, one should also consider using the Johnson SB distribution as a suitable model, alongside established distribution types such as the normal, lognormal, and Weibull distributions. This is especially true if one knows that there is an upper bound for the runtimes since the Johnson SB, in contrast to the other distributions mentioned, is able to model finite support. Such a scenario is given, for example, in the case of a systematic search. As soon as the search tree has been completely traversed, the algorithm terminates. Thus, the runtime of the algorithm can be bounded from above by the size of the search tree.

	Furthermore, it could be a worthwhile pursuit to theoretically investigate the distribution of other SLS solving paradigms, such as configuration checking solvers, \eg \algoformat{CCAnr}~\cite{CCANRCaiLS15}, or \algoformat{WalkSAT}~\cite{DBLP:conf/aaai/SelmanKC94}. For the case of CDCL solvers we have reported results in~\cite{ArbeitmitTom}.

	\section*{Code and Data Availability}
	
	The source code realizing the methodologies described in this paper and the corresponding results are freely available.  
	Therefore, all experiments are fully reproducible.		
	The corresponding code of \refsec{sec:evidence} (visual and statistical evaluations) is available under 
	\begin{center}
		\textsf{\small{\url{https://github.com/FlorianWoerz/Towards-an-Understanding-of-Long-Tailed-Runtimes}}}.
	\end{center} 
	All evaluations take place in the files that can be found under 
	\begin{center}
		\texttt{\small{./evaluation/jupyter\_SB/evaluate\_*.ipynb}}.
	\end{center} 
	A permanent version of this repository has been preserved under 
	\begin{center}
		\textsf{\small{\doi{10.5281/zenodo.6945926}}}. 
	\end{center}
	The CNF data (all base instances, resolvents, and modifications) can be found under 
	\begin{center}
		\textsf{\small{\doi{10.5281/zenodo.4715893}}}. 
	\end{center}
	The source code of \algoformat{concealSATgen} that was used in \refsec{sec:ExpSetupInstTypesSolversUsed} to generate hidden solution instances is published under~\cite{LW21SourceCodeOfConcealSATgen}.

	\section*{Acknowledgments}
	
	The authors acknowledge support by the state of Baden-Württemberg through bwHPC. This research was funded by the Deutsche Forschungsgemeinschaft (DFG, German Research Foundation).	
	
	The authors would like to express their gratitude to Uwe Schöning for several helpful discussions regarding the theoretical analysis.
	
	Furthermore, the authors are greatly indebted to the anonymous referees of the European Symposium on Algorithms (ESA ’21) and the ACM Journal of Experimental Algorithmics (ACM JEA) for several comments that immensely helped to improve the presentation of this paper.

	\section*{Previous Versions of the Paper}
	
	This is the full-length version of the paper in the ACM Journal of Experimental Algorithmics~(JEA). It is an extended and enhanced version of the paper ``Evidence for Long-Tails in SLS Algorithms,''~\cite{WL21EvidenceForLongTails} presented at ESA~2021. \refSec{sec:evidence} describes the results from~\cite{WL21EvidenceForLongTails}, adapted to the Johnson SB distribution which converges to lognormal distributions.
	\refSec{sec:theory} and the appendix is entirely new material providing mathematical justification for the fact that Johnson SB distributions underlie the modification model.

	\DeclareUrlCommand{\Doi}{\urlstyle{sf}}
	\renewcommand{\path}[1]{\footnotesize\Doi{#1}}
	\renewcommand{\url}[1]{\href{#1}{\footnotesize\Doi{#1}}}
	
	\bibliographystyle{alphaurl}
	
	{\small
		\bibliography{Bibliography_no_dois}
	}

	\appendix
	
	\section{The Finite Case} %
	\label{app:FiniteCaseHandled}

	Recall that in Equations~\eqref{eq:SplittingERAlphaWithLTE} and~\eqref{eq:SplittingERAlphaWithLTEsecondline} on page~\pageref{eq:SplittingERAlphaWithLTE} we have seen that
	\begin{align*}
		\ECond{\InitAssi{\alpha}}
		= \FiniteCase + \InfCase 
		= \sum_{c\in \N_0} &\PACond{\NoSel{c}}{} \cdot \EACond{\NoSel{c}}\\
		+ \:
		&\PACond{\NeverSel}{} \cdot \EACond{\NeverSel}.
	\end{align*}
	In \refSec{sec:TheInfiniteCase}, we have analyzed the term $\InfCase = \PACond{\NeverSel}{} \cdot \EACond{\NeverSel}$, \ie the case in which no clause of~$\ZV{\Clauses}$ ever gets selected. We have seen that we can write
	\begin{align*}
		\InfCase
		\cong 
		\sum_{i \in \N_0}  \prod_{k=1}^{i}  \sum_{\falsassi{\gamma}{F}} \ZV{Q},
	\end{align*}
	where~$\ZV{Q}$ is asymptotically Johnson SB distributed (cf.~Proposition~\ref{prop:QisAsyJSBdist}).
	Our aim in this appendix is to obtain a similar representation for 
	\[
	\FiniteCase \coloneqq \sum_{c\in \N_0} \PACond{\NoSel{c}}{} \cdot \EACond{\NoSel{c}}.
	\]
	We then proceed to analyze the distribution of the ensuing random variables in \refSec{sec:CombiningBothCases} of the main body.

	To abbreviate the notation, we use the next definition in the following.
	
	\begin{definition}
		\label{def:VZV}
		We let $\ZV{V} \coloneqq \Vars{\UNSATunder[\Clauses]{\beta}}$, where $\UNSATunder[\Clauses]{\beta}$ is the set of clauses in $\Clauses$ that are falsified by $\beta$.
	\end{definition}
	
	Let us now state the first intermediate result of this section.

	\begin{proposition}
		\label{prop:StatetheProp}
		It holds that
		\begin{align}
			\label{eq:GaloreFirst}
			\FiniteCase
			&= \sum_{c\in \N_0} \sum_{\falsassi{\beta}{\Clauses}} 
			\bigg\{       
			\Probcond{ \Wa{c}{\beta} }{\NoSel{c}, \InitAssi{\alpha}} \cdot
			\PACond{\NoSel{c}}{} \\
			\label{eq:GaloreSecond}
			&\phantom{MMMMMM}
			\cdot \sum_{x \in \ZV{V}}            
			\PACond{ \Flipa{c}{\beta}{x} }{ \Wa{c}{\beta}, \NoSel{c} } \\
			\label{eq:GaloreThird}
			&\phantom{MMMMMMMM}
			\cdot \Big(c + 1 + \ECond{ \InitAssi{\beta^\prime} } \Big) \bigg\}.%
		\end{align}
	\end{proposition}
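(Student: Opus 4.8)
The plan is to transform $\FiniteCase = \sum_{c \in \N_0} \PACond{\NoSel{c}}{} \cdot \EACond{\NoSel{c}}$ through two nested applications of the law of total expectation, in complete analogy with the treatment of $\InfCase$ in \refSec{sec:TheInfiniteCase}, and to close the recursion by exploiting the memorylessness of \SRWA{}. First I would apply the LTE to the factor $\EACond{\NoSel{c}}$, conditioning on the assignment~$\beta$ occupied by the algorithm after its first~$c$ flips, \ie on the events $\Wa{c}{\beta}$. Since $\NoSel{c}$ stipulates that a clause of $\Clauses$ is selected for the first time in iteration~$c+1$, the assignment~$\beta$ must falsify at least one clause of $\Clauses$; this is exactly why the inner sum runs over $\falsassi{\beta}{\Clauses}$ rather than over all assignments. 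This step yields
\[
\EACond{\NoSel{c}} = \sum_{\falsassi{\beta}{\Clauses}} \Probcond{\Wa{c}{\beta}}{\NoSel{c}, \InitAssi{\alpha}} \cdot \EACond{\NoSel{c}, \Wa{c}{\beta}},
\]
so that, after multiplying by the retained random factor $\PACond{\NoSel{c}}{}$, line~\eqref{eq:GaloreFirst} already emerges.

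The delicate point is to justify that the path weight $\Probcond{\Wa{c}{\beta}}{\NoSel{c}, \InitAssi{\alpha}}$ is genuinely \emph{not} a random variable, \ie is independent of~$\Clauses$. The argument is the same cancellation used for the constant $C_2$ in \refLem{lem:LemmaWithCZwoUndQ}: along the first~$c$ iterations we condition on the event that only clauses of~$F$ are selected, and at every such step the probability of picking one particular unsatisfied $F$-clause is $1/\bigl(\cardhash{\UNSATunder[F]{\sigma}} + \ZVcardhash{\ZVUNSATunder[\Clauses]{\sigma}}\bigr)$, whereas the probability of selecting \emph{some} $F$-clause carries the same denominator; conditioning on the latter therefore cancels the $\Clauses$-dependent total count and leaves a uniform choice among the unsatisfied clauses of~$F$ alone. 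Hence the conditional law of the $F$-only walk from $\alpha$ to $\beta$ is $\Clauses$-free, while all residual dependence on~$\Clauses$ — in particular the probability that a $\Clauses$-clause is first hit exactly at step $c+1$ — remains packaged inside the separately kept random factor $\PACond{\NoSel{c}}{}$.

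Next I would expand the inner conditional expectation $\EACond{\NoSel{c}, \Wa{c}{\beta}}$ by a further total-expectation step over which variable is flipped in iteration~$c+1$. Conditioned on being at $\beta$ after $c$ flips and on selecting a clause of $\Clauses$ next, the algorithm picks one of the $\Clauses$-clauses falsified by $\beta$ and then one of its variables, so the candidate flip variables are precisely those in $\ZV{V} = \Vars{\UNSATunder[\Clauses]{\beta}}$. Weighting by $\PACond{\Flipa{c}{\beta}{x}}{\Wa{c}{\beta}, \NoSel{c}}$ (the random variable later christened $\ZV{P}$) produces line~\eqref{eq:GaloreSecond}. For the remaining expected runtime I would invoke the Markov property of \SRWA{} in the regime $t_{\textnormal{restart}} = \infty$: conditioned on all of the above, the algorithm has spent $c+1$ flips to arrive at $\beta^\prime = \FlipVarInAssi{x}{\beta}$, and from there its future is statistically identical to a fresh run begun in $\beta^\prime$. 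The expected runtime under these conditions therefore splits additively as $c + 1 + \ECond{\InitAssi{\beta^\prime}}$, which is line~\eqref{eq:GaloreThird} and explains the recursive reappearance of an $\ECond{\cdot}$ term. Assembling the three factors inside the double sum over $c$ and $\beta$ gives the claimed identity.

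The main obstacle I anticipate is precisely the clean separation flagged in the second paragraph: the conditioning event $\NoSel{c}$ refers to $\Clauses$ (through its ``$\Sel{c+1}=1$'' part), so one must argue carefully that only the $F$-restricted walk probability is extracted as the constant, with every $\Clauses$-sensitive contribution deferred to $\PACond{\NoSel{c}}{}$ (and, after the subsequent chain-rule decomposition mirroring \refSec{sec:TheInfiniteCase}, to the random variables $\ZV{Q}$ and $\ZV{R}$). A secondary, purely bookkeeping hurdle is the off-by-one between ``$c$ flips completed'' ($\Wa{c}{\beta}$) and ``the flip performed in iteration $c+1$'' ($\Flipa{c}{\beta}{x}$), together with the correct attribution of the $c+1$ flip count before the memoryless continuation from $\beta^\prime$.
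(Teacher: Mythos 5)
Your proposal is correct and follows essentially the same route as the paper's proof: two nested applications of the LTE (first over $\Wa{c}{\beta}$ with $\beta$ restricted to assignments falsifying a clause of $\Clauses$, then over the flipped variable $x \in \ZV{V}$), the observation that the conditional path weight is a $\Clauses$-free constant, and the memoryless recursion $c+1+\ECond{\InitAssi{\beta^\prime}}$. Your cancellation argument for why $\Probcond{\Wa{c}{\beta}}{\NoSel{c}, \InitAssi{\alpha}}$ is not a random variable is in fact a more explicit justification of a point the paper states only briefly.
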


	\begin{proof}
		We will apply the LTE twice to analyze the finite case (\ie terms in $\ECond{\InitAssi{\alpha}}$ with an eventual selection of a clause from $\Clauses$). 
		In the first application of the LTE 
		we condition over the event~$\Wa{c}{\beta}$ and obtain
		\begin{align*}
			\FiniteCase
			\stackrel{\phantom{\text{LTE}}}{=}&
			\sum_{c\in \N_0} \PACond{\NoSel{c}}{} \cdot \EACond{\NoSel{c}} \\
			\stackrel{\text{LTE}}{=}& \sum_{c \in \N_0} 
			\sum_{ \falsassi{\beta}{\Clauses} } %
			\Big\{
			\Probcond{ \Wa{c}{\beta} }{\NoSel{c}, \InitAssi{\alpha}} \cdot
			\PACond{\NoSel{c}}{} 
			\\
			& 
			\phantom{MMMMM}
			\cdot
			\EACond{\NoSel{c}, \Wa{c}{\beta} } \Big\}.  
		\end{align*}
		Due to Example~\ref{ex:WalkNotation}\,(\ref{ex:SelectFalsifyBeob}), we only need to consider such $\beta \in \assi$ in the sum emerging from the application of the LTE that falsify a clause in~$\Clauses$.
		
		For the second application of the~LTE 
		we condition over the event~$\Flipa{c}{\beta}{x}$.
		Then, we can write
		\begin{align}
			\FiniteCase
			\stackrel{\text{LTE}}{=}& \sum_{c \in \N_0} \sum_{ \falsassi{\beta}{\Clauses} } %
			\Big\{
			\Probcond{ \Wa{c}{\beta} }{\NoSel{c}, \InitAssi{\alpha}} \cdot
			\PACond{\NoSel{c}}{}  \\
			\label{eq:LTEGalore-LTE2-line4}
			& 
			\phantom{MMMMM}
			\cdot \sum_{x \in \ZV{V}}
			\PACond{\Flipa{c}{\beta}{x}}{\Wa{c}{\beta}, \NoSel{c}} \\
			\label{eq:LTEGalore-LTE2-line5}
			&
			\phantom{MMMMM}\phantom{MM}
			\cdot \, \ECond{\NoSel{c}, \Wa{c}{\beta}, \Flip{c}{\beta}{x}, \InitAssi{\alpha}} \Big\}.
		\end{align}

		One can notice the following two facts:
		First, the expression 
		\[
		\Probcond{ \Wa{c}{\beta} }{\NoSel{c}, \InitAssi{\alpha}}
		\]
		is no random variable depending on \ZV{\Clauses} since the event $\NoSel{c}$ ensures that in none of the first $c$ for-loop iterations, a clause from $\Clauses$ gets selected, but only in the $(c+1)$-st iteration.
		Thus, the probability that the algorithm ends up in~$\beta$ after the $c$-th flip does not depend on~$\Clauses$ since the
		algorithm behaves as on the original instance $F$.
		
		Secondly, observe that the term
		appearing in line~\eqref{eq:LTEGalore-LTE2-line5}
		can be expressed recursively as
		\begin{align}
			\label{eq:FallBackToERBetaPrime}
			\EACond{\NoSel{c}, \Wa{c}{\beta}, \Flip{c}{\beta}{x} }
			= c + 1 + \ECond{\InitAssi{\beta^\prime}},
		\end{align}
		where $\beta^\prime \coloneqq \beta[x]$, since $c$ flips were performed to get from the initial assignment~$\alpha$ to assignment~$\beta$, and one $x$-flip was performed to get from~$\beta$ to~$\beta^\prime$. The remaining expected runtime is
		independent of the previous history of the performed random walk and thus only depends on the event~$\InitAssi{\beta^\prime}$.
		Now, replacing~\eqref{eq:LTEGalore-LTE2-line5} with~\eqref{eq:FallBackToERBetaPrime}
		yields the proposition.
	\end{proof}

	\label{page:IntroVonZVP}
	While it turns out in \refSec{sec:TheFirstRV} that we are able to analyze the distribution of the random variable
	\[
	\ZV{P} \coloneqq \PACond{\Flipa{c}{\beta}{x} }{ \Wa{c}{\beta}, \NoSel{c} },
	\]
	appearing in line~\eqref{eq:GaloreSecond}, and show that it is asymptotically Johnson SB distributed, the distribution analysis of the more harmless-looking random variable $\PACond{\NoSel{c}}{}$ appearing in line~\eqref{eq:GaloreFirst} requires further work.
	
	The main idea in the following steps is to split the information that the event $\NoSel{c}$ contains into several events, apply the chain rule and analyze the resulting cases separately. While in the beginning, this might seem like a lot of unnecessary calculations, after being done, we will reap the fruit of our labor: we will be able to conduct a distribution analysis of all random variables easily.
	For this purpose, recall Definition~\ref{def:ScIndVar}:
	Let $\Sel{c}$ be the indicator variable being 1 if and only if a clause in $\ZV{\Clauses}$ gets selected in the $c$-th iteration of \SRWA{}.

	Following the above-specified plan, we next apply the chain rule to the random variable~$\PACond{\NoSel{c}}{}$.
	With the notation of Definition~\ref{def:ScIndVar} in place, and using the fact that
	\[
	\NoSel{c} = \singleevent{ \Sel{c+1} = 1 } \cap \bigcap_{i=1}^{c} \singleevent{ \Sel{i} = 0 },
	\]
	one obtains
	\begin{align}	
		&\PACond{\NoSel{c}}{}  \notag \\
		=& \:\, \PACondResize{ \singleevent{ \Sel{c+1} = 1 } \cap \bigcap_{i=1}^{c} \singleevent{ \Sel{i} = 0 } }{} \notag \\
		\label{eq:ProduktMitFallunterscheidung}
		=& \:\, \DExprOne %
		\cdot \prod_{k=1}^{c} \DExprTwo,
	\end{align}
	where
	\begin{align*}
		\DExprOne
		&\coloneqq
		\PACondResize{\Sel{c+1} = 1}{ \bigcap_{j=1}^{c} \singleevent{ \Sel{j} = 0 } }, \text{ and } \\
		\DExprTwo
		&\coloneqq
		\PACondResize{\Sel{k} = 0}{ \bigcap_{j=1}^{k-1} \singleevent{ \Sel{j} = 0 } }.
	\end{align*}
	Here, the \introduceterm{nullary intersection} $\bigcap_{j=1}^0 \singleevent{\Sel{j} = 0} \coloneqq \samplespace$, \ie as the whole sample space, because the condition of the intersection is a vacuous truth.

	We analyze the random variables of the first factor, \ie $\DExprOne$, and the factors in the big product of line~\eqref{eq:ProduktMitFallunterscheidung}, \ie $\DExprTwo$, separately in the cases~\ref{sec:CaseSc1} and~\ref{sec:CaseSk0} below.
	In \refSec{sec:PuttingTogetherCasesSk}, we finish the case analysis and present the central proposition of this appendix.

	\subsection{The Case \texorpdfstring{$\boldsymbol{\Sel{c+1} = 1}$}{Sel(c+1) = 1} of the First Factor \texorpdfstring{$\DExprOne$}{D\_1} in Line~(\ref{eq:ProduktMitFallunterscheidung})}
	\label{sec:CaseSc1}
	
	With the LTP, conditioning on the event $\Wa{c}{\gamma}$ one obtains
	\begin{align*}
		\DExprOne 
		&=
		\PACondResize{\Sel{c+1} = 1}{ \bigcap_{j=1}^{c} \singleevent{ \Sel{j} = 0 } } \\
		&=
		\sum_{\falsassi{\gamma}{F}}
		\PACondResize{ \Sel{c+1} = 1 }{ \Wa{c}{\gamma}, \bigcap_{j=1}^{c} \singleevent{ \Sel{j}=0 } } \\
		&\phantom{=
			\sum_{\falsassi{\gamma}{F}}} \cdot \probcondResize{ \Wa{c}{\gamma} }{\bigcap_{j=1}^{c} \singleevent{ \Sel{j} = 0 }, \InitAssi{\alpha}}.
	\end{align*}
	Note that once again, $\probcondResize{ \Wa{c}{\gamma} }{\bigcap_{j=1}^{c} \singleevent{ \Sel{j} = 0 }, \InitAssi{\alpha}}$ does not depend on $\ZV{\Clauses}$ since the event \mbox{$\bigcap_{j=1}^{c} \singleevent{ \Sel{j} = 0 }$} rules out the usage of any clauses of $\ZV{\Clauses}$ on the way from $\alpha$ to $\gamma$.
	One readily checks that
	\begin{align}
		\label{eq:FirstIntroQ}
		\begin{split}
			&\PACondResize{ \Sel{c+1} = 1 }{ \Wa{c}{\gamma}, \bigcap_{j=1}^{c} \singleevent{ \Sel{j}=0 } } \\
			= \: &\PACond{ \Sel{c+1} = 1 }{ \Wa{c}{\gamma} } \\
			= \: &\ZV{Q},
		\end{split}
	\end{align}
	since \SRWA{} has the property that past history does not matter, \ie once the algorithm has arrived in assignment~$\gamma$ after $c$~steps, it does not matter how it arrived in this assignment or what clauses it touched on its way.
	
	We will take care of the distribution analysis of the random variable~$\ZV{Q}$ in \refSec{sec:analysis_Q}.

	\subsection{The Case \texorpdfstring{$\boldsymbol{\Sel{k} = 0}$}{Sel(k) = 0} of the Factors in the Big Product of Line~(\ref{eq:ProduktMitFallunterscheidung})}
	\label{sec:CaseSk0}
	
	For $k \in \set{1, \dots, c}$ we analyze the factors in the big product of line~\eqref{eq:ProduktMitFallunterscheidung} with the LTP conditioning on $\Wa{k-1}{\gamma}$:
	\begin{align*}
		\DExprTwo
		&= \PACondResize{\Sel{k} = 0}{ \bigcap_{j=1}^{k-1} \singleevent{ \Sel{j} = 0 } } \\
		&= \sum_{\falsassi{\gamma}{F}}
		\PACondResize{ \Sel{k} = 0 }{ \Wa{k-1}{\gamma}, \bigcap_{j=1}^{k-1} \singleevent{ \Sel{j}=0 } } \\
		&\phantom{= \sum_{\falsassi{\gamma}{F}}} \cdot \probcondResize{ \Wa{k-1}{\gamma} }{\bigcap_{j=1}^{k-1} \singleevent{ \Sel{j} = 0 }, \InitAssi{\alpha}}.
	\end{align*}
	
	An analogous argumentation as in \refSec{sec:CaseSc1} shows that 
	\[
	\probcondResize{ \Wa{k-1}{\gamma} }{\bigcap_{j=1}^{k-1} \singleevent{ \Sel{j} = 0 }, \InitAssi{\alpha}}
	\]
	is no random variable. Similarly, one can use the reasoning brought forward in the last section to see that
	\begin{align*}
		&\PACondResize{ \Sel{k} = 0 }{ \Wa{k-1}{\gamma}, \bigcap_{j=1}^{k-1} \singleevent{ \Sel{j}=0 } } \\
		= \: &\PACond{ \Sel{k} = 0 }{ \Wa{k-1}{\gamma} }.
	\end{align*}
	
	We will use $\ZV{R}$ in the following to denote the above specified random variable. Its distribution analysis can be found in \refSec{sec:AnalysisOfR}.

	\subsection{Putting Together Both Cases}
	\label{sec:PuttingTogetherCasesSk}
	
	Having obtained the results of the last two sections, we immediately arrive at the following.

	\begin{recapitulation}[Analysis of the finite case $c < \infty$]
		\label{recap:Finite}
		The term $\ECond{\InitAssi{\alpha}}$ can be written as
		\begin{align*}
			\sum_{c\in \N_0} \sum_{ \falsassi{\beta}{\Clauses} }
			\left\{
			C_3
			\left( \sum_{ \falsassi{\gamma}{F} }
			C_4 \cdot \ZV{R} \right)
			\cdot
			\prod_{k=1}^{c}
			\left( \sum_{ \falsassi{\gamma}{F} }
			C_4 \cdot \ZV{Q} \right)
			\right.			
			\left.
			\cdot \sum_{x \in \ZV{V}}            
			\ZV{P} %
			\cdot \left(c + 1 + \ECond{ \InitAssi{\beta^\prime} } \right) 
			\right\} 
			+ \InfCase,
		\end{align*}
		with the constants being defined by
		\begin{align*}
			C_3 \coloneqq C_3(\alpha,c,\beta)
			&\coloneqq 
			\Probcond{ \Wa{c}{\beta} }{\NoSel{c}, \InitAssi{\alpha}} \in [0,1], \\
			C_4 \coloneqq C_4(\alpha,c,\gamma)
			&\coloneqq 
			\probcondResize{ \Wa{c}{\gamma} }{\bigcap_{j=1}^{c} \singleevent{ \Sel{j} = 0 }, \InitAssi{\alpha}} \in [0,1], %
		\end{align*}
		and the random variables $\ZV{P}$, $\ZV{Q}$, and $\ZV{R}$ as in Proposition~\ref{recap:FiniteCaseDone}.
	\end{recapitulation}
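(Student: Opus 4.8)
The plan is to assemble the intermediate results established throughout this appendix into a single closed form. Recall from \refEq{eq:SplittingERAlphaWithLTE} and \refEq{eq:SplittingERAlphaWithLTEsecondline} that $\ECond{\InitAssi{\alpha}} = \FiniteCase + \InfCase$, and that the infinite part $\InfCase$ has already been brought into its final shape in \refSec{sec:TheInfiniteCase}. Hence it suffices to rewrite $\FiniteCase$ and then re-append $\InfCase$ unchanged. The backbone of the argument is Proposition~\ref{prop:StatetheProp}, which expresses $\FiniteCase$ as a double sum over $c$ and over assignments $\beta$ falsifying a clause of $\Clauses$, each summand being a product of the path probability $\Probcond{\Wa{c}{\beta}}{\NoSel{c}, \InitAssi{\alpha}}$, the factor $\PACond{\NoSel{c}}{}$, an inner sum over the flip variable $x \in \ZV{V}$ of $\PACond{\Flipa{c}{\beta}{x}}{\Wa{c}{\beta}, \NoSel{c}}$, and the recursive remainder $\big(c+1+\ECond{\InitAssi{\beta^\prime}}\big)$.

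Two of these factors are already in the desired form: the path probability $\Probcond{\Wa{c}{\beta}}{\NoSel{c}, \InitAssi{\alpha}}$ is exactly the constant $C_3(\alpha,c,\beta)$ (it does not depend on $\ZV{\Clauses}$ because $\NoSel{c}$ forces the first $c$ flips to touch only clauses of $F$), and $\PACond{\Flipa{c}{\beta}{x}}{\Wa{c}{\beta}, \NoSel{c}}$ is precisely the random variable $\ZV{P}$. The only factor still requiring work is $\PACond{\NoSel{c}}{}$. Here I would invoke the chain-rule decomposition \refEq{eq:ProduktMitFallunterscheidung}, namely $\PACond{\NoSel{c}}{} = \DExprOne \cdot \prod_{k=1}^{c} \DExprTwo$, and then substitute the two law-of-total-probability expansions derived in \refSec{sec:CaseSc1} and \refSec{sec:CaseSk0}. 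For the first factor this gives $\DExprOne = \sum_{\falsassi{\gamma}{F}} C_4 \cdot \ZV{R}$, with the path-probability constant $C_4(\alpha,c,\gamma) = \probcondResize{\Wa{c}{\gamma}}{\bigcap_{j=1}^{c}\singleevent{\Sel{j}=0}, \InitAssi{\alpha}}$ and $\ZV{R} = \PACond{\Sel{c+1}=1}{\Wa{c}{\gamma}}$; for each factor of the product it gives $\DExprTwo = \sum_{\falsassi{\gamma}{F}} C_4 \cdot \ZV{Q}$, with $\ZV{Q} = \PACond{\Sel{k}=0}{\Wa{k-1}{\gamma}}$ and the analogous $(k-1)$-step path constant again denoted $C_4$.

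Plugging these expansions back into Proposition~\ref{prop:StatetheProp} and collecting the constants $C_3$, $C_4$ in front of the random variables $\ZV{R}$, $\ZV{Q}$, $\ZV{P}$ yields the displayed formula for $\FiniteCase$; adding $\InfCase$ then recovers $\ECond{\InitAssi{\alpha}}$ and closes the proof. I expect the main obstacle to be purely organizational rather than mathematical: one must track that the single $\Sel{c+1}=1$ factor and each of the $c$ factors $\Sel{k}=0$ each spawn their own path-probability weight, fold all of these consistently into $C_4$ with the correct index ($c$ in the $\ZV{R}$-term, $k-1$ in the $\ZV{Q}$-terms), and verify in each case that conditioning on $\bigcap_{j}\singleevent{\Sel{j}=0}$ genuinely rules out any dependence on $\ZV{\Clauses}$, so that these factors legitimately exit the random-variable bookkeeping as constants. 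Once this indexing discipline is maintained, the substitution is mechanical.
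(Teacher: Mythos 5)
Your proposal is correct and follows essentially the same route as the paper: Proposition~\ref{prop:StatetheProp} as the backbone, the chain-rule factorization of $\PACond{\NoSel{c}}{}$ into $\DExprOne \cdot \prod_{k=1}^{c}\DExprTwo$, and the law-of-total-probability expansions of \refSec{sec:CaseSc1} and \refSec{sec:CaseSk0}, after which the paper itself treats the assembly as immediate. Your closing remark about indexing discipline (the $(k-1)$-step versus $c$-step path constants, and keeping $\ZV{Q}$ and $\ZV{R}$ attached to the correct conditioning events) is exactly the right thing to watch, and you resolve it consistently with the final statement.
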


	\subsection{Allowing Restarts}
	\label{app:AllowingRestarts}
	At the beginning of \refsec{sec:analysis_rtd}, we mentioned that our arguments implicitly assume that \SRWA{} does not employ any restarts. However, for the most part, our arguments also apply to the restarted version of the algorithm. The only technical challenge is adapting \refEq{eq:FallBackToERBetaPrime} in the appendix:
	\begin{align}
		\label{eq:FallBackToERBetaPrimeAgain}
		\EACond{\NoSel{c}, \Wa{c}{\beta}, \Flip{c}{\beta}{x} }
		= c + 1 + \ECond{\InitAssi{\beta^\prime}}.
	\end{align}
	
	Here, the difficulty is that after some flips, a restart is due, and thus \SRWA{} does not continue its search in $\beta^\prime \coloneqq \beta[x]$. Instead, a new random assignment is chosen.
	
	At its core, the reasoning behind \refEq{eq:FallBackToERBetaPrimeAgain} is still valid. The left-hand side of \refEq{eq:FallBackToERBetaPrimeAgain} uses the condition $\NoSel{c}$, which tells us that the algorithm performed $c+1$ flips without finding a satisfying assignment. The only difference is that \SRWA{} might have performed one (or several) restarts in the meantime. This can be handled by introducing an additional counter~$T$ in the expectation $\ERL$. The counter~$T$ is used to count the number of remaining flips until a restart is performed. Incorporating this counter in the argument readily yields the following equations:
	\begin{align*}
		\EACond{T=0} = \sum_{\beta \in \{0,1\}^n} \Prob{\InitAssi{\beta}} \cdot \ECond{T=t_{\textnormal{restart}}, \InitAssi{\beta}},
	\end{align*}
	where $	\Prob{\InitAssi{\beta}} = \frac{1}{2^n}$; and
	\begin{multline*}
		\EACond{\NoSel{c}, \Wa{c}{\beta}, \Flip{c}{\beta}{x}, T=d} \\
		= \:  c + 1 + \ERL \Big( T=\big( d-(c+1) \big) \bmod{t_{\textnormal{restart}}}, \InitAssi{\beta^\prime} \Big).
	\end{multline*}
	The rest of the analysis remains unaltered besides the need to add the counter~$T$ in each part of the proofs.

\section{Connections Between Different Distributions}

\label{app:some_proofs}

\subsection{Embedded Models: Johnson SB Approaches Lognormal}
\label{sec:MoreOnEmbedding}

In this section, we provide some details about the embedding property of lognormal and Johnson SB distributions.

\begin{lemma}[\cite{cheng2017non,norman1994continuous}]
	The lognormal distribution is an embedded model of the SB distribution.
\end{lemma}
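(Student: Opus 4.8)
The plan is to compare the two explicit densities directly and pass to a limit. From Remark~\ref{rem:SBParam}, written with $a=\xi$ and $b=\xi+\lambda$, the Johnson SB density is
\[
f_{\,\text{SB}}(x \mid a,b,\gamma,\delta) = \frac{1}{\sqrt{2\pi}}\,\frac{(b-a)\delta}{(x-a)(b-x)}\exp\!\left\{-\tfrac12\Bigl[\gamma + \delta\ln\!\Bigl(\tfrac{x-a}{b-x}\Bigr)\Bigr]^2\right\},
\]
while the target is the three-parameter lognormal density $f_{\,3\text{LogN}}(x\mid\mu,\sigma,\xi)$ stated above. I would fix $\xi=a$, set $\delta=1/\sigma$, and impose the reparametrization $\gamma=\delta\bigl(\ln(b-a)-\mu\bigr)$ already announced in the text, and then establish $f_{\,\text{SB}}\to f_{\,3\text{LogN}}$ pointwise as $b\to\infty$. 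To fit the template of the embedded-model definition, I would take the left parameter block to be $\Theta^L=1/(b-a)$, so that $b\to\infty$ is exactly $\Theta^L\to 0$, and let $g$ map the remaining parameters to $(\mu,\sigma,\xi)$.

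The computation then splits the density into its rational prefactor and its exponential factor, treating each separately. For the prefactor, for every fixed $x>a$ one has $(b-a)/(b-x)\to 1$ as $b\to\infty$, so
\[
\frac{1}{\sqrt{2\pi}}\,\frac{(b-a)\delta}{(x-a)(b-x)} \;\longrightarrow\; \frac{\delta}{\sqrt{2\pi}\,(x-a)} \;=\; \frac{1}{(x-\xi)\,\sigma\sqrt{2\pi}},
\]
which is precisely the lognormal prefactor after substituting $\delta=1/\sigma$ and $\xi=a$.

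For the exponent I would substitute the reparametrization and simplify the bracket:
\[
\gamma + \delta\ln\!\Bigl(\tfrac{x-a}{b-x}\Bigr) = \delta\Bigl[\ln(x-a) - \mu + \ln\tfrac{b-a}{b-x}\Bigr].
\]
Since $\ln\frac{b-a}{b-x}\to 0$ as $b\to\infty$ for fixed $x$, this bracket converges to $\delta\bigl(\ln(x-a)-\mu\bigr)=\bigl(\ln(x-\xi)-\mu\bigr)/\sigma$, whence the exponent tends to $-[\ln(x-\xi)-\mu]^2/(2\sigma^2)$. Multiplying the two limits reproduces $f_{\,3\text{LogN}}$, which is the assertion.

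The main obstacle — and the only genuinely delicate point — is the bookkeeping needed to make the argument conform to the formal embedded-model definition rather than the verification of the limit itself. The subtlety is that $\gamma$ is \emph{not} bounded: it grows like $\delta\ln(b-a)$, so it cannot be a fixed parameter carried across the limit but must instead be tied to $b$ through the map $g$. One therefore has to check that the reparametrized family is still genuinely of the form $f(x\mid\Theta)$ with a legitimate partition $\Theta=(\Theta^L,\Theta^R)$, and that the definition asks only for pointwise convergence of the densities (not uniform convergence, nor convergence of moments). Once this is in place, the two limit computations above constitute the entire content, recovering the classical degeneration of the four-parameter lognormal (i.e.\ Johnson SB) to the three-parameter lognormal, consistent with~\cite{cheng2017non,norman1994continuous}.
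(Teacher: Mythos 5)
Your computation is correct, and it rests on exactly the same reparametrization the paper uses, namely $\gamma=\delta\bigl(\ln(b-a)-\mu\bigr)$, $\delta=1/\sigma$, $\xi=a$, with the limit taken as $b\to\infty$ (equivalently $\alpha\coloneqq 1/b\to 0$, which is how the paper casts it to match the embedded-model definition). Where you differ is in execution: the paper passes to the log-likelihood $\mathcal{L}$, computes $\partial\mathcal{L}/\partial\alpha$, and expands $\mathcal{L}$ as a Taylor series in $\alpha$ about $\alpha=0$, keeping only the constant term in the limit; you instead take the pointwise limit of the density itself, factoring it into the rational prefactor (where $(b-a)/(b-x)\to 1$) and the exponential factor (where $\ln\frac{b-a}{b-x}\to 0$ inside the bracket). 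Your route is more elementary and arguably cleaner -- the Taylor-series machinery in the paper's version is not doing any real work beyond organizing the same two limits -- while the paper's log-likelihood formulation connects more directly to the estimation-theoretic setting of~\cite{cheng2017non} from which the argument is adapted. Your closing caveat about $\gamma$ being unbounded and therefore having to be absorbed into the map $g$ rather than carried as a fixed parameter is exactly the right point to flag, and it is the same bookkeeping the paper handles by writing $\gamma$ as a function of $\alpha$ from the outset.
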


\begin{proof}[Proof (adapted from~\cite{cheng2017non})]
	Reparametrizing with $b=\alpha^{-1}$ and $\gamma = \delta \bigl(\ln{(\alpha^{-1}-a)} - \mu\bigr)$ yields	
	\begin{align*}
		f(x)=
		\frac{1}{\sqrt{2 \pi}} \frac{(\frac{1}{\alpha}-a)\delta}{(x-a)(\frac{1}{\alpha}-x)} \exp\biggl\{-\frac{1}{2}\biggl[ \delta \left(\ln{\left(\frac{1}{\alpha}- a\right)}-\mu\right) + \delta \ln{\frac{x-a}{\frac{1}{\alpha}-x}} \biggr]^2 \biggr\}.
	\end{align*}	
	The log-likelihood function $\mathcal{L}$ is then given by:	
	\begin{align*}
		\mathcal{L}(a,\delta, \alpha, \mu)=
		-\frac{1}{2}\ln{  2 \pi } +  \ln{ \delta } + \ln{  \frac{\frac{1}{\alpha}-a}{\frac{1}{\alpha}-x}  } -  \ln{(  x-a  )} %
		- \frac{1}{2} \biggl[ \delta \left(\ln{\frac{\frac{1}{\alpha} - a}{\frac{1}{\alpha}-x}}-\mu\right) + \delta \ln{(x-a)} \biggr]^2.
	\end{align*}
	Thus,	
	\begin{align*}
		\ell(a,\delta, \alpha, \mu) \coloneqq \diff{\alpha} \mathcal{L}(a,\delta, \alpha, \mu)=
		-\dfrac{\left(x-a\right)\left(\delta^2\ln\frac{\frac{1}{\alpha}-a}{\frac{1}{\alpha}-x}+\delta^2\ln\left(x-a\right)-\delta^2\mu-1\right)}{\left(a\alpha-1\right)\left(x\alpha-1\right)}.
	\end{align*}	
	Furthermore, as 
	\begin{align*}
		\lim_{\alpha \rightarrow 0} \ln{  \frac{\frac{1}{\alpha}-a}{\frac{1}{\alpha}-x}  } = 0
	\end{align*}
	holds, we have 
	\begin{align*}
		& \lim_{\alpha \rightarrow 0}\, \mathcal{L}(a,\delta, \alpha, \mu)=
		-\frac{1}{2}\ln{  2 \pi } +  \ln{ \delta } -  \ln{(  x-a  )} - \frac{1}{2} 
		\bigl[ \delta \ln{(x-a)} - \delta \mu \bigr]^2
		~ \textnormal{and}
		\\
		& \lim_{\alpha \rightarrow 0}\, \ell(a,\delta, \alpha, \mu) =
		-\left(x-a\right)\left(\delta^2\ln\left(x-a\right)-\delta^2\mu-1\right).
	\end{align*}	
	We represent $\mathcal{L}$ as a Taylor series of $\alpha$ in $\alpha=0$. Observing this series expansion for $\alpha \rightarrow 0$ yields:
	\begin{align*}
		\lim_{\alpha \rightarrow 0}\, \mathcal{L}(a,\delta, \alpha, \mu) 
		&= \lim_{\alpha \rightarrow 0}\, \left( \mathcal{L}(a,\delta, 0, \mu) 
		+ \ell(a,\delta, 0, \mu)\cdot \alpha 
		+ \bigoh{\alpha^2} \right) \\
		&= -\frac{1}{2} \ln{2\pi} 
		+ \ln{ \delta } 
		- \ln{(x-a)} 
		- \frac{1}{2} \delta^2 \bigl[ \ln{(x-a)} - \mu \bigr]^2 \\
		&\phantom{==}+ \left(x-a\right) \left[ 1+\delta^2 \left( \mu - \ln \left(x-a\right) \right) \right] \lim_{\alpha \rightarrow 0}\, \alpha
		+ \lim_{\alpha \rightarrow 0}\, \bigoh{\alpha^2}\\
		&= -\frac{1}{2}\ln{  2 \pi } +  \ln{ \delta } -  \ln{(  x-a  )} - \frac{1}{2}     \delta^2 \bigl[ \ln{(x-a)} - \mu \bigr]^2.
	\end{align*}
	This is precisely the log-likelihood function of a lognormal distribution. Thus, the Johnson SB distribution approaches a lognormal distribution for $\alpha \rightarrow 0$ (or $b\rightarrow \infty$ as well as $\gamma\rightarrow \infty$ in the original parameterization\footnote{
		There is a slight typo in the quoted source~\cite{cheng2017non}, stating that~$b$ should approach zero. However, following the presented argument in~\cite{cheng2017non}, it is clear that $b$ has to approach infinity since $\alpha$ approaches zero and $b= \frac{1}{\alpha}$. 
	}).
\end{proof}

\subsection{A Proof of Lemma~\ref{lem:1/1+Y_for_Y_LogN_ist_SB_verteilt}}
\label{sec:FirstAppProof}

In this section, we will prove one of our main lemmas. This lemma will be used in the distribution analysis of the random variables $\ZV{P}$, $\ZV{Q}$, and $\ZV{R}$.

\begin{lemma}[\refLem{lem:1/1+Y_for_Y_LogN_ist_SB_verteilt} restated]
	Let $\ZV{Y} \distr \LogN{\mu}{\sigma^2}$, then we have
	\[
	\frac{1}{c+\ZV{Y}} \distr  \SBwithequations{\gamma = \frac{\mu-\ln c}{\sigma}}{\delta = \frac{1}{\sigma}}{\lambda = \frac{1}{c}}{\xi = 0}
	\]
	for all $c \in \Rplus$.
\end{lemma}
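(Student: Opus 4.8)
The plan is to verify the claim directly from the \emph{definition} of the Johnson SB distribution rather than by manipulating densities, since that definition expresses Johnson SB-ness as the single statement that a certain transformed variable is standard normal. Writing $\ZV{X} \coloneqq \frac{1}{c + \ZV{Y}}$, I would first settle the support. Because $\ZV{Y} \distr \LogN{\mu}{\sigma^2}$ is almost surely positive, we have $c < c + \ZV{Y} < \infty$ and hence $0 < \ZV{X} < \frac{1}{c}$. This matches the required support $(\xi, \xi + \lambda) = (0, \tfrac{1}{c})$, so the location parameter $\xi = 0$ and the scale parameter $\lambda = \frac{1}{c}$ are already confirmed by this observation alone.

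The heart of the argument is the algebraic simplification of the log-odds term in the definition. With $\xi = 0$ and $\xi + \lambda = \frac{1}{c}$, the relevant ratio is $\frac{\ZV{X}}{\frac{1}{c} - \ZV{X}}$. Substituting $\ZV{X} = \frac{1}{c+\ZV{Y}}$ and clearing denominators, I would compute
\[
\frac{1}{c} - \ZV{X} = \frac{(c + \ZV{Y}) - c}{c(c+\ZV{Y})} = \frac{\ZV{Y}}{c(c+\ZV{Y})},
\]
so that
\[
\frac{\ZV{X}}{\frac{1}{c} - \ZV{X}} = \frac{1}{c+\ZV{Y}} \cdot \frac{c(c+\ZV{Y})}{\ZV{Y}} = \frac{c}{\ZV{Y}}.
\]
This is the one computation that makes everything collapse, and I expect it to be the only genuine obstacle; once the odds ratio reduces to the clean expression $\frac{c}{\ZV{Y}}$, the rest is bookkeeping.

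Finally I would assemble the transformed variable using the proposed parameters $\gamma = \frac{\mu - \log c}{\sigma}$ and $\delta = \frac{1}{\sigma}$:
\[
\ZV{Z} \coloneqq \gamma + \delta \log\!\left( \frac{\ZV{X}}{\frac{1}{c} - \ZV{X}} \right) = \frac{\mu - \log c}{\sigma} + \frac{1}{\sigma}\bigl( \log c - \log \ZV{Y} \bigr) = \frac{\mu - \log \ZV{Y}}{\sigma}.
\]
By the definition of the lognormal distribution, $\frac{\log \ZV{Y} - \mu}{\sigma} \distr \Normal{0}{1}$, and therefore $\ZV{Z} = -\frac{\log \ZV{Y} - \mu}{\sigma}$. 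Invoking the symmetry of the standard normal density about the origin (the negation of a standard normal variable is again standard normal), I conclude $\ZV{Z} \distr \Normal{0}{1}$. Since $\ZV{X}$ has the correct support and $\gamma + \delta \log\bigl(\frac{\ZV{X} - \xi}{\xi + \lambda - \ZV{X}}\bigr)$ is standard normal, $\ZV{X} \distr \SBwithequations{\gamma}{\delta}{\lambda}{\xi}$ with exactly the claimed parameters, completing the proof. An alternative route would be a change-of-variables computation on the density $f_{\,\text{SB}}$ from Remark~\ref{rem:SBParam}, but the definition-based approach above avoids the Jacobian entirely and is therefore the one I would carry out.
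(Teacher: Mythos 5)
Your proposal is correct and follows essentially the same route as the paper's own proof: both verify the definitional criterion that $\gamma + \delta \log\bigl(\frac{\ZV{X}-\xi}{\xi+\lambda-\ZV{X}}\bigr)$ is standard normal, with the paper writing $\ZV{Y}=\e^{\ZV{W}}$ and shifting to $\ZV{W}'\distr\Normal{\mu-\ln c}{\sigma^2}$ where you instead reduce the odds ratio directly to $c/\ZV{Y}$ and use symmetry of the standard normal. The algebra and the parameter identifications are identical in substance, so there is nothing to add.
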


\begin{proof}[Proof of \refLem{lem:1/1+Y_for_Y_LogN_ist_SB_verteilt} (adapted from~\cite{MOSBanswer})]
	Let $\ZV{W} \distr \Normal{\mu}{\sigma^2}$. Then $-\ZV{W} \distr \Normal{-\mu}{\sigma^2}$.
	By definition, $\ZV{Y} \coloneqq \e^{\ZV{W}} \distr \LogN{\mu}{\sigma^2}$. 
	Define the random variable
	\[
	\ZV{X} \coloneqq \frac{1}{c+\ZV{Y}} = \frac{1}{c+\e^{\ZV{W}}} = \frac{1}{c+c\cdot \e^{\ZV{W}-\ln c}}= \frac{1}{c+c\cdot \e^{\ZV{W^\prime}}} = \frac{\e^{-\ZV{W^\prime}}}{c \e^{-\ZV{W^\prime}} + c },
	\]
	where
	\[
	\ZV{W^\prime} \distr \Normal{\mu - \ln c}{\sigma^2}.
	\]
	A few simple rearrangements yield
	\[
	-\ZV{W^\prime} = \log \left( \frac{c\ZV{X}}{1-c\ZV{X}} \right)  = \log \left( \frac{\ZV{X}}{\frac{1}{c} - \ZV{X}} \right) .
	\]
	Letting $\lambda \coloneqq 1/c$ and $\xi \coloneqq 0$, we have
	\[
	-\ZV{W} = \log \left( \frac{\ZV{X}-\xi}{\xi+\lambda-\ZV{X}} \right).
	\]
	Define $\delta \coloneqq \frac{1}{\sigma}$ and $\gamma \coloneqq \frac{\mu - \ln c}{\sigma}$ and
	let $\ZV{Z}$ be a random variable such that
	\[
	\ZV{Z} = -\ZV{W^\prime} \cdot \delta + \gamma.
	\]
	Since 
	\[
	-\ZV{W^\prime}  \distr \Normal{\ln c-\mu}{\sigma^2},
	\] 
	we have
	\[
	- \ZV{W^\prime} \cdot \delta \distr \Normal{\frac{\ln c- \mu}{\sigma}}{1}
	\]
	and 
	\[
	- \ZV{W^\prime} \cdot \delta + \gamma \distr \Normal{0}{1}.
	\]
	Per definition, $\ZV{X} \distr \SBwithequations{\gamma = \frac{\mu-\ln c}{\sigma}}{\delta = \frac{1}{\sigma}}{\lambda = \frac{1}{c}}{\xi = 0}$.
\end{proof}

\subsection{A Proof of Lemma~\ref{lem:log_binomial_normal}}
\label{sec:SecondAppendixProof}

Our aim in this section will be to provide a proof of Lemma~\ref{lem:log_binomial_normal}. We have restated the lemma below for convenience.

\begin{lemma}[Lemma~\ref{lem:log_binomial_normal} restated, \cite{Lachin2011Biostatistical}]
	If $\ZV{Y} \distr \Bin{n}{p}$, then $\log \ZV{Y}$ is asymptotically normally distributed.
\end{lemma}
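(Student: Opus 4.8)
The plan is to derive the asymptotic normality of $\log \ZV{Y}$ from the central limit behaviour of the binomial distribution together with the delta method. First I would recall the de Moivre--Laplace theorem: for $\ZV{Y} \distr \Bin{n}{p}$ with $p \in (0,1)$ fixed, the standardized empirical mean satisfies
\[
\sqrt{n}\left( \frac{\ZV{Y}}{n} - p \right) \indistribution{n\rightarrow\infty} \Normal{0}{p(1-p)}.
\]
This places $\ZV{Y}/n$ into exactly the form required to apply a smooth-transformation result, with limiting mean $p$ and asymptotic variance $p(1-p)/n$.

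Next I would apply the delta method with the transformation $g(x) = \log x$. On $(0,\infty)$ this function is continuously differentiable with $g'(x) = 1/x$, and at the relevant point $x = p$ we have $g'(p) = 1/p \neq 0$ since $p > 0$. The delta method therefore yields
\[
\sqrt{n}\left( \log \frac{\ZV{Y}}{n} - \log p \right) \indistribution{n\rightarrow\infty} \Normal{0}{\frac{p(1-p)}{p^2}} = \Normal{0}{\frac{1-p}{p}},
\]
so that $\log(\ZV{Y}/n)$ is asymptotically normal with mean $\log p$ and variance $(1-p)/(np)$.

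Finally I would undo the normalization. Since $\log \ZV{Y} = \log(\ZV{Y}/n) + \log n$ and $\log n$ is a deterministic quantity, adding it only shifts the location of the limiting Gaussian law without altering its shape. Hence $\log \ZV{Y}$ is asymptotically normal with mean $\log(np)$ and variance $(1-p)/(np)$, which is precisely the claim (in the same ``suitably scaled'' sense used earlier in the paper for the binomial itself).

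The main obstacle is a matter of well-definedness rather than of the analytic core: $\log \ZV{Y}$ is undefined on the event $\singleevent{\ZV{Y} = 0}$, so strictly speaking the delta method has to be carried out on the complementary event. This causes no difficulty because $\prob{\ZV{Y} = 0} = (1-p)^n \to 0$ as $n \to \infty$, so the offending event is asymptotically negligible and cannot influence the limiting distribution. One should also keep $p$ bounded away from $0$ so that $g'(p) = 1/p$ remains finite and the regime $np \to \infty$ is guaranteed; this is automatic under the standing assumption $p \in (0,1)$ held fixed.
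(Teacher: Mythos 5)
Your proposal is correct and follows essentially the same route as the paper: the paper's argument is precisely an inline derivation of the delta method for $g(x)=\log x$ (Taylor expansion with Lagrange remainder, $\sqrt{n}$-consistency to kill the remainder, then Slutsky), which you invoke as a packaged result. You are in fact slightly more careful than the paper in two places --- you explicitly handle the deterministic shift $\log \ZV{Y} = \log(\ZV{Y}/n) + \log n$ and the degenerate event $\singleevent{\ZV{Y}=0}$, both of which the paper glosses over.
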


For the proof of the lemma, we will need to introduce some technical machinery.

\begin{definition}[Order in probability]
	For a sequence of random variables $(\ZV{X_n})_{n \in \N}$ 
	and a corresponding sequence of constants $(a_n)_{n \in \N}$,
	we write $\ZV{X_n} \in \bigohp{a_n}$
	if for all $\varepsilon > 0$, there exists an $M > 0$ and an $N>0$ such that 
	\[
	\PROB{ \left| \frac{\ZV{X_n}}{a_n} \right| > M } < \varepsilon \quad \text{for all} \quad n > N.
	\]
\end{definition}

\begin{definition}
	An estimator $\ZV{\hat{\theta}_n}$ is a \introduceterm{consistent estimator} for the parameter $\theta$ if $\ZV{\hat{\theta}} \inprobability{n \to \infty} \theta$.
	An estimator $\ZV{\hat{\theta}_n}$ is a \introduceterm{$\sqrt{n}$-consistent estimator} for the parameter $\theta$ if $\ZV{\hat{\theta}_n} - \theta = \bigohp{1/\sqrt{n}}$.
\end{definition}

\begin{example}
	\label{ex:SqrtNConsistentEstimator}
	From the central limit theorem, it follows 
	that
	for all $\varepsilon > 0$ it holds
	\[
	\Prob{|\ZV{\overline{X}_n} - \mu | > \varepsilon}
	= \PROB{ \frac{\sqrt{n} | \ZV{X_n} - \mu |}{\sigma} > \frac{\sqrt{n} \varepsilon}{\sigma} }
	= 2 \left( 1 - \Phi \left( \frac{\sqrt{n} \varepsilon}{\sigma} \right) \right)
	\stackrel{n \to \infty}{\longrightarrow} 0,
	\]
	\ie $\ZV{\overline{X}_n} \inprobability{n \to \infty} \mu$. Thus, $\ZV{\overline{X}_n}$ is a consistent estimator for $\mu$. Because $\sqrt{n} \big( \ZV{\overline{X}_n} - \mu \big) \indistribution{n \to \infty} \Normal{0}{\sigma^2}$, we can also see that $\ZV{\overline{X}_n}$ is a $\sqrt{n}$-consistent estimator of $\mu$.
\end{example}

\begin{theorem}[Slutsky's Theorem]
	Let $(\ZV{X_n})_{n \in \N}, (\ZV{A_n})_{n \in \N}, (\ZV{B_n})_{n \in \N}$ be sequences of random variables. If $\ZV{X_n}$ converges to a random variable $\ZV{X}$ in distribution, $\ZV{X_n} \indistribution{n \to \infty} \ZV{X}$, and $\ZV{A_n} \inprobability{n \to \infty} a$, as well as $\ZV{B_n} \inprobability{n \to \infty} b$, then 
	\[
	\ZV{A_n} + \ZV{B_n} \ZV{X_n} \indistribution{n \to \infty} a + b \ZV{X}.
	\]
\end{theorem}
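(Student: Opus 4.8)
The plan is to reduce Slutsky's theorem to two standard facts about modes of convergence: the continuous mapping theorem (or an equivalent direct cdf computation) and a \emph{converging-together lemma}. First I would split the target sequence into a principal term plus a vanishing remainder,
\[
\ZV{A_n} + \ZV{B_n} \ZV{X_n} = \big( a + b \ZV{X_n} \big) + \ZV{R_n}, \qquad \ZV{R_n} \coloneqq (\ZV{A_n} - a) + (\ZV{B_n} - b)\ZV{X_n}.
\]
The principal term $a + b\ZV{X_n}$ is the image of $\ZV{X_n}$ under the continuous affine map $x \mapsto a + bx$, so the continuous mapping theorem immediately gives $a + b\ZV{X_n} \indistribution{n \to \infty} a + b\ZV{X}$. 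If one prefers to stay self-contained, the same conclusion follows by writing the cdf of $a + b\ZV{X_n}$ in terms of that of $\ZV{X_n}$ and using convergence at continuity points, distinguishing the cases $b > 0$, $b < 0$, and $b = 0$. It then remains to show that $\ZV{R_n}$ vanishes in probability and that adding such a remainder does not disturb convergence in distribution.

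To see that $\ZV{R_n} \inprobability{n \to \infty} 0$, I would handle its two summands separately. The summand $\ZV{A_n} - a$ tends to $0$ in probability by hypothesis. For the product $(\ZV{B_n} - b)\ZV{X_n}$, the key observation is that $\ZV{X_n}$ is bounded in probability (tight): since $\ZV{X_n} \indistribution{} \ZV{X}$, for every $\eta > 0$ one can choose continuity points $\pm M$ of the limiting cdf with $\prob{|\ZV{X}| \le M} > 1 - \eta$, so that $\prob{|\ZV{X_n}| > M} < \eta$ for all large $n$; enlarging $M$ to absorb the finitely many remaining indices yields $\sup_n \prob{|\ZV{X_n}| > M} \to 0$ as $M \to \infty$. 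Then for any $\varepsilon > 0$,
\[
\prob{|(\ZV{B_n} - b)\ZV{X_n}| > \varepsilon} \le \prob{|\ZV{X_n}| > M} + \prob{|\ZV{B_n} - b| > \varepsilon/M};
\]
choosing $M$ first to make the first term uniformly small, and then letting $n \to \infty$ so the second term vanishes via $\ZV{B_n} \inprobability{} b$, shows $(\ZV{B_n} - b)\ZV{X_n} \inprobability{} 0$. Summing the two pieces gives $\ZV{R_n} \inprobability{} 0$.

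The final ingredient is the lemma that $\ZV{Y_n} \indistribution{} \ZV{Y}$ together with $\ZV{R_n} \inprobability{} 0$ implies $\ZV{Y_n} + \ZV{R_n} \indistribution{} \ZV{Y}$; applied with $\ZV{Y_n} = a + b\ZV{X_n}$ and $\ZV{Y} = a + b\ZV{X}$ this finishes the proof. I would prove the lemma by squeezing cdfs at continuity points. Let $F$ be the cdf of $\ZV{Y}$ and $t$ a continuity point. For $\varepsilon > 0$ the inclusion $\set{\ZV{Y_n} + \ZV{R_n} \le t} \subseteq \set{\ZV{Y_n} \le t + \varepsilon} \cup \set{|\ZV{R_n}| > \varepsilon}$ gives
\[
\limsup_{n \to \infty} \prob{\ZV{Y_n} + \ZV{R_n} \le t} \le F(t + \varepsilon),
\]
while the symmetric inclusion $\set{\ZV{Y_n} \le t - \varepsilon} \subseteq \set{\ZV{Y_n} + \ZV{R_n} \le t} \cup \set{|\ZV{R_n}| > \varepsilon}$ yields $\liminf_{n \to \infty} \prob{\ZV{Y_n} + \ZV{R_n} \le t} \ge F(t - \varepsilon)$, where $t \pm \varepsilon$ are chosen among continuity points (possible since $F$ has at most countably many discontinuities). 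Letting $\varepsilon \to 0$ and using continuity of $F$ at $t$ forces $\prob{\ZV{Y_n} + \ZV{R_n} \le t} \to F(t)$, which is exactly convergence in distribution.

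The hard part will be the tightness argument and its use to annihilate the product term $(\ZV{B_n} - b)\ZV{X_n}$: this is the one step that genuinely exploits $\ZV{X_n} \indistribution{} \ZV{X}$ rather than any crude boundedness, and it is precisely why the theorem breaks down when the constant $b$ is replaced by a genuinely random limit. Everything else is bookkeeping with the continuity-point characterization of convergence in distribution, so no machinery beyond the Portmanteau-type cdf description is needed.
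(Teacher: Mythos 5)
Your proof is correct, but there is nothing in the paper to compare it against: the paper states Slutsky's theorem without proof, treating it as a classical black-box result that is then invoked in the proof of Lemma~\ref{lem:log_binomial_normal} (the asymptotic normality of $\log \ZV{Y}$ for binomial $\ZV{Y}$). Your argument is the standard textbook proof and is sound in all its steps: the decomposition $\ZV{A_n} + \ZV{B_n}\ZV{X_n} = \bigl(a + b\ZV{X_n}\bigr) + \ZV{R_n}$, the continuous mapping theorem (or direct cdf case analysis on the sign of $b$) for the affine principal term, the uniform tightness of $(\ZV{X_n})_{n}$ deduced from convergence in distribution to annihilate the product term $(\ZV{B_n}-b)\ZV{X_n}$, and the converging-together lemma proved by squeezing cdfs, with the correct care taken to choose $t \pm \varepsilon$ among continuity points of the limiting cdf. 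This is precisely the argument one would find in a standard reference, so your proposal supplies a self-contained justification for a step the paper deliberately delegates to the literature; the only observation worth adding is that the converging-together lemma you isolate in your final step is itself often called Slutsky's lemma, so your proof in effect reduces the two-sequence version to the additive one-sequence version plus tightness, which is the cleanest known route.
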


With these definitions in place, we can now prove \refLem{lem:log_binomial_normal}.

\begin{proof}[Proof of \refLem{lem:log_binomial_normal}~(adaped from~\cite{Lachin2011Biostatistical})]
	Let $n \in \N^{+}$ and $p \in (0,1)$.
	For $i \in \set{1, \dots, n}$ let $\ZV{X_i} \distr \Bern{p}$.
	Let $(x_1, \dots, x_n)$ denote a corresponding sample of these random variables.
	Then, $\ZV{S_n} \coloneqq \ZV{X_1} + \dots + \ZV{X_n} \distr \Bin{n}{p}$.
	Since
	\[
	\expectation{\ZV{\overline{X}_n}}
	= \EXPECTATION{ \frac{1}{n} \sum_{i=1}^n \ZV{X_i} }
	= \frac{1}{n} \EXPECTATION{ \sum_{i=1}^n \ZV{X_i} }
	= \frac{1}{n} \sum_{i=1}^n \expectation{\ZV{X_i}}
	= \frac{1}{n} n p
	= p,
	\]
	a natural moment estimator of $p$ is the \introduceterm{sample mean} $\ZV{\overline{X}_n} \coloneqq \ZV{S_n}/n$.
	In the following, we let $\overline{x}_n \coloneqq \frac{1}{n} (x_1 + \dots + x_n)$ denote the concrete sample mean.
	Applying Taylor's expansion to the mapping $x \mapsto \log x$, we obtain
	\[
	\log \overline{x}_n = \log p + \frac{\dd \log p}{\dd p} (\overline{x}_n - p) + R_2(\xi),
	\]
	where
	\[
	R_2(\xi) = \frac{\dd^2 \log \xi}{\dd \xi^2} (\overline{x}_n - p)^2
	\]
	is the Lagrange remainder term and $\xi$ lies between $\overline{x}_n$ and $p$. Since $\log^{\prime}(x) = \frac{1}{x}$ and $\log^{\prime\prime}(x) = - \frac{1}{x^2}$ for all $x>0$, this yields to
	\[
	\sqrt{n} \big( \log \overline{x}_n - \log p \big)
	= \sqrt{n} \frac{\overline{x}_n - p}{p} - \sqrt{n} \frac{(\overline{x}_n - p)^2}{2 \xi^2}.
	\]

	It is well known
	that $\ZV{\overline{X}_n}$ is asymptotically normal distributed (and the same holds for the concrete realizations $\overline{x}_n$), more precisely
	\[
	\overline{x}_n \approxd \Normal{p}{ \frac{p(1-p)}{n} }.
	\]
	Thus, the first term on the right-hand side, $\sqrt{n} \frac{\overline{x}_n - p}{p}$, is likewise asymptotically normal distributed,
	more precisely 
	\[
	\sqrt{n} \frac{\overline{x}_n - p}{p} \approxd \Normal{0}{ \frac{1-p}{p} }.
	\]

	In Example~\ref{ex:SqrtNConsistentEstimator}, we saw that $\overline{x}_n$ is a $\sqrt{n}$-consistent estimator of $p$. Hence, $(\overline{x}_n - p)^2 \to 0$ as~$n \to \infty$ faster than $n^{-1/2}$. Thus,
	\[
	\sqrt{n} R_2(\xi) \inprobability{n \to \infty} 0.
	\]

	Therefore, asymptotically $\sqrt{n} \big( \log \overline{x}_n - \log p \big)$ is the sum of two random variables, the first one converging in distribution to the normal, the second one converging in probability to zero (\ie a constant). From Slutsky's Theorem, it follows that
	\[
	\sqrt{n} \big( \log \overline{x}_n - \log p \big) \indistribution{n \to \infty} \Normal{0}{ \frac{1-p}{p} }.
	\]
	Hence,
	\begin{align*}
		\log \overline{x}_n \approxd \Normal{ \log p }{ \frac{1-p}{np} }. %
		\label{eq:bin_approx_logn}
	\end{align*}

	In other words, $\overline{x}_n$ is asymptotically lognormal distributed.
	Since $\overline{x}_n$ is binomially distributed, the claim follows.
\end{proof}

\end{document}